\numberwithin{equation}{section} % Equation numbers based on sections
\newtheorem{theorem}{Theorem}[]
\newtheorem{lemma}{Lemma}[]
\newtheorem{corollary}{Corollary}[]
\theoremstyle{definition}
\newtheorem{definition}{Definition}[]
\theoremstyle{definition}
\newtheorem{remark}{Remark}[]
\newcommand{\ddt}[1]{\frac{d #1}{dt}}
\newcommand{\pddt}[1]{\frac{\partial #1}{\partial t}}
\newcommand{\DDT}[1]{\frac{D #1}{Dt}}
\newcommand{\DDTp}[1]{\frac{D^+ #1}{Dt}}
\newcommand{\DDTm}[1]{\frac{D^- #1}{Dt}}
\newcommand{\DDTpm}[1]{\frac{D^\pm #1}{Dt}}
\newcommand{\DSigmaDT}[1]{\frac{D^\Sigma #1}{Dt}}
\newcommand{\DT}[1]{D_t #1}
\newcommand{\thomas}{\partial^\Sigma_t}
\DeclareMathOperator{\dist}{dist}
\newcommand{\inproduct}[2]{\left\langle #1 , #2 \right\rangle}
\newcommand{\jump}[1]{\left\llbracket #1 \right\rrbracket}
\DeclareMathOperator{\regular}{\mathcal{J}}
\newcommand{\divergence}[1]{\nabla \cdot #1}
\DeclareMathOperator{\divsigma}{div_\Sigma}
\DeclareMathOperator{\nablasigma}{\nabla_\Sigma}
\DeclareMathOperator{\nablagamma}{\nabla_\Gamma}
\newcommand{\transpose}{{\sf T}}
\newcommand{\norm}[1]{\lVert #1 \rVert} % Norm in a function space %
\DeclareMathOperator{\gr}{gr} % the graph
\newcommand{\RR}{\mathds{R}} % real numbers
\newcommand{\NN}{\mathds{N}} % natural numbers
\DeclareMathOperator{\energy}{\mathcal{E}} % energy of the system
\DeclareMathOperator{\wet}{W} % Wetted area
\newcommand{\visc}{\eta} % dynamic viscosity
\newcommand{\il}{a} % inverse slip length
\newcommand{\ftheta}{f} % empirical contact angle model
\newcommand{\gtheta}{g} % empirical contact angle model
\DeclareMathOperator{\sigmawet}{\sigma_w}
\DeclareMathOperator{\thetaeq}{\theta_{eq}}
\newcommand{\move}{\mathcal{M}} % Moving Hypersurface
\newcommand{\velspace}{\mathcal{V}} % Velocity space
\newcommand{\flowmap}[1]{\Phi_{#1}}
\newcommand{\tangentspace}[1]{T_{#1}}
\newcommand{\nsigma}{n_\Sigma}
\newcommand{\ndomega}{n_{\partial\Omega}}
\newcommand{\ngamma}{n_\Gamma}
\newcommand{\tgamma}{t_\Gamma}
\newcommand{\psigma}{\mathcal{P}_\Sigma}
\newcommand{\pdomega}{\mathcal{P}_{\partial\Omega}}
\newcommand{\vsigma}{v_\Sigma} % interface velocity
\newcommand{\clspeed}{V_\Gamma} % contactline velocity
\newcommand{\normalspeed}{V_\Sigma} % speed of normal displacement
\DeclareMathOperator{\vwall}{v_{w}}
\date{}
\title{A Kinematic Evolution Equation for the\\ Dynamic Contact Angle and some Consequences}
\author[1]{Mathis Fricke}
\author[2]{Matthias Köhne} 
\author[1,3]{Dieter Bothe}
\affil[1]{Department of Mathematics, TU Darmstadt}
\affil[2]{Department of Mathematics, HHU Düsseldorf}
\affil[3]{Profile Area Thermo-Fluids \& Interfaces, TU Darmstadt}
\begin{document}

\twocolumn[\begin{@twocolumnfalse}

\vspace{-1.5cm}
\maketitle
\thispagestyle{empty}

\vspace{-1cm}
\begin{abstract}
We investigate the moving contact line problem for two-phase incompressible flows with a kinematic approach. The key idea is to derive an evolution equation for the contact angle in terms of the transporting velocity field. It turns out that the resulting equation has a simple structure and expresses the time derivative of the contact angle in terms of the velocity gradient at the solid wall. Together with the additionally imposed boundary conditions for the velocity, it yields a more specific form of the contact angle evolution. Thus, the kinematic evolution equation is a tool to analyze the evolution of the contact angle. Since the transporting velocity field is required only \emph{on} the moving interface, the kinematic evolution equation also applies when the interface moves with its own velocity independent of the fluid velocity.\newline
We apply the developed tool to a class of moving contact line models which employ the Navier slip boundary condition. We derive an explicit form of the contact angle evolution for sufficiently regular solutions, showing that such solutions are unphysical. Within the simplest model, this rigorously shows that the contact angle can only relax to equilibrium if some kind of singularity is present at the contact line.\newline
Moreover, we analyze more general models including surface tension gradients at the contact line, slip at the fluid-fluid interface and mass transfer across the fluid-fluid interface.
\end{abstract}

This preprint was submitted and accepted for publication in \emph{Physica D: Nonlinear Phenomena}.\newline When citing this work, please refer to the journal article: \textbf{DOI:} \href{http://dx.doi.org/10.1016/j.physd.2019.01.008}{10.1016/j.physd.2019.01.008}.\newline
\newline
\textbf{Keywords:} Dynamic Contact Line, Kinematics, Navier Slip, Moving Contact Line Singularity

\tableofcontents

\end{@twocolumnfalse}]
\clearpage

%%%%%%%
\section{Introduction}
% Section 1: Introduction
We are interested in the evolution of a contact line on a flat and homogeneous solid surface. For this purpose, we consider a two-phase system consisting of two immiscible Newtonian fluids described by the Navier-Stokes equations in the case where the fluid-fluid interface $\Sigma$ has contact with a solid. Employing the sharp interface modeling approach, we describe the interfacial layer as a mathematical surface of zero thickness. However, this surface carries additional physical properties like surface tension described by the surface tension coefficient $\sigma$. The curve of intersection of the fluid-fluid interface with the solid boundary is called the \emph{contact line}. If the length scale of the flow is sufficiently small, the physical processes at the contact line can have a significant influence on the macroscopic behavior of the system. A typical example is the rise of liquid in a capillary tube with a diameter comparable to the capillary length
\[ l_c = \sqrt{\sigma/\rho g}.  \]
The rise height $H$ of the liquid in the equilibrium state can be found by means of energy considerations and is strongly dependent on the equilibrium contact angle $\thetaeq$, i.e.\ the angle of intersection between the fluid-fluid interface and the solid wall in equilibrium. The height given by equation \eqref{eqn:jurins_height} is well-known in the literature as ``Jurin's height''
\begin{align} 
\label{eqn:jurins_height}
H = \frac{2 \sigma \cos \thetaeq}{\rho g R},
\end{align}
see for example \cite{Gennes.2010}.\newline
\newline
It is known since the 19\textsuperscript{th} century that the \emph{equilibrium} contact angle is described by Young's equation \cite{Young.1805} (see Figure~\ref{fig:young}),
\begin{align}
\label{eqn:young}
\sigma \cos \thetaeq + \sigmawet = 0,
\end{align}
where $\sigmawet := \sigma_1 - \sigma_2$ is the specific energy of the wetted surface (relative to the ``dry surface'').

\begin{figure}[bht]
\centering
\includegraphics[width=5cm]{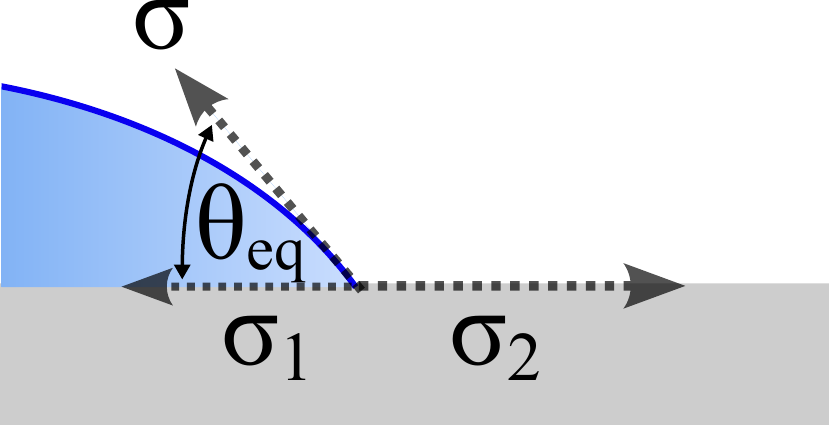}
\caption{Young diagram and equilibrium contact angle.}
\label{fig:young}
\end{figure}

However, the mathematical modeling of the dynamics of wetting turned out to be a challenging problem, leading to a lot of scientific debate and a great variety of different models ranging from molecular to continuum mechanical descriptions, using both sharp and diffuse interface models. For a recent survey on the field see, for example, \cite{Gennes.2010}, \cite{Blake.2006}, \cite{Bonn.2009}, \cite{Shikhmurzaev.2008}, \cite{Snoeijer.2013} and references therein. While a description on the molecular level can be expected to be more accurate in capturing the physics at the contact line, it is limited in terms of length and time scales for practical applications in the natural and engineering sciences as well as in industry. Therefore, it is desirable to have a \emph{continuum} mechanical model with an \emph{effective} description of the necessary physics which is relevant on smaller length scales. However, it turned out that such a model is not straightforward to find. In 1971, Huh and Scriven \cite{Huh.1971} showed that the usual \emph{no-slip} condition for the velocity at solid boundaries is not appropriate for the modeling of moving contact lines (see also \cite{Pukhnachev.1982},\cite{Solonnikov.1997}). Depending on the model and the solution concept, this means that the no-slip condition either leads to the non-existence of solutions with moving contact line or to an infinite viscous dissipation rate. Since then, many attempts have been made to solve this problem, for instance by means of numerical discretizations\footnote{A typical approach to circumvent the problem numerically is to use so-called \emph{numerical slip}. The main observation is (see~\cite{Renardy.2001}) that in many cases an artificial slip is introduced by the discretization itself. Due to this numerical effect the contact line is able to move even though the no-slip condition is used. However, the numerical slip is typically strongly dependent on the grid size. Moreover, the model which is supposed to describe the \emph{physics} of dynamic wetting is in this case purely numerical and has no physical justification.} (for an overview over numerical methods see \cite{Sui.2014}) and/or by replacing the boundary conditions in the model. Essentially, some mechanism is introduced which allows for a tangential slip of the \emph{interfacial} velocity at the solid wall. The most common choice for the velocity boundary condition is the Navier slip condition, already proposed by Navier in the 19\textsuperscript{th} century, which relates the tangential slip to the tangential component of normal stress at the boundary. The boundary condition for the contact angle is frequently motivated by experimental observations. The experimentally measured contact angle, which is always subject to a finite measurement resolution, typically shows a strong correlation to the capillary number,
\[ Ca = \frac{\visc \clspeed}{\sigma}, \]
where $\visc$ and $\clspeed$ denote the dynamic viscosity and the contact line velocity, respectively. Motivated by this observation, many models \emph{prescribe} the contact angle as some function of the capillary number and the equilibrium contact angle, i.e.
\begin{align*} 
\theta = \ftheta(\thetaeq, Ca). 
\end{align*}
It can be shown by means of asymptotic analysis for the \emph{stationary} Stokes equations that the Navier slip condition with a finite slip length makes the viscous dissipation rate finite, while the pressure is still logarithmically singular at the moving contact line (see, e.g., \cite{Huh1977}, \cite{Shikhmurzaev.2006}). This integrable type of singularity is commonly referred to as a \emph{weak singularity}. It is an interesting question, under which circumstances even the weak singularity is removed from the description. In the publications \cite{Ren.2007} and \cite{E.2011}, Ren and Weinan E formulate the expectation that the weak singularity is removed if, instead of a fixed contact angle, a certain model for the dynamic contact angle is applied. The present work shows that this is not the case. Instead, it is shown that the dynamic behavior for sufficiently regular solutions to the simplest model is \emph{unphysical} (if the slip length is finite).\\

The present work tries to contribute to the mathematical modeling of moving contact lines by analyzing the mathematical properties of the discussed models, in particular the combination of boundary conditions at the contact line, by means of a \emph{kinematic} approach. The key idea is to understand how the flow field transports the contact angle. Note that here we consider the most simple case of a flat, perfectly clean solid wall and ideal Newtonian fluids, a situation never met in a real-world experiment. A real surface always has some geometrical and chemical structure leading to additional effects like contact angle hysteresis and pinning. Moreover, it might be interesting to consider more complex liquids and substrates to enhance certain properties for applications. However, it seems meaningful to first study the mathematics of the problem in the simplest setting.

\paragraph{Organization of the paper:} The remainder of this paper is organized as follows. After introducing some basic notation, we recall the ``standard model for two-phase flow'' and its extension to contact lines using the Navier boundary condition in Section~\ref{section:two-phase-flow-model}. An evolution equation for the contact angle, which we refer to as the \emph{kinematic evolution equation}, is derived in Section~\ref{section:geometrical_evolution_equation}. With the help of this result, the time derivative of the contact angle can be expressed by means of the velocity gradient at the contact line.  The application of the kinematic evolution equation to this class of models is discussed in Section~\ref{section:ca_evolution_standard_model}. In particular, it is shown that potential regular solutions to the standard model are unphysical. The contact angle evolution for more general models is briefly discussed in Section~\ref{section:generalizations}. 

\section{Mathematical modeling}
\label{section:two-phase-flow-model}

\subsection[Notation and mathematical setting]{Notation and mathematical\\ setting}
For simplicity, let us assume for this paper that $\Omega$ is a half space, such that the outer normal field $\ndomega$ is constant. This is not a real restriction for the theory since we are only interested in \emph{local} properties. While this assumption simplifies the calculations, the results may be generalized to the case of a curved solid wall.\\

The following definition of a $\mathcal{C}^{1,2}$-family of moving hypersurfaces can also be found in \cite{Kimura.2008}, \cite{Pruss.2016} and in a similar form in \cite{Giga.2006}.
\begin{definition}
Let $I=(a,b)$ be an open interval. A family $\{\Sigma(t)\}_{t \in I}$ with $\Sigma(t) \subset \RR^3$ is called a \emph{$\mathcal{C}^{1,2}$-family of moving hypersurfaces} if the following holds.
\begin{enumerate}[(i)]
 \item Each $\Sigma(t)$ is an orientable $\mathcal{C}^2$-hypersurface in $\RR^3$ with unit normal field denoted as $\nsigma(t,\cdot)$.
 \item The graph of $\Sigma$, given as
 \begin{align}
 \label{eqn:def_moving_interface}
 \move := \gr \Sigma = \bigcup_{t \in I} \{t\} \times \Sigma(t) \subset \RR\times\RR^{3},
 \end{align}
 is a $\mathcal{C}^1$-hypersurface in $\RR \times \RR^3$.
 \item The unit normal field is continuously differentiable on $\move$, i.e.
 \[ \nsigma \in \mathcal{C}^1(\move). \]
\end{enumerate}
A family $\{\overline{\Sigma}(t)\}_{t \in I}$ is called a \emph{$\mathcal{C}^{1,2}$-family of moving hypersurfaces with boundary $\partial\Sigma(t)$} if the following holds.
\begin{enumerate}[(i)]
 \item Each $\overline{\Sigma}(t)$ is an orientable $\mathcal{C}^2$-hypersurface in $\RR^3$ with interior $\Sigma(t)$ and non-empty boundary $\partial\Sigma(t)$, where the unit normal field is denoted by $\nsigma(t,\cdot)$.
 \item The graph of $\overline{\Sigma}$, i.e.
 \[ \gr \overline{\Sigma} = \bigcup_{t \in I} \{t\} \times \overline{\Sigma}(t) \subset \RR\times\RR^{3}, \]
 is a $\mathcal{C}^1$-hypersurface with boundary $\gr(\partial\Sigma)$ in $\RR\times\RR^3$.
  \item The unit normal field is continuously differentiable on $\gr \overline{\Sigma}$, i.e.
 \[ \nsigma \in \mathcal{C}^1(\gr \overline{\Sigma}). \]
% \item $\nablasigma \nsigma$ is bounded.
\end{enumerate}
\end{definition}
Note that, being the boundary of a submanifold with boundary, the set $\gr(\partial\Sigma)$ is itself a submanifold (without boundary).\newline
\newline
In the remainder of this paper, we consider the following geometrical situation: Let $\Omega \subset \RR^3$ be a half space and let the ``fluid-fluid interface'' $\{\Sigma(t)\}_{t \in I}$ be a $\mathcal{C}^{1,2}$-family of moving hypersurfaces with boundary $\partial\Sigma$ such that
\[ \Sigma(t) \subset \Omega, \ \partial\Sigma(t) \subset \partial\Omega \quad \forall t \in I, \]
i.e.\ the boundary of $\Sigma$ is contained in the domain boundary. The moving fluid-fluid interface decomposes $\Omega$ into two bulk-phases, i.e
\begin{align*}
\Omega = \Omega^+(t) \cup \Omega^-(t) \cup \Sigma(t),
\end{align*}
where the unit normal field $\nsigma$ is pointing from $\Omega^-(t)$ to $\Omega^+(t)$. The \emph{contact line} $\Gamma(t) \subset \partial\Omega$ is the subset of the solid boundary which is in contact with the interface $\Sigma(t)$, i.e.
 \begin{align*}
 \Gamma(t) := \partial\Sigma(t) = \partial\Omega \, \cap \, \overline{\Omega^+}(t) \, \cap \, \overline{\Omega^-}(t) \neq \emptyset.
 \end{align*}
We assume that $\Gamma(t)$ is non-empty and therefore do not consider the process of formation or disappearance of the contact line as a whole. Given a point $x \in \Gamma(t)$, the \emph{contact angle} $\theta$ is defined by the relation
\begin{align}
\label{eqn:theta_definition}
\cos \theta(t,x) := -\inproduct{\nsigma(t,x)}{\ndomega(t,x)}.
\end{align}

\begin{figure}[hbt]
 \centering
 \includegraphics[width=6cm]{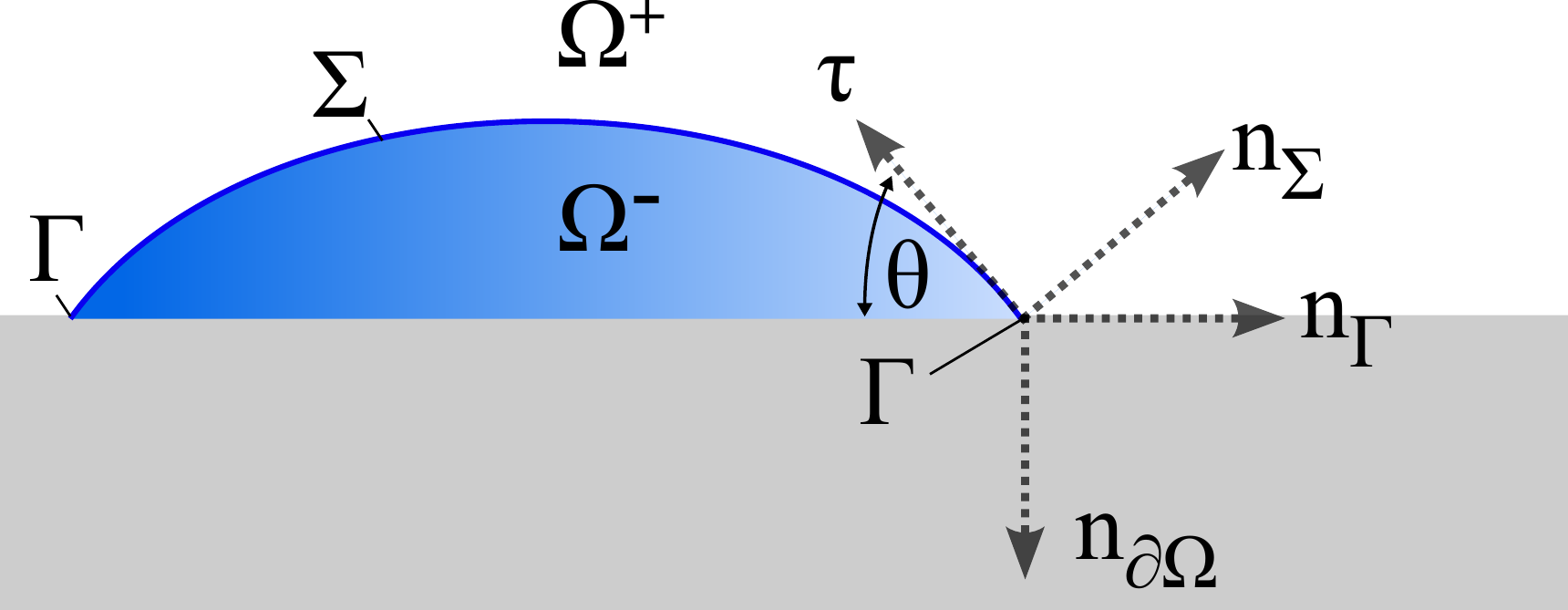}
 \caption{Notation, local coordinate system.}
 \label{fig:nsigma}
\end{figure}

\paragraph{Local coordinate system:} For simplicity of notation, we choose the reference frame where the wall is at rest. Given a point $x \in \Gamma(t)$ at the contact line, we set up a local coordinate system to describe the evolution of the system. A possible choice is to use $\nsigma$ and $\ndomega$ together with a third linear independent direction. However, the vectors $\nsigma$ and $\ndomega$ are, in general, not orthogonal and it is more convenient to introduce a contact line normal vector.

\begin{definition}
For $0 < \theta < \pi$ the \emph{contact line normal vector} $\ngamma$ is defined via projection\footnote{The orthogonal projection operator onto $\partial\Omega$ is given as $\pdomega := \mathds{1} - \inproduct{\ndomega}{\cdot} \ndomega$.} (see Figure~\ref{fig:nsigma}) as
\begin{align}
\label{eqn:def_ngamma}
\ngamma = \frac{\pdomega \, \nsigma}{\lVert \pdomega \, \nsigma \rVert}.
\end{align}
To complete the local basis, we define
\[ \tgamma = \ngamma \times \ndomega. \]
\end{definition}
Obviously, $\{ \ngamma, \ndomega, \tgamma \}$ form a right-handed orthonormal basis of $\RR^3$. The vector $\tgamma$ is tangential to the interface $\Sigma$ and tangential to the contact line curve $\Gamma$.\newline
\newline
Moreover, it is useful to define an interface tangent vector $\tau$ in the plane spanned by $\ngamma$ and $\ndomega$. The expansions of $\nsigma$ and $\tau$ are given by
 \begin{equation}
 \label{eqn:expansion_tau_sigma}
 \begin{aligned}
 \tau &= - \cos \theta \, \ngamma -\sin \theta \, \ndomega,\\
 \nsigma &= \sin \theta \, \ngamma - \cos \theta \, \ndomega.
 \end{aligned}
 \end{equation}
Note that $\tau$ is normalized, orthogonal to $\nsigma$ and it is pointing into the domain $\Omega$, since
 \[ \inproduct{\tau}{\ndomega} = - \sin \theta \leq 0. \]

\begin{definition}[Normal and contact line velocity]
To formulate the kinematic boundary condition, we need the notion of normal and contact line velocities.
\begin{enumerate}[(i)]
 \item Let $x^\Sigma: I \rightarrow \RR^3$ be a $\mathcal{C}^1$-curve on $\gr \overline\Sigma$, i.e.\ $(t,x^\Sigma(t)) \in \gr \overline\Sigma \ \forall t \in I$. Then, for $t_0 \in I$ and $x_0 = x^\Sigma(t_0) \in \overline{\Sigma}(t_0)$, the \emph{normal velocity} is defined as\footnote{Note that such curves can always be constructed with the help of a local $\mathcal{C}^1$-parametrization. In the case of a hypersurface with boundary, such a parametrization is defined over the upper half ball
 \[ B_\varepsilon^{n,+}(0) := \{x \in \RR^n: \, \norm{x} < \varepsilon, \ x_n \geq 0 \}. \]
 Moreover, it can be shown that the normal velocity is well-defined, i.e.\ its value is independent of the choice of the curve (see, e.g., \cite{Pruss.2016}, chapter 5.2). 
 }
 \begin{align}
 \normalspeed(t_0,x_0) := \inproduct{\dot{x}^\Sigma(t_0)}{\nsigma(t_0,x_0)}.
 \end{align}
 \item Let $x^\Gamma: I \rightarrow \RR^3$ be a $\mathcal{C}^1$-curve on $\gr\Gamma$, i.e.\ $(t,x^\Gamma(t)) \in \gr\Gamma \ \forall t \in I$. Then, for $t_0 \in I$ and $x_0 = x^\Gamma(t_0) \in \Gamma(t_0)$, the \emph{contact line velocity} is defined as
 \begin{align}
 \clspeed(t_0,x_0) := \inproduct{\dot{x}^\Gamma(t_0)}{\ngamma(t_0,x_0)}.
 \end{align}
 If $\clspeed > 0$ ($\clspeed < 0$), the contact line is said to be advancing (receding). 
\end{enumerate}
\end{definition}
Note that $(t,x^\Gamma(t)) \in \gr\Gamma \subset I \times \partial\Omega$ implies $\dot{x}^\Gamma~\cdot~\ndomega~=~0$ and, hence,
\begin{align*}
&\normalspeed(t_0,x_0) = \inproduct{\dot{x}^\Gamma(t_0)}{\nsigma(t_0,x_0)} \\
&= \inproduct{\dot{x}^\Gamma(t_0)}{\sin \theta \, \ngamma(t_0,x_0)  - \cos \theta \, \ndomega(t_0,x_0) }\nonumber \\
&= \sin \theta \inproduct{\dot{x}^\Gamma(t_0)}{\ngamma(t_0,x_0)} = \sin \theta \, \clspeed(t_0,x_0).
\end{align*}
Therefore, we obtain the important relation
\begin{align}
\label{eqn:vsigma_vgamma}
\normalspeed = \sin \theta \, \clspeed \quad \text{on} \ \gr\Gamma. 
\end{align}

To formulate the two-phase flow model, we need the notion of the \emph{jump} of a quantity across the interface $\Sigma$. 
\begin{definition}[Jump across $\Sigma$]
\label{def:jump}
Fixing an interface configuration $\Sigma$, we define the space $\regular(\Omega,\Sigma)$ of continuous functions on $\Omega^\pm$, admitting continuous extensions to $\overline{\Omega^\pm}$, i.e.
\begin{align*}
\regular(\Omega,\Sigma) := \{ \psi \in \mathcal{C}(\Omega\setminus\Sigma), \ \exists \psi^\pm \in \mathcal{C}(\overline{\Omega^\pm}) \\
\text{s.t.} \ \psi_{|\Omega^\pm} = \psi^\pm \}.
\end{align*}
The jump of $\psi$ across the interface $\Sigma$ at $x \in \overline{\Sigma}$ is defined as
\begin{align*}
\jump{\psi}(x) := \lim_{n \rightarrow \infty} (\psi^+(x_n) - \psi^-(y_n)),
\end{align*}
where $(x_n)_{n \in \NN} \subset \overline{\Omega^+}$ and $(y_n)_{n \in \NN} \subset \overline{\Omega^-}$ are sequences with
\[ \lim_{n \rightarrow \infty} x_n = \lim_{n \rightarrow \infty} y_n = x. \]
\end{definition}
By definition of $\regular$, the jump of $\psi$ does not depend on the choice of sequences. Note that, away from the boundary $\partial\Omega$, the jump of a quantity $\psi$ may equivalently be expressed as
\[ \jump{\psi}(x) = \lim_{h \rightarrow 0^+} (\psi(x+h\nsigma)-\psi(x-h\nsigma)).  \]
For $\psi \in \regular(\Omega, \Sigma)$ it follows directly from the definition that $\jump{\psi}$ is a continuous function on $\overline{\Sigma}$.

% Two-phase flow model
\subsection{Energy balance}
We recall the basic modeling assumptions leading to the ``standard model'' in the framework of the sharp interface two-phase Navier Stokes equations  (for the modeling see \cite{Ishii.2011}, \cite{Slattery.1999}, \cite{Edwards.1991}, \cite{Pruss.2016}). It is assumed that the flow in the bulk phases is incompressible and no mass is transferred across the fluid-fluid and the fluid-solid interface. As a further simplification, it is assumed that also the tangential component of the velocity is continuous. These assumptions lead to the formulation
\begin{align}
\rho \, \DDT{v} = \divergence{T}, \quad \nabla \cdot v = 0 \quad &\text{in} \ \Omega\setminus\Sigma(t), \label{eqn:full_system_1} \\
\jump{v} = 0, \quad \normalspeed = \inproduct{v}{\nsigma} \quad &\text{on} \ \Sigma(t), \label{eqn:full_system_2}\\
\clspeed = \inproduct{v}{\ngamma} \quad &\text{on} \ \Gamma(t),\label{eqn:full_system_3}\\
\inproduct{v}{\ndomega}=0 \quad &\text{on} \ \partial\Omega\setminus\Gamma(t), \label{eqn:full_system_4}
\end{align}
where $T=T^\transpose$ is the Cauchy stress tensor\footnote{Here it is assumed that the fluid particles do not carry angular momentum.}.\newline
\newline
To close the model, we consider the energy of the system.

\begin{definition}
In the simplest case, the total available energy of the system is defined as (see, e.g., \cite{Ren.2007}, \cite{E.2011})
\begin{align}
\energy(t) = \int_{\Omega\setminus\Sigma(t)} \frac{\rho v^2}{2} \, dV + \int_{\Sigma(t)} \, \sigma \, dA + \int_{\wet(t)} \sigmawet \, dA,
\end{align}
where $\wet(t):= \overline{\Omega^-(t)}\cap\partial\Omega$ is the wetted area at time $t$ and $\sigma, \, \sigmawet := \sigma_1-\sigma_2$ are the specific energies of the fluid-fluid interface and the wetted surface (relative to the ``dry'' surface).\newline
\newline
Assuming \emph{constant} surface energies $\sigma, \, \sigmawet$ with $\sigma > 0$ and $|\sigmawet| < \sigma$ we \emph{define} an angle $\thetaeq \in (0,\pi)$ by the relation
\begin{align}
\label{eqn:young_defines_theta0}
\sigma \cos \thetaeq + \sigmawet = 0.
\end{align}
\end{definition}

A direct calculation shows the following result for the energy balance, see \cite{Schweizer.2001}, \cite{Ren.2007} for a proof.
\begin{theorem}
\label{theorem:energy_dissipation}
Let $\sigma, \sigmawet$ be constant with $\sigma > 0, \, |\sigmawet| < \sigma$ and $(v,p,\gr\overline\Sigma)$ be a sufficiently regular (classical) solution of the system \eqref{eqn:full_system_1} - \eqref{eqn:full_system_4}. Then
\begin{align}
\label{eqn:energy_dissipation_unclosed_model}
\ddt{\energy} = - &2 \int_{\Omega\setminus\Sigma(t)} D : T \, dV + \int_{\partial\Omega} \inproduct{v}{T\ndomega} \, dA\nonumber\\
- &\int_{\Sigma(t)}(\jump{T} \nsigma + \sigma\kappa \nsigma) \cdot v \, dA\nonumber \\  
+ & \, \sigma \int_{\Gamma(t)} (\cos \theta - \cos \thetaeq) \, \clspeed \, dl,
\end{align}
where $\kappa = - \divsigma \nsigma$ denotes the mean curvature of $\Sigma$ and $D=\frac{1}{2}(\nabla v + \nabla v^\transpose)$ is the rate-of-deformation tensor.
\end{theorem}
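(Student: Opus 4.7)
The plan is to differentiate each of the three contributions to $\energy(t)$ separately, using transport theorems adapted to the moving bulk phases, the moving surface, and the moving contact line, and then to combine the result with the momentum balance and Young's identity.

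For the kinetic energy I would apply the Reynolds transport theorem phase-wise on $\Omega^\pm(t)$. Because $\jump{v}=0$ and $\normalspeed = \inproduct{v}{\nsigma}$, the surface flux contributions from the two sides of $\Sigma$ cancel, leaving
\begin{equation*}
\ddt{} \int_{\Omega\setminus\Sigma(t)} \frac{\rho v^2}{2}\,dV = \int_{\Omega\setminus\Sigma(t)} \rho \inproduct{v}{\DDT{v}}\,dV.
\end{equation*}
Substituting $\rho \DDT{v} = \divergence{T}$ and integrating by parts on each bulk phase produces a dissipation term that, via symmetry of $T$, is expressed in terms of $D : T$, together with the wall contribution $\int_{\partial\Omega} \inproduct{v}{T\ndomega}\,dA$ and, since the outward normals to $\Omega^\pm$ on $\Sigma$ are opposite and $v$ is continuous, the interface contribution $-\int_\Sigma \inproduct{v}{\jump{T}\nsigma}\,dA$.

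For the surface energy, with $\sigma$ constant, I would invoke the standard formula for the rate of change of area of a moving $\mathcal{C}^{1,2}$-surface with moving boundary,
\begin{equation*}
\ddt{}|\Sigma(t)| = -\int_{\Sigma(t)} \kappa \, \normalspeed \, dA + \int_{\Gamma(t)} V_N \, dl,
\end{equation*}
where $V_N$ is the velocity of $\Gamma$ along the outward conormal $N$ to $\Sigma$. From \eqref{eqn:expansion_tau_sigma} the vector $\tau$ points into $\Omega$, and hence into the interior of $\Sigma$, so the \emph{outward} conormal is $N = -\tau = \cos\theta\,\ngamma + \sin\theta\,\ndomega$. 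Since the contact-line velocity lies in $\partial\Omega$ (i.e.\ is orthogonal to $\ndomega$), only its $\ngamma$-component survives the projection, producing $V_N = \cos\theta\,\clspeed$. The wetted area is handled analogously: the contact line advances in the direction $\ngamma$ at speed $\clspeed$, so $\ddt{}|\wet(t)| = \int_{\Gamma(t)} \clspeed\,dl$.

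Summing the three pieces and rewriting $\normalspeed = \inproduct{v}{\nsigma}$ in the curvature term reproduces every contribution of \eqref{eqn:energy_dissipation_unclosed_model} except for the combination $\sigma\cos\theta\,\clspeed + \sigmawet\,\clspeed$, which collapses via Young's defining relation $\sigma\cos\thetaeq + \sigmawet = 0$ to $\sigma(\cos\theta - \cos\thetaeq)\clspeed$. The main technical obstacle I anticipate is the bookkeeping of orientations: identifying $-\tau$ as the outward conormal of $\Sigma$ along $\Gamma$, verifying that only the $\ngamma$-component of the boundary velocity enters $V_N$, and making sure that the sign convention $\kappa = -\divsigma \nsigma$ is consistent with the sign in the surface transport theorem. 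Once these orientations are pinned down, the remainder is a routine application of the transport theorems together with the model equations \eqref{eqn:full_system_1}--\eqref{eqn:full_system_4}.
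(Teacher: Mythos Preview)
Your approach is correct and is the standard route to this identity. Note, however, that the paper does \emph{not} actually prove Theorem~\ref{theorem:energy_dissipation}: it simply states the result and refers to \cite{Schweizer.2001} and \cite{Ren.2007} for the argument. What you outline---phase-wise Reynolds transport for the kinetic energy combined with the momentum balance and integration by parts, the surface transport theorem for $|\Sigma(t)|$ with the boundary term computed via the outward conormal $N=-\tau$, the analogous transport formula for $|\wet(t)|$, and finally Young's identity \eqref{eqn:young_defines_theta0} to merge the two contact-line contributions---is exactly the computation those references carry out, and your handling of the orientations (that $-\tau$ is the outward conormal, that only the $\ngamma$-component of the contact-line velocity enters $V_N$, and the sign convention $\kappa=-\divsigma\nsigma$) is correct.

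One small remark on your phrasing: in the kinetic-energy step the interfacial fluxes from the two phases do not literally ``cancel'' (since $\rho$ jumps across $\Sigma$); rather, on each phase separately the Reynolds-transport boundary flux coincides with the convective part of the material derivative because $\normalspeed=v\cdot\nsigma$, which is what yields $\ddt{}\int_{\Omega^\pm}\tfrac{\rho v^2}{2}\,dV=\int_{\Omega^\pm}\rho\,v\cdot\DDT{v}\,dV$ phase by phase. The conclusion you state is correct; only the justification needs this slight adjustment.
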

According to the second law of thermodynamics, closure relations need to be found such that
\[ \ddt{\energy} \leq 0. \]

\subsection{The standard model}
% Closure
Employing the standard closure for the two-phase Navier-Stokes model for a Newtonian fluid with dynamic viscosity $\visc$ and constant surface tension $\sigma$, i.e.
\begin{align*}
T = -p \mathds{1} + S &= -p \mathds{1} + \visc (\nabla v + (\nabla v)^\transpose),\\
\jump{-T} \nsigma &= \sigma \kappa \nsigma, 
\end{align*}
we obtain
\begin{align}
\label{eqn:energy_dissipation_semiclosed_model}
\ddt{\energy} = &-2 \int_{\Omega\setminus\Sigma(t)} \visc D:D \, dV + \int_{\partial\Omega} \inproduct{v}{S\ndomega} \, dA\nonumber\\
&+ \sigma  \int_{\Gamma(t)} (\cos \theta - \cos \thetaeq) \, \clspeed \, dl. 
\end{align}
Note that the second term vanishes if the usual no-slip condition is kept. However, as pointed out before, this approach does not allow for a moving contact line. Therefore, it is a frequent choice to consider the following generalization.

\begin{remark}[Navier slip condition]
Assuming that no fluid particles can move across the solid-fluid boundary one still requires $v \cdot \ndomega = 0$ on $\partial\Omega$. In this case, the above term can be rewritten as
\[ \int_{\partial\Omega} \inproduct{\pdomega v}{\pdomega S\ndomega} \, dA. \]
Hence a possible choice to make it non-positive is given by
\begin{align}
\label{eqn:navier_slip_beta}
\inproduct{v}{\ndomega} = 0, \ \pdomega S\ndomega =  -\lambda \pdomega v
\end{align}
on $\partial\Omega$ with $\lambda \geq 0$. Note that equation \eqref{eqn:navier_slip_beta} can be understood as a force balance, where $\lambda$ plays the role of a friction coefficient. The no-slip condition is recovered in the limit $\lambda \rightarrow \infty$, while the case $\lambda = 0$ is known as the \emph{free-slip condition}. The quantity 
\begin{align*}
L = \frac{\visc}{\lambda}
\end{align*}
has the dimension of a length and is called \emph{slip length}. Note that in the two-phase case, the parameters $\lambda$ and $\visc$ are in general discontinuous across the interface. If $L$ is strictly positive, it may be more convenient to use the inverse slip length
\[ \il = \frac{1}{L} = \frac{\lambda}{\visc}. \]
With the inverse slip length, the Navier condition can be expressed as
\begin{align}
\label{eqn:navier_condition_a}
\il \pdomega v + 2 \pdomega D \, \ndomega = 0.
\end{align}
Note that the slip length may depend on various physical parameters of the system including the wettability of the solid and the local shear-rate (see \cite{Neto.2005},\cite{Lauga2007} for a discussion of boundary slip). In the present paper, we only assume that the slip length is a positive function admitting one-sided limits at the contact line, i.e.
\[ \il \in \mathcal{C}(\gr\overline{\Omega^+};[0,\infty))\cap \mathcal{C}(\gr\overline{\Omega^-};[0,\infty)). \]
\end{remark}

\begin{remark}[Contact angle boundary condition]
\label{remark:advancing_for_large_ca}
It remains to close the last term in \eqref{eqn:energy_dissipation_semiclosed_model}. A \emph{sufficient} condition to ensure energy dissipation is to require that 
\begin{align}
\label{eqn:advancing_receding_condition}
\clspeed (\theta - \thetaeq) \geq 0.
\end{align}
This may be achieved by setting\footnote{Or by setting $\clspeed = \gtheta(\theta)$ with $\gtheta(\thetaeq)=0, \ \gtheta(\theta)(\theta-\thetaeq) \geq 0$, which is more convenient if contact angle hysteresis is present.}
\begin{align}
\theta = \ftheta(\clspeed) 
\end{align}
with some function $\ftheta$ satisfying 
\begin{align}
\label{eqn:thermodynamic_conditions_f_g}
\ftheta(0)=\thetaeq, \quad \clspeed \, (\ftheta(\clspeed) - \thetaeq) \geq 0. 
\end{align}
So, in the absence of external forces, the contact line should only advance if the contact angle is above or equal to the equilibrium value defined by the Young equation \eqref{eqn:young} (and vice versa). This is reasonable if we think of the example of a spreading droplet with an initial contact angle larger than the equilibrium value (see Figure~\ref{fig:spreading_droplet}). We expect the contact line to advance \emph{in order to} lower the contact angle and to drive the system towards equilibrium.
\end{remark}
\begin{figure}[hbt]
 \centering
 \includegraphics[width=6cm]{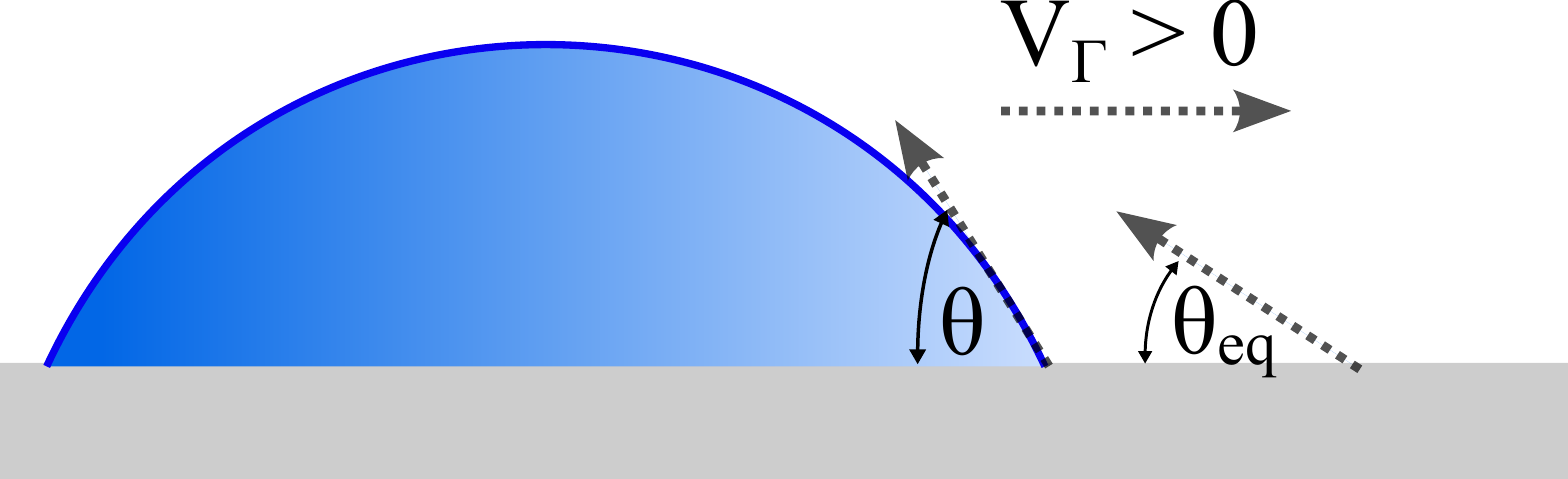}
 \caption{Spreading droplet with an advancing contact line.}
 \label{fig:spreading_droplet}
\end{figure}

\paragraph{Standard model for moving contact lines:}
To summarize, we obtained the ``\emph{standard model for moving contact lines}'' for incompressible two-phase flows with surface tension in the simplest possible case. This is a purely hydrodynamic model without any transfer processes of heat or mass.
\begin{align*}
\rho \DDT{v} - \visc \Delta v + \nabla p = 0, \ \nabla \cdot v = 0 \ \quad &\text{in} \ \Omega\setminus\Sigma(t),\\
\jump{v} = 0, \quad \jump{p \mathds{1} - S} \, \nsigma = \sigma \kappa \nsigma \quad &\text{on} \ \Sigma(t),\\
\inproduct{v}{\ndomega}=0, \ \il \, \pdomega v + 2 \pdomega D \ndomega = 0 \quad &\text{on} \ \partial\Omega\setminus\Gamma(t),\\
\normalspeed = \inproduct{v}{\nsigma} \quad &\text{on} \ \Sigma(t), \\
\clspeed = \inproduct{v}{\ngamma}, \quad \theta = \ftheta(\clspeed) \quad &\text{on} \ \Gamma(t).
\end{align*}
To ensure energy dissipation, we further require
\begin{align*}
\visc \geq 0, \ \il \geq 0, \ \sigma \geq 0, \ \clspeed \,(\ftheta(\clspeed) - \thetaeq) \geq 0. 
\end{align*}

\begin{remark}[Motivation]
Consider a bounded domain $\Omega$ (in $\mathds{R}^2$ or $\mathds{R}^3$) with a smooth boundary $\partial\Omega$. Define a passively advected interface $\Sigma(t)$ as the zero contour of some \emph{level set} function $\phi$, i.e. 
\[ \Sigma(t) = \{ x \in \Omega: \, \phi(t,x) = 0 \}, \]
where $\phi$ satisfies the transport equation
\begin{equation}
\label{eqn:scalar_transport}
\begin{aligned}
\partial_t \phi + v \cdot \nabla \phi &= 0, \quad t > 0, \, x \in \Omega, \\
\phi(0,x) &= \phi_0(x), \quad x \in \Omega
\end{aligned}
\end{equation}
with a given velocity field $v$. The initial value problem \eqref{eqn:scalar_transport} is well-posed if $v$ is sufficiently regular and tangential to the boundary $\partial\Omega$, i.e.
\begin{align*}
v \cdot \ndomega = 0 \quad \text{on} \quad \partial\Omega.
\end{align*}
This can be shown by the method of characteristics (see \cite{evans10}). In particular, there is no boundary condition for $\phi$, i.e.\ no contact angle can be prescribed. However, many methods for the simulation of flows with contact lines use a \emph{time-explicit} discretization of the interface transport equation, i.e.
\begin{align}
\frac{\phi^{n+1}-\phi^n}{\Delta t} + v^n \cdot \nabla \phi^n =  0,
\end{align}
where $v^n$ is tangential to the boundary, and impose a ``contact angle boundary condition'', i.e.\ a boundary condition for $\phi^{n+1}$. But from a mathematical point of view, there is no degree of freedom left allowing to impose such a condition. In fact, the evolution of the contact angle is fully determined by the velocity $v^n$ at time step $n$ and can only be ``adjusted'' or ``corrected'' afterwards. This observation clearly underlines the need for further understanding.
\end{remark}

\section[A kinematic evolution equation]{A kinematic evolution\\ equation}
\label{section:geometrical_evolution_equation}
% Contact angle evolution equation %
Let us now derive the contact angle evolution equation. It is a purely kinematic result which, under certain regularity assumptions, follows directly from the kinematic boundary conditions on $\gr\Sigma$ and $\gr\Gamma$.  Note that, in the general case, the interface velocity is only defined on the interface itself, hence trajectories need to fulfill the time-dependent constraint $x(t) \in \overline{\Sigma}(t)$ on $I$, or, equivalently, $(t,x(t)) \in \gr \overline\Sigma$ on $I$. This allows to apply the result to models where the interface is not a material interface, but moves with its \emph{own} velocity $\vsigma$ different from the fluid velocity $v$. This is the case if phase change phenomena or interfacial mass densities are present.

%%%%%%%%%%%%%%%%%%%%%%%%%%%%%%%%%%%%%%%%%%%%%%%%%%%%%%%%%%%%%%%
\subsection[Preliminaries: Solutions in closed sets]{Preliminaries: Solutions in\\ closed sets}
There is a well-established theory for solutions of ODEs in closed subsets of $\RR^d$ (see, e.g., \cite{Deimling.1977},\cite{Deimling.1992},\cite{Aubin.2009},\cite{Bothe.1996},\cite{Bothe.2003},\cite{Nagumo.1942},\cite{Brezis.1970},\cite{Amann.1990}). We use this theory for (local) existence and uniqueness of trajectories on the moving hypersurface with boundary. A key definition of the theory is the following:\newline
\newline
Let $K$ a closed subset of $\RR^d$. Then for $y \in K$ the \emph{Bouligand contingent cone} is the set\footnote{The \emph{distance} of a point $x \in \RR^d$ to a subset $A \subseteq \RR^d$ is defined as $\dist(x,A) := \inf\{ \norm{x-a}: \, a \in A \}$.}
\begin{align}
T_K(y) := \{ z \in \RR^d: \liminf_{h \rightarrow 0^+}\, \frac{1}{h} \dist(y+hz, K) = 0 \}.
\end{align}
An element of $T_K(y)$ is said to be \emph{subtangential} to $K$ at $y$. If $K$ is a $\mathcal{C}^1$-submanifold the contingent cone coincides with the tangent space of the submanifold. For a boundary point of a submanifold with boundary, it provides a proper generalization. Note that if $\tilde{K}\subseteq K$ is a closed subset of $K$ then\footnote{This follows from the fact that\newline $\dist(y+hz,K) \leq \dist(y+hz, \tilde{K})$.}
\begin{align}
\label{eqn:cone_relation}
T_{\tilde{K}}(y) \subseteq T_K(y) \quad \forall \ y \in \tilde{K}.
\end{align}
The following result is a special case of Theorem~4.2 in \cite{Deimling.1977} and states that a subtangential and Lipschitz continuous map induces a local semiflow on $K$. If, in addition, we have that both $f(y)$ and $-f(y)$ are subtangential, we obtain a local \emph{flow} on $K$ (in forward and backward direction).\newline
\newline
In the following, $B_r^d(x) := \{ y \in \RR^d: \norm{x-y} < r \}$ denotes the open ball in $\RR^d$ with radius $r$.

\begin{theorem}[see \cite{Deimling.1977}]
\label{thm:solutions_in_closed_sets}
Let $X=\RR^d$, $K \subset X$ be closed, $y_0 \in K$, $K_r := K \cap \overline{B^d_r(y_0)}$ and $f: K_r \rightarrow X$ be Lipschitz continuous with $|f(y)|\leq c$ and
\begin{align}
f(y) \in T_K(y) \quad \forall \ y \in K_r. 
\end{align}
Then the initial value problem
\[y'(s) = f(y(s)), \quad y(0) = y_0 \]
has a unique solution on $[0,r/c]$ with values in $K_r$.
\end{theorem}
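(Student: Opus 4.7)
The plan is to prove this classical Nagumo-type viability theorem by constructing Euler polygon approximations that remain close to $K$ using the subtangentiality condition, extracting a convergent subsequence via Arzel\`a--Ascoli, and obtaining uniqueness from Gronwall's inequality via the Lipschitz bound. The central difficulty is that $f(y) \in T_K(y)$ is only a $\liminf$ condition: at each $y \in K_r$ we know that for every $\varepsilon > 0$ there exists \emph{some} $h > 0$ with $\dist(y + h f(y), K) \le \varepsilon h$, but the admissible step sizes are not a priori uniform in $y$.

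For the construction, fix $\varepsilon > 0$ and inductively select nodes $t_0 = 0 < t_1 < \cdots$ with step sizes $h_n = t_{n+1} - t_n \in (0,\varepsilon]$, together with points $y_n \in K$, such that
\[
\norm{y_{n+1} - (y_n + h_n f(y_n))} \le \varepsilon h_n.
\]
The existence of each pair $(h_n, y_{n+1})$ is precisely the subtangential property at $y_n$. Let $y^{\varepsilon}:[0,T_*]\to \RR^d$ be the piecewise-linear interpolant through $(t_n, y_n)$, where $T_* := \sup_n t_n$. From $|f|\le c$ and the correction bound one shows by induction that $\norm{y_n - y_0} \le (c + \varepsilon) t_n$; hence, for $\varepsilon$ small enough, the polygon remains in $\overline{B_r^d(y_0)}$ for all $t_n \le r/c$. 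A maximal-interval argument (using the Lipschitz continuity of $f$ to transfer admissible step sizes to nearby points) rules out that $(t_n)$ accumulates below $r/c$, so we may take $T_* = r/c$. The interpolant is Lipschitz with constant $c + \varepsilon$, and $\dist(y^\varepsilon(s), K) \to 0$ uniformly in $s$ as $\varepsilon \to 0$.

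Arzel\`a--Ascoli then yields a subsequence converging uniformly to some $y \in \mathcal{C}([0, r/c], \RR^d)$; closedness of $K$ together with the distance estimate gives $y(s) \in K_r$. Rewriting
\[
y^\varepsilon(s) = y_0 + \int_0^s f(y^\varepsilon(\sigma))\, d\sigma + R^\varepsilon(s),
\]
the residual $R^\varepsilon$ decomposes into the accumulated corrections (bounded by $\varepsilon \cdot r/c$) and the quadrature error for the piecewise-constant approximation of $f \circ y^\varepsilon$ (bounded via the Lipschitz constant of $f$ times $c \varepsilon$); both vanish as $\varepsilon \to 0$. Passing to the limit yields the integral equation, hence differentiability and the ODE. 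Uniqueness follows immediately from Gronwall applied to $\norm{y_1(s) - y_2(s)} \le L \int_0^s \norm{y_1(\sigma) - y_2(\sigma)}\,d\sigma$, where $L$ is the Lipschitz constant of $f$.

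The hardest step is the maximal-interval argument extending the polygon to all of $[0, r/c]$: the $\liminf$ nature of subtangentiality does not a priori prevent the step sizes $h_n$ from shrinking with $\sum h_n < r/c$. The Lipschitz hypothesis on $f$ is what saves the day, since an admissible step size at $y_n$ transfers to a quantitative neighborhood via a Zorn/continuation argument, ensuring that the polygon can be carried all the way to $s = r/c$.
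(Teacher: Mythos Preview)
The paper does not supply its own proof of this theorem; it is stated as a special case of Theorem~4.2 in \cite{Deimling.1977} and simply cited. Your outline follows the classical Nagumo-type argument (Euler polygons with step sizes chosen via the subtangential condition, Arzel\`a--Ascoli for compactness, Gronwall for uniqueness), which is indeed the route taken in the cited reference, so there is no substantive divergence to report.

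One minor remark on your sketch: in the maximal-interval step you invoke Lipschitz continuity of $f$ to transfer an admissible step size at the accumulation point $y^\ast$ to nearby nodes $y_n$. In fact mere continuity suffices here: if $\dist(y^\ast + h^\ast f(y^\ast),K) < \tfrac{\varepsilon}{2} h^\ast$ for some $h^\ast>0$, then for $y_n$ close enough to $y^\ast$ one has $\dist(y_n + h^\ast f(y_n),K) < \varepsilon h^\ast$ by continuity of $f$ alone. The Lipschitz hypothesis is genuinely needed only for uniqueness (and for the quadrature error estimate in passing to the limit in the integral equation). This does not affect the correctness of your argument, only the attribution of which hypothesis does which work.
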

%%%%%%%%%%%%%%%%%%%%%%%%%%%%%%%%%%%%%%%%%%%%%%%%%%%%%%%%%%%%%%%

\subsection[Trajectories on the moving hypersurface]{Trajectories on the moving\\ hypersurface}
To formulate the kinematic evolution equation, we need the notion of a surface Lagrangian derivative on the moving hypersurface with boundary. We, therefore, consider the flow on $\gr\overline\Sigma$ generated by a consistent interfacial velocity field $\vsigma$.

\begin{lemma}[Trajectories on $\gr\overline\Sigma$]
\label{lemma:trajectories_on_moving_hypersurface}
Let $\gr\overline\Sigma$ be a $\mathcal{C}^{1,2}$-family of moving hypersurfaces with boundary and $\vsigma \in \mathcal{C}^1(\gr\overline\Sigma)$ be a velocity field with
\begin{equation}
\label{eqn:kinematic_conditions_contact_line}
\begin{aligned}
\normalspeed &= \inproduct{\vsigma}{\nsigma} \ \text{on} \ \gr\Sigma,\\
\clspeed &= \inproduct{\vsigma}{\ngamma} \ \text{on} \ \gr\Gamma.
\end{aligned}
\end{equation}
Then the initial value problem
\begin{equation}
\label{eqn:flow_ode}
\begin{aligned}
\ddt{}\, \flowmap{}(t;t_0,x_0) &= (1,\vsigma(\flowmap{}(t;t_0,x_0))),\\
\flowmap{}(t_0;t_0,x_0)&=(t_0,x_0).
\end{aligned}
\end{equation}
is locally uniquely solvable on $\gr\overline\Sigma$. The solution of \eqref{eqn:flow_ode} depends continuously on the initial data $(t_0,x_0)$ and the manifolds $\gr\Sigma$ and $\gr\Gamma$ are invariant subsets for the flow $\Phi$.
\end{lemma}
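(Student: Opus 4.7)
The plan is to reformulate \eqref{eqn:flow_ode} as an autonomous ODE in $\RR^4 = \RR \times \RR^3$ with vector field $F(t,x) := (1,\vsigma(t,x))$, constrained to the locally closed set $K := \gr\overline\Sigma$, and then to invoke Theorem~\ref{thm:solutions_in_closed_sets}. Since $\gr\overline\Sigma$ is a $\mathcal{C}^1$-hypersurface with boundary, intersecting with a small closed ball around $y_0 = (t_0,x_0)$ yields a set $K_r$ that is closed in $\RR^4$. By the assumption $\vsigma \in \mathcal{C}^1(\gr\overline\Sigma)$, the field $F$ is Lipschitz continuous and bounded on $K_r$. The only nontrivial hypothesis of Theorem~\ref{thm:solutions_in_closed_sets} is thus the subtangentiality $F(y) \in T_K(y)$ at every $y \in K_r$.

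For an interior point $y = (t,x) \in \gr\Sigma$, the contingent cone $T_K(y)$ reduces to the ordinary tangent space of the $\mathcal{C}^1$-hypersurface $\gr\Sigma$ in $\RR^4$, which is the orthogonal complement of the vector $(-\normalspeed(t,x), \nsigma(t,x))$. Therefore $F(y) \in T_K(y)$ collapses to $-\normalspeed + \inproduct{\vsigma}{\nsigma} = 0$, which is exactly the first kinematic condition in \eqref{eqn:kinematic_conditions_contact_line}. For a boundary point $y = (t,x) \in \gr\Gamma$, I aim to prove the stronger statement that $F(y)$ is tangent to the two-dimensional submanifold $\gr\Gamma$ of $\RR^4$, because this both yields subtangentiality to $K$ and sets up the invariance argument below. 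A tangent vector to $\gr\Gamma$ has the form $(1,\clspeed\,\ngamma + u\,\tgamma)$ for some $u \in \RR$, so it must be shown that $\inproduct{\vsigma}{\ngamma} = \clspeed$ (which is the second condition in \eqref{eqn:kinematic_conditions_contact_line}) and that $\inproduct{\vsigma}{\ndomega} = 0$. For the latter, I would combine the normal velocity condition with the expansion $\nsigma = \sin\theta\,\ngamma - \cos\theta\,\ndomega$ and the identity \eqref{eqn:vsigma_vgamma} to deduce $\cos\theta\,\inproduct{\vsigma}{\ndomega} = 0$; this yields the wall-tangency whenever $\theta \neq \pi/2$, and in the degenerate case one appeals to the inherent wall-constraint on an interfacial velocity field. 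Replacing $F$ by $-F$ gives subtangentiality for the backward direction.

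With subtangentiality in hand, Theorem~\ref{thm:solutions_in_closed_sets} applied forward and backward in time produces a unique local solution $\flowmap{}(\cdot;t_0,x_0)$ on $K$, and continuous dependence on $(t_0,x_0)$ is standard for Lipschitz vector fields via a Gronwall estimate on the difference of trajectories (noting that the flow remains in the common closed set $K_r$). Invariance of $\gr\Gamma$ follows directly: since $F$ is tangent to $\gr\Gamma$ at every point of $\gr\Gamma$, Theorem~\ref{thm:solutions_in_closed_sets} applied with the smaller closed set $\gr\Gamma \cap \overline{B^4_r(y_0)}$ (locally a submanifold without boundary) yields a trajectory staying in $\gr\Gamma$, which by uniqueness inside $K$ coincides with $\flowmap{}$. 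Invariance of $\gr\Sigma$ then follows by contradiction: if a trajectory starting in $\gr\Sigma$ first hit $\gr\Gamma$ at some time $t^\ast$, the backward-uniqueness argument in $K$ combined with invariance of $\gr\Gamma$ would force the whole past of the trajectory to lie in $\gr\Gamma$, contradicting the choice of initial datum.

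The principal obstacle is the subtangentiality check at the contact line: establishing $\inproduct{\vsigma}{\ndomega} = 0$ from the two kinematic conditions requires the relation \eqref{eqn:vsigma_vgamma} together with the orthogonal decomposition of $\nsigma$ in the local frame, and the borderline angle $\theta = \pi/2$ must be handled by the implicit wall-impenetrability of $\vsigma$. Once this is done, the existence, uniqueness, continuous dependence, and invariance claims all follow cleanly from Theorem~\ref{thm:solutions_in_closed_sets} applied to appropriate choices of the constraining closed set.
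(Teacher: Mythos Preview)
Your approach is essentially identical to the paper's: recast the problem as an autonomous ODE on $\RR^4$ with right-hand side $(1,\vsigma)$, verify subtangentiality of $\pm(1,\vsigma)$ separately at interior points (via the tangent space of $\gr\Sigma$) and at boundary points (via the tangent space of the submanifold $\gr\Gamma$), apply Theorem~\ref{thm:solutions_in_closed_sets} in both time directions, and obtain invariance of $\gr\Gamma$ directly and of $\gr\Sigma$ by the flow property. The one place you are more explicit than the paper is the wall-tangency $\inproduct{\vsigma}{\ndomega}=0$ at $\theta=\pi/2$: the paper simply asserts the decomposition $\vsigma=\clspeed\,\ngamma+\inproduct{\vsigma}{\tgamma}\tgamma$ at the contact line without isolating this case, so your appeal to an ``inherent wall-constraint'' is no less rigorous than what the paper does---in every application within the paper the interfacial velocity indeed satisfies the impermeability condition, which closes the gap.
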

We call a solution $\flowmap{}(t;t_0,x_0)$ a \emph{trajectory} on the moving hypersurface. Note that due to the structure of $\gr\overline\Sigma$, any solution can be written in the form
\[ \flowmap{}(t;t_0,x_0)=(t,\flowmap{x}(t;t_0,x_0))\]
with $\flowmap{x}(t;t_0,x_0) \in \overline{\Sigma(t)}$. 

\begin{definition}
The \emph{Lagrangian time-derivative} of a quantity $\psi \in \mathcal{C}^1(\gr\overline\Sigma)$ is defined as
\begin{align}
\DSigmaDT{\psi}(t_0,x_0) := \ddt{} \, \psi(\flowmap{}(t;t_0,x_0))\Big|_{t=t_0}.
\end{align}
For an inner point\footnote{Here and in the following, we mean by ``inner point'' an inner point of the \emph{manifold with boundary} $\gr\overline\Sigma$. Clearly, the set $\gr\overline\Sigma$ has no inner points as a subset of $\RR^4$ in the standard topology.} $(t_0,x_0) \in \gr \Sigma$, one may consider the choice
\[ \vsigma := \normalspeed \, \nsigma \]
and write $\thomas$ for the corresponding Lagrangian derivative (also called \emph{Thomas derivative}).
\end{definition}

For the proof of Lemma~\ref{lemma:trajectories_on_moving_hypersurface}, it is useful to give the following characterization of the tangent spaces of the submanifolds $\gr\Sigma$ and $\gr\Gamma$. The proof is given in the Appendix, Lemma~\ref{lemma:tangent_spaces_appendix}.
\begin{lemma}[Tangent spaces]
\label{lemma:tangent_spaces}
The tangent space of $\gr\Sigma$ at the point $(t,x)$ is given by
\begin{align*}
T_{\gr\Sigma}(t,x) = \{ \lambda \, (1,\normalspeed \nsigma(t,x)) + (0,\tau):\\
\lambda \in \RR, \, \tau \in T_{\Sigma(t)}(x) \}.
\end{align*}
Likewise, the tangent space of $\gr\Gamma$ at the point $(t,x)$ is given by
\begin{align*}
T_{\gr\Gamma}(t,x) = \{ \lambda \, (1,\clspeed \ngamma(t,x)) + (0,\tau):\\
\lambda \in \RR, \, \tau \in T_{\Gamma(t)}(t,x) \}.
\end{align*}
\end{lemma}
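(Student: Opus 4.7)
The plan is to prove the two characterizations by first observing that both sides are linear subspaces of $\RR^4$ of the same (small) dimension, and then establishing only one inclusion by exhibiting explicit tangent curves. For $\gr\Sigma$, since $\gr\Sigma$ is a $\mathcal{C}^1$ hypersurface in $\RR^4$, the space $T_{\gr\Sigma}(t,x)$ is three-dimensional. The right-hand side of the claimed formula is the linear span of $(1,\normalspeed\,\nsigma(t,x))$ together with $\{0\}\times T_{\Sigma(t)}(x)$; these are obviously linearly independent (the first vector has time-component $1$ while the others have time-component $0$), so this span is also three-dimensional. It therefore suffices to prove the inclusion ``$\supseteq$''.

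To realize the two generators, I would argue as follows. For any $\tau\in T_{\Sigma(t)}(x)$, pick a $\mathcal{C}^1$ curve $c:(-\varepsilon,\varepsilon)\to\Sigma(t)$ with $c(0)=x$ and $\dot{c}(0)=\tau$; the curve $s\mapsto(t,c(s))$ lies in $\gr\Sigma$ and has tangent $(0,\tau)$ at $s=0$. For the direction $(1,\normalspeed\,\nsigma)$, take a local $\mathcal{C}^1$-parametrization of $\gr\Sigma$ around $(t,x)$ and build a curve $s\mapsto(t+s,y(s))$ on $\gr\Sigma$ with $y(0)=x$. By the well-definedness of the normal velocity (the fact referenced by the footnote on the definition of $\normalspeed$, cf.\ \cite{Pruss.2016}), one has $\inproduct{\dot{y}(0)}{\nsigma(t,x)}=\normalspeed(t,x)$. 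Decomposing $\dot{y}(0)=\normalspeed\,\nsigma+\tau'$ with $\tau':=\psigma\dot{y}(0)\in T_{\Sigma(t)}(x)$, one obtains
\[
(1,\normalspeed\,\nsigma)=(1,\dot{y}(0))-(0,\tau'),
\]
expressed as the difference of two elements of $T_{\gr\Sigma}(t,x)$, hence tangent. Together with the previous curves this proves the inclusion, and the dimension count closes the argument.

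For $\gr\Gamma$ the proof runs along identical lines but one dimension lower. The submanifold $\gr\Gamma$ is the boundary of $\gr\overline{\Sigma}$ and hence a $\mathcal{C}^1$-submanifold of $\RR^4$ of dimension two, while the right-hand side of the claim is the span of $(1,\clspeed\,\ngamma)$ and $\{0\}\times T_{\Gamma(t)}(x)$, also of dimension $1+1=2$. One realizes the spatial direction by $\mathcal{C}^1$-curves within $\Gamma(t)$ at fixed time, and the time direction by a curve $s\mapsto(t+s,y(s))$ on $\gr\Gamma$. Since $y(s)\in\partial\Omega$ one has $\inproduct{\dot{y}(0)}{\ndomega}=0$, so $\dot{y}(0)$ lies in the plane $\linspan\{\ngamma,\tgamma\}$; the definition of $\clspeed$ then gives $\inproduct{\dot{y}(0)}{\ngamma}=\clspeed(t,x)$, and the argument concludes as before by subtracting the $\tgamma$-component, which lies in $T_{\Gamma(t)}(x)$.

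The only technical subtlety I expect is the well-definedness of $\normalspeed$ and $\clspeed$ as used in the arguments above, namely that \emph{every} admissible $\mathcal{C}^1$-curve through $(t_0,x_0)$ yields the same value for the normal (resp.\ contact-line) component of its velocity. This is the content referenced in the footnotes of the paper and is standard for $\mathcal{C}^{1,2}$-families of moving hypersurfaces; once it is granted, the remainder of the proof is essentially linear algebra plus the explicit construction of the two families of curves. No deeper obstacle is anticipated.
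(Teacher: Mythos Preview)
Your proposal is correct and follows essentially the same approach as the paper. The paper packages the argument via an explicitly constructed ``separated'' local parametrization $\phi(t,u)=(t,\hat\phi(t,u))$ of $\gr\Sigma$ (the appendix Lemma preceding the proof), reads off a basis from its partial derivatives, and then replaces $(1,\partial_t\hat\phi)$ by $(1,\normalspeed\nsigma)$ after splitting off the tangential part---which is precisely your decomposition $(1,\dot y(0))=(1,\normalspeed\nsigma)+(0,\tau')$ combined with your dimension count.
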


\begin{proof}[Proof of Lemma~\ref{lemma:trajectories_on_moving_hypersurface}] 
It follows from \eqref{eqn:kinematic_conditions_contact_line} together with \eqref{eqn:vsigma_vgamma} that
\begin{align} 
\label{eqn:vsigma_1}
\vsigma(t,x) &= \normalspeed \nsigma(t,x) + \mathcal{P}_{\Sigma(t)} \vsigma(t,x)
\end{align}
if $(t,x) \in \gr \Sigma$ and, similarly,
\begin{align}
\label{eqn:vsigma_2}
\vsigma(t,x) &= \clspeed\ngamma(t,x) + (\inproduct{\vsigma}{\tgamma}\tgamma)(t,x)
\end{align}
if $(t,x) \in \gr\Gamma$. Hence, the field $f\in \mathcal{C}^1(\gr{\overline\Sigma};\RR^4)$ defined by
\[ f(t,x) := (1,\vsigma(t,x)) \]
is an element of the tangent space $T_{\gr\Sigma}(t,x)$ or $T_{\gr\Gamma}(t,x)$, respectively (cf. Lemma~\ref{lemma:tangent_spaces}). In order to apply Theorem~\ref{thm:solutions_in_closed_sets}, we set $X:=\RR \times \RR^3$, fix a point $(t_0,x_0) \in \gr\overline\Sigma$ and consider the closed subsets
\begin{align*} 
K^\delta (\overline\Sigma) &:= \bigcup_{t \in [t_0-\delta,t_0+\delta]} \{t\} \times \overline\Sigma(t) \subset \gr\overline\Sigma,\\
K^\delta_r (\overline\Sigma) &:= K^\delta (\overline\Sigma) \cap \overline{B^4_r}(t_0,x_0),  \\
K^\delta (\Gamma) &:= \bigcup_{t \in [t_0-\delta,t_0+\delta]} \{t\} \times \Gamma(t) \subset K^\delta (\overline\Sigma),\\
K^\delta_r (\Gamma) &:= K^\delta (\Gamma) \cap \overline{B^4_r}(t_0,x_0)
\end{align*}
for $\delta > 0$ sufficiently small. 

\begin{lemma}
Under the above assumptions there is $r>0$ such that
\begin{align*} 
\pm(1,\vsigma(t,x)) \in T_{K^\delta(\overline\Sigma)}(t,x) \quad \forall (t,x) \in K^\delta_r(\overline\Sigma) \cap \gr\Sigma, \\
\pm(1,\vsigma(t,x)) \in T_{K^\delta(\Gamma)}(t,x)  \quad \forall (t,x) \in K^\delta_r(\overline\Sigma) \cap \gr{\Gamma}.
\end{align*}
Since $T_{K^\delta(\Gamma)} \subset T_{K^\delta(\overline\Sigma)}(t,x)$, it also holds that
\begin{align}
\label{eqn:cone_condition_3}
\pm(1,\vsigma(t,x)) \in T_{K^\delta(\overline\Sigma)}(t,x)% \quad \forall (t,x) \in K^\delta_r(\overline\Sigma). 
\end{align}
for all $(t,x) \in K^\delta_r(\overline\Sigma)$.
\end{lemma}
\begin{proof}
We choose $r>0$ such that $K^\delta_r(\overline\Sigma) \subseteq K^{\delta/2}(\overline\Sigma) \subset K^{\delta}(\overline\Sigma)$. Therefore, we do not have to consider the boundary cases $t = t_0 \pm \delta$.\newline
\newline
Let $(t,x) \in K^\delta_r(\overline\Sigma) \cap \gr\Sigma$. In this case, the vector $(1,\vsigma(t,x))$ is an element of the tangent space of the manifold $\gr\Sigma$. This follows from \eqref{eqn:vsigma_1} and Lemma~\ref{lemma:tangent_spaces}. By definition this means that there is an open interval $I \ni 0$ and a $\mathcal{C}^1$-curve $\gamma: I \rightarrow \gr\Sigma$ such that
 \[ \gamma(0) = (t,x), \quad \gamma'(0) = (1,\vsigma(t,x)). \]
 Clearly, by restriction to a smaller open interval, one can always achieve $\gamma \in \mathcal{C}^1(\tilde{I};K^\delta(\overline\Sigma))$. Therefore, we have
 \begin{align*}
 &\dist[(t,x) + s(1,\vsigma(t,x)), K^\delta(\overline\Sigma)]\\ 
 \leq \ &|(t,x) + s(1,\vsigma(t,x)) - \gamma(s)| \\
 \leq \ &|(t,x) + s(1,\vsigma(t,x)) - \gamma(0) - \gamma'(0)s + o(|s|)| \\
 = \ &|o(|s|)| \quad \text{as} \ s \rightarrow 0.
 \end{align*}
 Note that this also means that
 \[ \dist[(t,x) - s(1,\vsigma(t,x)), K^\delta(\overline\Sigma)] = |o(|s|)| \]
 as $s \rightarrow 0$. Hence it follows that $\pm(1,\vsigma) \in T_{K^\delta(\overline\Sigma)}$.\newline
\newline
 Let $(t,x) \in K^\delta_r(\Gamma) = K^\delta_r(\overline\Sigma) \cap \gr{\Gamma}$. Since $(1,\vsigma(t,x))$ is an element of the tangent space of the manifold $\gr\Gamma$, there is an open interval $I \ni 0$ and a $\mathcal{C}^1$-curve $\gamma: I \rightarrow K^\delta(\Gamma)$ such that
\[ \gamma(0) = (t,x), \quad \gamma'(0) = (1,\vsigma(t,x)). \]
With the same argument as above, we obtain
\[ \dist[(t,x) \pm s(1,\vsigma(t,x)), K^\delta(\Gamma)] = |o(|s|)| \]
as $s \rightarrow 0$.\qedhere
\end{proof}

Since we have that
\[ \pm(1,\vsigma(t,x)) \in T_{K^\delta(\Gamma)} \quad \forall (t,x) \in K^\delta_r(\Gamma), \]
Theorem~\ref{thm:solutions_in_closed_sets} also implies that the boundary $\gr\Gamma$ is an \emph{invariant subset}, i.e.\ any trajectory starting in the subset $\gr\Gamma$ stays in this subset (in both forward and backward direction). Since $\Phi$ is a \emph{flow} on $\gr\overline\Sigma$, it follows that also the \emph{interior} $\gr\Sigma$ is an invariant subset (see \cite{Amann.1990}). Since we assume $\vsigma \in \mathcal{C}^1(\gr\overline\Sigma)$, standard arguments show that the solution of \eqref{eqn:flow_ode} depends continuously on the initial data $(t_0,x_0)$.
\end{proof}

%%%%%%%%%%%%%%%%%%%%%%%%%%%%%%%%%%%%%%%%%%%%%%%%%%%%%%%%%%%%%%%%%%%%%%%%%%%%%%%%%%%%%%%%%%%%%%%%%%%%%%%%%%%%%%%%%%%%%%%%%%%%%%%%%%%%%%%%%%%%%
%%%%%%%%%%%%%%%%%%%%%%%%%%%%%%%%%%%%%%%%%%%%%%%%%%%%%%%%%%%%%%%%%%%%%%%%%%%%%%%%%%%%%%%%%%%%%%%%%%%%%%%%%%%%%%%%%%%%%%%%%%%%%%%%%%%%%%%%%%%%%
\subsection{Contact angle evolution equation}
\begin{theorem}[Evolution of the contact angle]
\label{theorem:contact_angle_evolution}
Consider a $\mathcal{C}^{1,2}$-family of moving hypersurfaces with boundary and a consistent velocity field $\vsigma \in \mathcal{C}^1(\gr \overline{\Sigma})$ with
\begin{equation*}
\begin{aligned}
\normalspeed &= \inproduct{\vsigma}{\nsigma} \ \text{on} \ \gr\Sigma,\\
\clspeed &= \inproduct{\vsigma}{\ngamma} \ \text{on} \ \gr\Gamma.
\end{aligned}
\end{equation*}
Let $\Omega$ be a half-space such that $\ndomega$ is constant on the boundary and let $\theta \in (0,\pi)$. Then the time derivative of the contact angle on $\gr \Gamma$ obeys the evolution equation
\begin{align}
\label{eqn:main_result}
\boxed{\DSigmaDT{\theta} = \inproduct{\partial_\tau \vsigma}{\nsigma},}
\end{align}
where $\tau = - \cos \theta \, \ngamma - \sin \theta \, \ndomega$.
\end{theorem}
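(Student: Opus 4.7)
The plan is to differentiate the defining identity $\cos\theta = -\inproduct{\nsigma}{\ndomega}$ along a trajectory of the flow $\flowmap{}$ generated by $\vsigma$, exploiting that $\ndomega$ is constant on the half-space $\Omega$. First I would fix $(t_0,x_0)\in\gr\Gamma$ and consider the trajectory supplied by Lemma~\ref{lemma:trajectories_on_moving_hypersurface}. By the invariance part of that lemma, this trajectory remains in $\gr\Gamma$, so $\theta$ is defined and differentiable in $t$ along it, and $\DSigmaDT{\theta}$ is well-defined. Differentiating the cosine identity with $\ndomega$ constant yields
\begin{equation*}
\sin\theta \,\DSigmaDT{\theta} = \inproduct{\DSigmaDT{\nsigma}}{\ndomega}.
\end{equation*}

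Next I would use \eqref{eqn:expansion_tau_sigma} to expand $\ndomega$ in the orthonormal basis $\{\nsigma,\tau,\tgamma\}$: a short computation (or just solving \eqref{eqn:expansion_tau_sigma} for $\ndomega$) gives $\ndomega = -\cos\theta\,\nsigma - \sin\theta\,\tau$. Since $\lVert\nsigma\rVert\equiv 1$ on $\gr\overline\Sigma$, the Lagrangian derivative $\DSigmaDT{\nsigma}$ is orthogonal to $\nsigma$; and $\tgamma$ does not appear in the expansion of $\ndomega$. Hence
\begin{equation*}
\sin\theta\,\DSigmaDT{\theta} = -\sin\theta\,\inproduct{\DSigmaDT{\nsigma}}{\tau},
\end{equation*}
and since $\theta\in(0,\pi)$ one can divide by $\sin\theta$ to reduce the claim to $\DSigmaDT{\theta} = -\inproduct{\DSigmaDT{\nsigma}}{\tau}$.

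The final step is to evaluate $\inproduct{\DSigmaDT{\nsigma}}{\tau}$. I would introduce a material parametrization $\psi:U\times J\to\RR^3$ of $\gr\overline\Sigma$ near $(t_0,x_0)$, where $\psi(\cdot,t)$ is a local $\mathcal{C}^1$-parametrization of $\overline\Sigma(t)$ over an upper half ball (so that $u^2=0$ maps to $\Gamma(t)$) and $\partial_t\psi(u,t)=\vsigma(t,\psi(u,t))$. Such a $\psi$ is constructed by transporting an initial parametrization of $\overline{\Sigma}(t_0)$ along the flow $\flowmap{}$; the invariance of $\gr\Gamma$ ensures the boundary structure is preserved. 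Choosing the initial coordinates so that $\partial_{u^1}\psi(u_0,t_0)=\tau$ and $\partial_{u^2}\psi(u_0,t_0)=\tgamma$, I would differentiate the identity $\inproduct{\nsigma(\psi(u,t))}{\partial_{u^i}\psi(u,t)}=0$ in $t$, commute $\partial_t$ with $\partial_{u^i}$, and use $\partial_t\psi=\vsigma\circ\psi$ to obtain
\begin{equation*}
\inproduct{\DSigmaDT{\nsigma}}{\partial_{u^i}\psi} = -\inproduct{\nsigma}{\partial_{u^i}(\vsigma\circ\psi)} = -\inproduct{\nsigma}{\partial_{\tau_i}\vsigma},
\end{equation*}
where $\partial_{\tau_i}\vsigma$ is the directional derivative of $\vsigma$ along the tangent vector $\tau_i=\partial_{u^i}\psi$, which is well-defined because $\vsigma\in\mathcal{C}^1(\gr\overline\Sigma)$ and $\tau_i\in T_{\Sigma(t)}$. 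Specializing to $i=1$ produces $\DSigmaDT{\theta}=\inproduct{\partial_\tau\vsigma}{\nsigma}$.

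The main technical obstacle is setting up the material parametrization $\psi$ at the boundary point $(t_0,x_0)\in\gr\Gamma$, so that $\partial_t\psi=\vsigma\circ\psi$ holds up to and including $u^2=0$ and so that the commutation $\partial_t\partial_{u^i}\psi=\partial_{u^i}(\vsigma\circ\psi)$ is justified when one of the $u^i$ directions lies along the boundary of the domain $U$. This is handled by working with an upper-half-ball chart as in the paper's footnote and invoking the invariance of $\gr\Gamma$ from Lemma~\ref{lemma:trajectories_on_moving_hypersurface}; the remaining reductions are linear algebra in $\RR^3$ using the orthonormal frame $\{\nsigma,\tau,\tgamma\}$.
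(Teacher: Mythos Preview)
Your algebraic reduction is correct and coincides with the paper's final step: differentiate $\cos\theta=-\inproduct{\nsigma}{\ndomega}$, use $\ndomega=-\cos\theta\,\nsigma-\sin\theta\,\tau$ together with $\DSigmaDT{\nsigma}\perp\nsigma$, and reduce to computing $\inproduct{\DSigmaDT{\nsigma}}{\tau}$. The gap is in how you evaluate this last quantity at a boundary point.

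To commute $\partial_t$ and $\partial_{u^i}$ on your material parametrization $\psi$, you need $\psi$---hence the flow $\Phi$---to be $\mathcal{C}^1$ in the initial position up to the boundary of the half-ball. Lemma~\ref{lemma:trajectories_on_moving_hypersurface} only supplies \emph{continuous} dependence there; the $\mathcal{C}^1$ statement (Lemma~\ref{lemma:regularity_of_the_flow_map}) is proved only at interior points, via the tubular-neighborhood extension of $\vsigma$ to an open set in $\RR^4$ (Lemma~\ref{lemma:signed_distance_old}), and that construction is unavailable at a point of $\gr\Gamma$. Invoking a half-ball chart together with the invariance of $\gr\Gamma$ does not by itself furnish a $\mathcal{C}^1$ extension of $\vsigma$ or the required regularity of $\Phi$. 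The paper circumvents exactly this issue: it first establishes the normal-vector evolution \eqref{eqn:normal_vector_evolution} at interior points (where the commutation is justified), rewrites it in integral form, approximates $(t_0,x_0)\in\gr\Gamma$ by interior initial data, and passes to the limit using only continuous dependence; differentiation then gives \eqref{eqn:normal_vector_evolution} on $\gr\Gamma$. Your direct route can be completed if you first construct a $\mathcal{C}^1$ extension of $\vsigma$ across $\partial\Sigma$ (for instance in the chart, extending across $\{u^2=0\}$) and deduce $\mathcal{C}^1$ dependence of $\Phi$ from standard ODE theory, but as written this is the missing ingredient. A small side remark: with $\{u^2=0\}$ parametrizing $\Gamma$, the direction $\partial_{u^1}\psi$ at the boundary must be tangent to $\Gamma$, so your assignment $\partial_{u^1}\psi=\tau$, $\partial_{u^2}\psi=\tgamma$ has the indices swapped.
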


\begin{remark}
\begin{enumerate}[(i)]
 \item There is a short way to (formally) derive the kinematic evolution equation \eqref{eqn:normal_vector_evolution} using the level set formulation. For details see \cite{Fricke.2018}.
 \item Since $\vsigma \cdot \ndomega = 0$ on $\partial\Omega$, equation \eqref{eqn:main_result} may be reformulated as
 \begin{align*} 
 \DSigmaDT{\theta} = \partial_\tau \normalspeed + \cos \theta \, \clspeed \inproduct{\tau}{\partial_\tau \nsigma} \\
 - \inproduct{\vsigma}{\tgamma}\inproduct{\tgamma}{\partial_\tau \nsigma}. 
 \end{align*}
 In particular, for the two-dimensional case we obtain
 \[ \DSigmaDT{\theta} = \partial_\tau \normalspeed - \kappa \cos \theta \, \clspeed. \]
 In a frame of reference, where the contact line is at rest (i.e. $\clspeed = 0$), the latter formula reduces to
 \[ \DSigmaDT{\theta} = \partial_\tau \normalspeed. \]
 \item Note that for $\theta \rightarrow 0$ or $\theta \rightarrow \pi$, the interface tangent vector $\tau$ becomes tangential to $\partial\Omega$ and $\nsigma \rightarrow \pm \ndomega$. Therefore, we obtain in the limit
 \[ \DSigmaDT{\theta}\Big|_{\theta=0} = \DSigmaDT{\theta}\Big|_{\theta=\pi} = 0.\]
 In the following, we restrict ourselves to the case of \emph{partial wetting}, i.e. 
 \[ 0 < \theta < \pi.\]
\end{enumerate}
\end{remark}

To prove Theorem~\ref{theorem:contact_angle_evolution}, it is useful to first consider the evolution of the normal vector.
\begin{theorem}[Evolution of the normal vector]
\label{theorem:normal_vector_evolution}
Consider a $\mathcal{C}^{1,2}$-family of moving hypersurfaces and a consistent velocity field $\vsigma \in \mathcal{C}^1(\gr \Sigma)$, i.e. such that
\begin{align}
\label{eqn:kinematic_condition}
\normalspeed = \inproduct{\vsigma}{\nsigma} \ \text{on} \ \gr \Sigma.
\end{align}
Then the evolution of the interface normal vector on $\gr \Sigma$ obeys the evolution equation
\begin{align}
\label{eqn:normal_vector_evolution}
\DSigmaDT{\nsigma} = - \sum_{k=1}^2 \inproduct{\partial_{\tau_k} \vsigma}{\nsigma} \tau_k,
\end{align}
where $\{\tau_1,\tau_2\}$ is an orthonormal basis of $T_{\Sigma(t_0)}(x_0)$.
\end{theorem}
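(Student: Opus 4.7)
\textbf{Proof proposal for Theorem \ref{theorem:normal_vector_evolution}.}

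The plan is to decompose $\DSigmaDT{\nsigma}$ in the local orthonormal frame $\{\tau_1,\tau_2,\nsigma\}$ at $(t_0,x_0)$ and to compute each tangential component by differentiating a suitable orthogonality relation along a two-parameter family of trajectories. First I would observe that since $\norm{\nsigma}=1$ everywhere on $\gr\Sigma$, differentiating along a trajectory gives $\inproduct{\DSigmaDT{\nsigma}}{\nsigma}=0$, so that
\begin{equation*}
\DSigmaDT{\nsigma}(t_0,x_0) = \sum_{k=1}^2 \inproduct{\DSigmaDT{\nsigma}}{\tau_k}\,\tau_k.
\end{equation*}
It therefore suffices to identify each tangential component as $-\inproduct{\partial_{\tau_k}\vsigma}{\nsigma}$.

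For this, I would fix $k$ and pick a $\mathcal{C}^1$-curve $\gamma:(-\varepsilon,\varepsilon)\to\Sigma(t_0)$ with $\gamma(0)=x_0$ and $\gamma'(0)=\tau_k$. Using Lemma \ref{lemma:trajectories_on_moving_hypersurface}, one defines the two-parameter map
\begin{equation*}
\gamma_s(t) := \flowmap{x}(t;t_0,\gamma(s)) \in \Sigma(t),
\end{equation*}
which exists on a common neighborhood of $(t_0,0)$ and, by continuous dependence on initial data together with $\vsigma\in\mathcal{C}^1$, is jointly $\mathcal{C}^1$ in $(t,s)$. The key identity is then
\begin{equation*}
\inproduct{\nsigma(t,\gamma_s(t))}{\partial_s \gamma_s(t)} = 0,
\end{equation*}
which holds because $\partial_s\gamma_s(t)$ is tangent to $\Sigma(t)$ at $\gamma_s(t)$ (it is the derivative of a curve whose image lies in $\Sigma(t)$ for fixed $t$).

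Differentiating this identity in $t$ at $(t,s)=(t_0,0)$ and using equality of mixed partials, one gets
\begin{equation*}
\inproduct{\DSigmaDT{\nsigma}}{\tau_k} + \inproduct{\nsigma}{\partial_s\partial_t\gamma_s(t)\big|_{t_0,0}} = 0,
\end{equation*}
and since $\partial_t\gamma_s(t)=\vsigma(t,\gamma_s(t))$, the mixed partial equals $\partial_{\tau_k}\vsigma(t_0,x_0)$, yielding the desired component formula. Summing over $k$ gives \eqref{eqn:normal_vector_evolution}. The main technical obstacle I anticipate is justifying the joint $\mathcal{C}^1$-regularity of $\gamma_s(t)$ in both variables and the meaningful interpretation of $\partial_{\tau_k}\vsigma$ as a tangential derivative of the surface-defined field $\vsigma$; this can be handled either via a local $\mathcal{C}^1$-parametrization of $\gr\overline\Sigma$ that pulls everything back to a Euclidean domain, where the standard flow-regularity theorem applies, or by extending $\vsigma$ to a tubular neighborhood (noting that the final expression $\inproduct{\partial_{\tau_k}\vsigma}{\nsigma}$ is independent of the extension since $\tau_k\in T_{\Sigma(t_0)}(x_0)$).
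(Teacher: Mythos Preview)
Your argument is correct and rests on the same infrastructure as the paper---the two-parameter family $\gamma_s(t)=\flowmap{x}(t;t_0,\gamma(s))$, the $\mathcal{C}^1$-extension of $\vsigma$ to a tubular neighborhood to secure joint $\mathcal{C}^1$-regularity of the flow, and the Schwarz interchange yielding $\partial_t\partial_s\gamma_s(t)\big|_{(t_0,0)}=\partial_{\tau_k}\vsigma$ (the paper isolates this as a separate ``tangent transport'' lemma). Where you diverge is in the final extraction of $\DSigmaDT{\nsigma}$: the paper writes $\nsigma=(\tau_1\times\tau_2)/|\tau_1\times\tau_2|$ along the trajectory, differentiates the cross product, and then projects onto $T_\Sigma$; you instead differentiate the orthogonality relation $\inproduct{\nsigma(t,\gamma_s(t))}{\partial_s\gamma_s(t)}=0$ directly. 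Your route is a bit cleaner---it avoids the cross-product algebra and the explicit projection step, and it makes manifest from the outset that $\DSigmaDT{\nsigma}$ is tangential. The paper's route, on the other hand, is slightly more self-contained in that it never appeals to $\nsigma$ as an externally given $\mathcal{C}^1$-field during the computation but reconstructs it from the transported frame. Both arguments require exactly the same regularity justification, which you have correctly identified.
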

\begin{remark}
Since $\normalspeed(t,\cdot) = \inproduct{\vsigma(t,\cdot)}{\nsigma(t,\cdot)} \in \mathcal{C}^1(\Sigma(t))$, equation \eqref{eqn:normal_vector_evolution} can be written as
\begin{align*}
\DSigmaDT{} \nsigma &= \sum_{k=1}^2 \left( - \tau_k \partial_{\tau_k} \normalspeed + \inproduct{(\vsigma)_\parallel}{\partial_{\tau_k} \nsigma} \, \tau_k \right) \nonumber \\
&= - \nablasigma \normalspeed + \sum_{k=1}^2 \inproduct{(\vsigma)_\parallel}{\partial_{\tau_k} \nsigma} \, \tau_k.
\end{align*}
In particular, for $\vsigma(t,x) := \normalspeed(t,x) \, \nsigma(t,x)$ we obtain (in agreement with \cite{Kimura.2008}, Theorem 5.15)
\begin{align}
\thomas \nsigma = - \nablasigma \normalspeed.
\end{align}
With this notation, we may express \eqref{eqn:normal_vector_evolution} as
\begin{align*}
\DSigmaDT{} \nsigma &= \thomas \nsigma + (\nablasigma \nsigma) (\vsigma)_\parallel\\
&= - \nablasigma \normalspeed + \norm{(\vsigma)_\parallel} \, \partial_{w} \nsigma,
\end{align*}
where $w:= (\vsigma)_\parallel/\norm{(\vsigma)_\parallel}$ (for $(\vsigma)_\parallel \neq 0$).
\end{remark}
%%%%%%%%%%%%%%%%%%%%%%%%%%%%%%%%%%%%%%%%%%%%%%%%%%%%%%%%%%%%%%%%%%%%%%%%%%%%%%%%%%%%%%%%%%%%%%%%%%%%%%%%%%%%%%%%%%%%%%%%%%%%%%%%%%%%%%%%%%%%%

%%%%%%%%%%%%%%%%%%%%%%%%%%%%%%%%%%%%%%%%%%%%%%%%%%%%%%%%%%%%%%%%%%%%%%%%%%%%%%%%%%%%%%%%%%%%%%%%%%%%%%%%%%%%%%%%%%%%%%%%%%%%%%%%%%%%%%%%%%%%%
% Preliminaries for the proof (away from cl)
%%%%%%%%%%%%%%%%%%%%%%%%%%%%%%%%%%%%%%%%%%%%%%%%%%%%%%%%%%%%%%%%%%%%%%%%%%%%%%%%%%%%%%%%%%%%%%%%%%%%%%%%%%%%%%%%%%%%%%%%%%%%%%%%%%%%%%%%%%%%%
\paragraph{Preliminaries for the proof:}
In order to prove  Theorem~\ref{theorem:normal_vector_evolution}, we need a \emph{continuously differentiable} dependence of the trajectories $\flowmap{}(\cdot,t_0,x_0)$ on the initial position $x_0 \in \Sigma(t_0)$. To this end, we construct a $\mathcal{C}^1$-extension of the velocity field $\vsigma$ to an open neighborhood of $(t_0,x_0)$ in $\RR^4$, which still leaves $\gr\Sigma$ invariant. This construction allows to obtain the $\mathcal{C}^1$-dependence on the initial position from standard ODE theory. To show the following Lemma, it is helpful to use a special type of local parametrization for $\gr\Sigma$ which is constructed in the Appendix.

%%%%%% Tubular neighborhood %%%%%%%%
\newcommand{\tube}{X}
\begin{lemma}[Signed distance function]
\label{lemma:signed_distance_old}
Let $\{\Sigma(t)\}_{t \in I}$ be a $\mathcal{C}^{1,2}$-family of moving hypersurfaces and $(t_0,x_0)$ be an inner point of $\move=\gr\Sigma$. Then there exists an open neighborhood $U \subset \RR^4$ of $(t_0,x_0)$ and $\varepsilon > 0$ such that the map
\begin{align*}
\tube: (\move \cap U) \times (-\varepsilon, \varepsilon) \rightarrow \RR^4,\\ 
\tube(t,x,h) := (t, x + h \nsigma(t,x))\nonumber
\end{align*}
is a diffeomorphism onto its image
\begin{align*}
\mathcal{N}^\varepsilon := \tube((\move \cap U) \times (-\varepsilon \times \varepsilon)) \subset \RR^4 ,
\end{align*}
i.e.\ $\tube$ is invertible there and both $\tube$ and $\tube^{-1}$ are $\mathcal{C}^1$. The inverse function has the form
\begin{align*}
\tube^{-1}(t,x) = (\pi(t,x),d(t,x))
\end{align*}
with $\mathcal{C}^1$-functions $\pi$ and $d$ on $\mathcal{N}^\varepsilon$.
\end{lemma}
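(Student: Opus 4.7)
The plan is to reduce the statement to an application of the inverse function theorem on an open subset of $\RR^4$ by composing $\tube$ with a local parametrization of $\gr\Sigma$. Since $(t_0,x_0)$ is an inner point of the $\mathcal{C}^{1,2}$-family $\move$, the appendix provides a $\mathcal{C}^1$-parametrization of $\move$ in a neighborhood of $(t_0,x_0)$ that respects the time coordinate, i.e.\ of the form
\[ \varphi: W \to \move \cap U_0, \qquad \varphi(t,u,v) = (t,\psi(t,u,v)), \]
with $W \subset \RR^3$ open, $\psi \in \mathcal{C}^1(W;\RR^3)$, and $\{\partial_u \psi(t,u,v),\,\partial_v \psi(t,u,v)\}$ spanning $T_{\Sigma(t)}(\psi(t,u,v))$.

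First I would introduce the pulled-back map
\[ \widetilde{\tube}: W \times (-\varepsilon,\varepsilon) \to \RR^4, \quad \widetilde{\tube}(t,u,v,h) := \bigl(t,\,\psi(t,u,v) + h\,\nsigma(t,\psi(t,u,v))\bigr), \]
which is $\mathcal{C}^1$ because $\nsigma \in \mathcal{C}^1(\move)$ and $\varphi$ is $\mathcal{C}^1$. A direct computation of the Jacobian at a parameter point $(t_0,u_0,v_0,0)$ (with $\psi(t_0,u_0,v_0)=x_0$) gives the block structure
\[ D\widetilde{\tube}\big|_{h=0} = \begin{pmatrix} 1 & 0 & 0 & 0 \\ \partial_t \psi & \partial_u \psi & \partial_v \psi & \nsigma \end{pmatrix}, \]
whose determinant equals $\det\bigl(\partial_u\psi,\,\partial_v\psi,\,\nsigma\bigr)$, which is nonzero because $\partial_u\psi,\partial_v\psi$ span the tangent plane $T_{\Sigma(t_0)}(x_0)$ and $\nsigma$ is a unit normal to it.

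Second I would apply the inverse function theorem to $\widetilde{\tube}$ at $(t_0,u_0,v_0,0)$, obtaining an open neighborhood on which $\widetilde{\tube}$ is a $\mathcal{C}^1$-diffeomorphism onto its image. Shrinking $U$ (and $\varepsilon$) suitably and transporting back through $\varphi$, one concludes that $\tube$ itself is a $\mathcal{C}^1$-diffeomorphism from $(\move \cap U) \times (-\varepsilon,\varepsilon)$ onto $\mathcal{N}^\varepsilon$. Since $\tube$ preserves the first coordinate, so does its inverse, and writing $\tube^{-1}(t,x) = (t,\Pi(t,x),d(t,x))$ after composing with $\varphi$ on the target side yields the claimed decomposition $\tube^{-1}(t,x) = (\pi(t,x),d(t,x))$ with $\pi(t,x) \in \move$, $d(t,x) \in (-\varepsilon,\varepsilon)$, and $\pi,d \in \mathcal{C}^1(\mathcal{N}^\varepsilon)$.

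The main obstacle I foresee is bookkeeping the mixed regularity $\mathcal{C}^{1,2}$: one must ensure that the parametrization $\varphi$ produced in the appendix is genuinely jointly $\mathcal{C}^1$ in $(t,u,v)$ rather than merely $\mathcal{C}^2$ in the spatial variables at each fixed $t$, so that $\nsigma \circ \varphi$ inherits $\mathcal{C}^1$-regularity and the inverse function theorem applies in the usual form. The rest of the argument is then a routine computation of a block-triangular Jacobian together with the observation that $\nsigma$ is transversal to the tangent plane spanned by $\partial_u\psi$ and $\partial_v\psi$.
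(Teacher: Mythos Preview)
Your proposal is correct and follows essentially the same route as the paper: pull back $\tube$ through a separated local parametrization $\phi(t,u)=(t,\hat\phi(t,u))$ of $\move$ to obtain a $\mathcal{C}^1$-map on an open subset of $\RR^4$, verify that its Jacobian is invertible at $h=0$ (you do this by a direct determinant computation, whereas the paper cites the classical tubular-neighborhood result for $\mathcal{C}^2$-hypersurfaces together with a Banach contraction argument that also yields an explicit bound on $\varepsilon$), and then apply the inverse function theorem. The regularity concern you flag---that the parametrization be jointly $\mathcal{C}^1$ in $(t,u,v)$---is exactly what the appendix lemma on separated local parametrizations supplies.
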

The set $\mathcal{N}^\varepsilon$ is called ``tubular neighborhood'' for $\move$ at the point $(t_0,x_0)$. The function $d$ is the \emph{signed distance} to $\move$ and $\pi$ is the associated \emph{projection operator}. For a fixed hypersurface $\Sigma$, this result is well-known (see, e.g., \cite{Gilbarg.2001},\cite{Pruss.2016}). The above time-dependent result is already stated without details of the proof in  \cite{Kimura.2008}, Lemma 5.12. For completeness, we include a short proof in the Appendix.\\

We now employ Lemma~\ref{lemma:signed_distance_old} and set
\[ v(t,x) := \vsigma(\pi(t,x)) \]
in the tubular neighborhood $\mathcal{N}^\varepsilon$ to construct a local $\mathcal{C}^1$-continuation of $\vsigma$. Note that $v$ generates a local flow map $\tilde{\Phi}$ in an open neighborhood of $(t_0,x_0) \in \RR^4$ by means of \eqref{eqn:flow_ode}. The moving hypersurface $\gr\Sigma$ is \emph{invariant} with respect to $\tilde{\Phi}$ because of the consistency conditions \eqref{eqn:kinematic_conditions_contact_line}. Hence we drop the tilde notation in the following. It is well-known from classical ODE theory that a $\mathcal{C}^1$-right hand side yields a \emph{continuously differentiable} dependence on the initial data. Therefore, we have the following result.

\begin{lemma}[Regularity of the flow map]
\label{lemma:regularity_of_the_flow_map}
Let $x_0 \in \Sigma(t_0)$ and $v \in \mathcal{C}^1(U)$ for an open neighborhood $U$ of $(t_0,x_0) \in \RR^4$. Then $\flowmap{}(\cdot;t_0,\cdot)$ is $\mathcal{C}^1$ on an open neighborhood of $(t_0,x_0) \in \RR^4$.
\end{lemma}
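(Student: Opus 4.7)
The plan is to reduce the statement to the classical theorem from ODE theory that solutions of a $\mathcal{C}^1$-ODE depend continuously differentiably on their initial data. The essential preparation has already been done: by the tubular-neighborhood construction of Lemma~\ref{lemma:signed_distance_old}, the projection $\pi$ and hence the extension
\[ v(t,x) = \vsigma(\pi(t,x)) \]
is $\mathcal{C}^1$ on the open set $\mathcal{N}^\varepsilon \subset \RR^4$ as a composition of $\mathcal{C}^1$-maps.

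First I would recast the defining ODE \eqref{eqn:flow_ode} as an autonomous system $\dot y(s) = F(y(s))$ in $\RR^4$ with $F(y) := (1, v(y))$, initial data $y(0)=y_0$. Since $F \in \mathcal{C}^1(U)$ on an open neighborhood $U \subset \RR^4$ of $(t_0,x_0)$, Picard--Lindelöf yields a unique local solution $y(s;y_0)$ for every $y_0$ in a possibly smaller open neighborhood $U' \subseteq U$, all defined on a common interval $|s|<\delta$.

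Next I would invoke the classical $\mathcal{C}^1$-dependence theorem (see e.g.\ Hartman or Amann), which guarantees that for a $\mathcal{C}^1$-right-hand side the joint map $(s,y_0) \mapsto y(s;y_0)$ is itself $\mathcal{C}^1$ on an open neighborhood of $(0,(t_0,x_0))$ in $\RR \times \RR^4$. Reverting to the original parametrization via the affine change of variables
\[ \flowmap{}(t;t_0,x_0') = y\bigl(t-t_0;(t_0,x_0')\bigr), \]
the $\mathcal{C}^1$-regularity transfers to $\flowmap{}(\cdot;t_0,\cdot)$ as a function of $(t,x_0')$ on an open neighborhood of $(t_0,x_0) \in \RR^4$, which is precisely the claim.

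The only conceptual subtlety would be to ensure that $v$ is $\mathcal{C}^1$ on a \emph{genuinely open} subset of $\RR^4$, rather than merely along $\gr\Sigma$; but that is exactly what the preceding tubular-neighborhood construction provides. Once this is in hand, no further work beyond citing the standard smooth-dependence result is required, so I do not anticipate any technical obstacle in this step.
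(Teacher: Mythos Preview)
Your proposal is correct and matches the paper's approach exactly: the paper does not give a detailed proof of this lemma, but simply states that ``it is well-known from classical ODE theory that a $\mathcal{C}^1$-right hand side yields a \emph{continuously differentiable} dependence on the initial data,'' relying on the preceding tubular-neighborhood extension to place the problem in an open subset of $\RR^4$. Your write-up just fills in the standard details of this citation.
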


\begin{lemma}[Tangent transport]
Under the assumptions of Theorem~\ref{theorem:normal_vector_evolution}, consider an inner point $(t_0,x_0) \in \gr \Sigma$ and a normalized tangent vector $\tau \in \tangentspace{\Sigma(t_0)}(x_0)$. Choose a curve $\gamma^0((-\delta,\delta);\Sigma(t_0))$ such that
\[ \gamma^0(0) = x_0, \quad (\gamma^0)'(0) = \tau. \]
For simplicity let $\norm{(\gamma^0)'}=1$ on $(-\delta,\delta)$. Then the curve is transported by the flow-map according to\footnote{Recall that $\flowmap{}$ has the structure \[ \flowmap{}(t;t_0,x_0)=(t,\flowmap{x}(t;t_0,x_0)).\]}
\begin{align}
\label{eqn:curve_transport}
\gamma(s,t):= \flowmap{x}(t;t_0,\gamma^0(s)). 
\end{align}
Likewise, a time evolution for the (not necessarily normalized) tangent vector is defined by
\begin{align}
\label{eqn:tangent_transport}
\tau(t):= \frac{\partial}{\partial s} \gamma(s,t)\Big|_{s=0}. 
\end{align}
The vector $\tau(t)$ is tangent to $\Sigma(t)$ at the point $\flowmap{x}(t;t_0,x_0)$ since $\gamma^0(\cdot,t) \subset \Sigma(t)$. Moreover, its time derivative is given as
\begin{align}
\label{eqn:tangent_derivative}
\tau'(t_0) = \frac{\partial\vsigma}{\partial \tau(t_0)}(t_0,x_0).
\end{align}
\end{lemma}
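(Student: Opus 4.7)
The plan is to invoke the $\mathcal{C}^1$-extension $v$ of $\vsigma$ from the tubular neighborhood of Lemma~\ref{lemma:signed_distance_old}, which was constructed so that $\gr\Sigma$ remains invariant under the associated flow $\flowmap{}$. By Lemma~\ref{lemma:regularity_of_the_flow_map}, this flow is $\mathcal{C}^1$ jointly in time and initial position on a neighborhood of $(t_0,x_0)$. Consequently, the composition $\gamma(s,t) = \flowmap{x}(t;t_0,\gamma^0(s))$ is $\mathcal{C}^1$ in $(s,t)$ on a neighborhood of $(0,t_0)$, so both $\partial_s\gamma$ and $\partial_t\gamma$ exist and are continuous there. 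In particular $\tau(t) = \partial_s\gamma(0,t)$ is a well-defined continuous curve with $\tau(t_0) = (\gamma^0)'(0) = \tau$.

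The tangency assertion then follows immediately from the invariance of $\gr\Sigma$ under $\flowmap{}$ established in Lemma~\ref{lemma:trajectories_on_moving_hypersurface}: for each admissible $(s,t)$ we have $\gamma(s,t) \in \Sigma(t)$, hence for fixed $t$ the map $s \mapsto \gamma(s,t)$ is a $\mathcal{C}^1$-curve in the submanifold $\Sigma(t)$, and its $s$-derivative at $s=0$ lies in $\tangentspace{\Sigma(t)}(\flowmap{x}(t;t_0,x_0))$.

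For the time derivative I would differentiate the flow equation $\partial_t\gamma(s,t) = v(t,\gamma(s,t))$ with respect to $s$. Since $v \in \mathcal{C}^1$ and $\gamma$ is $\mathcal{C}^1$ in $(s,t)$, the standard parameter-dependence theory for ODEs yields the variational (Jacobi) equation
\[ \partial_t\partial_s\gamma(s,t) = (\nabla_x v)(t,\gamma(s,t))\,\partial_s\gamma(s,t). \]
Evaluating at $(s,t)=(0,t_0)$, using $\gamma(0,t_0)=x_0$ and $\partial_s\gamma(0,t_0)=\tau$, and noting that the extension $v$ coincides with $\vsigma$ on $\gr\Sigma$, the directional derivative $(\nabla_x v)(t_0,x_0)\,\tau$ along the tangential direction $\tau\in\tangentspace{\Sigma(t_0)}(x_0)$ reduces to the intrinsic derivative $\partial_\tau\vsigma(t_0,x_0)$. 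This gives $\tau'(t_0) = \partial_\tau\vsigma(t_0,x_0)$, as claimed.

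The main obstacle is justifying the variational equation rigorously, i.e.\ the interchange of the $s$- and $t$-derivatives of $\gamma$. This is precisely classical $\mathcal{C}^1$-ODE parameter-dependence combined with Lemma~\ref{lemma:regularity_of_the_flow_map}, so no new analytic input is needed; the only point worth emphasizing is that the result is independent of the chosen extension $v$, since tangentiality of $\tau$ to $\Sigma(t_0)$ and the identity $v\equiv\vsigma$ on $\gr\Sigma$ force $(\nabla_x v)(t_0,x_0)\,\tau$ to equal the intrinsic surface directional derivative $\partial_\tau\vsigma(t_0,x_0)$.
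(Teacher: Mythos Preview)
Your proposal is correct and follows essentially the same route as the paper: both arguments reduce to interchanging $\partial_s$ and $\partial_t$ on $\gamma(s,t)=\flowmap{x}(t;t_0,\gamma^0(s))$ and then evaluating the resulting directional derivative at $(0,t_0)$. The only cosmetic difference is that the paper verifies continuity of $\partial_s\partial_t\gamma$ explicitly and invokes Schwarz's theorem, whereas you package the same step as the variational equation from $\mathcal{C}^1$-ODE parameter dependence; likewise, the paper evaluates directly as $\partial_s\vsigma(t_0,\gamma^0(s))|_{s=0}$ on the surface, while you pass through the ambient extension $v$ and then observe that $(\nabla_x v)(t_0,x_0)\,\tau=\partial_\tau\vsigma(t_0,x_0)$ for tangential $\tau$.
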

\begin{proof}
By definition, we have
\[ \tau'(t_0) = \pddt{} \left( \frac{\partial}{\partial s} \flowmap{x}(t;t_0,\gamma^0(s))\Big|_{s=0} \right)\Big|_{t=t_0}. \]
Since $\gamma \in \mathcal{C}^1$ and the second partial derivative
\begin{align*} 
&\frac{\partial}{\partial s} \pddt{} \gamma(s,t) = \frac{\partial}{\partial s} \pddt{}  \flowmap{x}(t;t_0,\gamma^0(s))\\
&= \frac{\partial}{\partial s} \vsigma(t,\flowmap{x}(t;t_0,\gamma^0(s))) \\
&= \nablasigma \vsigma(t,\flowmap{x}(t;t_0,\gamma^0(s))) \cdot \frac{\partial}{\partial s} \flowmap{x}(t;t_0,\gamma^0(s))
\end{align*}
is continuous at $(0,t_0)$, it follows from the Theorem of Schwarz that we can interchange the order of differentiation to obtain
\begin{align*} 
\tau'(t_0) &= \frac{\partial}{\partial s} \left( \frac{\partial}{\partial t} \, \flowmap{x}(t;t_0,\gamma^0(s)) \Big|_{t=t_0} \right)\Big|_{s=0} \\
&= \frac{\partial}{\partial s} \vsigma(t_0,\gamma^0(s))\Big|_{s=0} = \frac{\partial \vsigma}{\partial \tau(t_0)}(t_0,x_0).\qedhere
\end{align*}
\end{proof}

\begin{proof}[\textbf{Proof of Theorem~\ref{theorem:normal_vector_evolution}}]
We choose \emph{two} curves 
\[ \gamma^0_1, \gamma^0_2 \in \mathcal{C}^1((-\delta,\delta);\Sigma(t_0)) \]
such that
\begin{align*}
\gamma^0_1(0) = \gamma^0_2(0) = x_0, \quad (\gamma^0_1)'(0) = \tau_1, \quad (\gamma^0_2)'(0) = \tau_2
\end{align*}
with $|\tau_1| = |\tau_2| = 1$ and $ \nsigma(t_0,x_0) = \tau_1 \times \tau_2.$
The flow map $\flowmap{}$ defines a time-evolution of $\gamma_i$ and of the tangent vectors $\tau_i$ according to \eqref{eqn:curve_transport} and \eqref{eqn:tangent_transport}. As long as $\tau_1$ and $\tau_2$ are linearly independent (i.e., if $\tau_1 \times \tau_2 \neq 0$), it follows that
\[ \nsigma(\flowmap{}(t;t_0,x_0)) = \frac{\tau_1(t) \times \tau_2(t)}{|\tau_1(t) \times \tau_2(t)|} \]
and, in particular,
\begin{align}
\label{eqn:normal_vector_evolution_prelim}
\DSigmaDT{\nsigma}\Big|_{t=t_0} = \ddt{} \, \frac{\tau_1(t) \times \tau_2(t)}{|\tau_1(t) \times \tau_2(t)|}\Big|_{t=t_0}.
\end{align}
Note that the linear independence of $\tau_1(t)$ and $\tau_2(t)$ for $t$ sufficiently close to $t_0$ follows from the initial condition, i.e.
\[ |\tau_1(t_0) \times \tau_2(t_0)| = 1, \]
since $\tau_1(t)$ and $\tau_2(t)$ are continuous. From \eqref{eqn:normal_vector_evolution_prelim} it follows that
\begin{align*} 
\DSigmaDT{} \, \nsigma = &\frac{\tau'_1(t_0) \times \tau_2(t_0) + \tau_1(t_0) \times \tau'_2(t_0)}{|\tau_1(t_0) \times \tau_2(t_0)|}\\
&- \frac{\tau_1(t_0) \times \tau_2(t_0)}{|\tau_1(t_0) \times \tau_2(t_0)|^2} \, \ddt{}|\tau_1(t) \times \tau_2(t)|_{t=t_0}.
\end{align*}
From $\tau_1(t_0) \times \tau_2(t_0) = \nsigma$ and $|\nsigma|=1$ we infer
\begin{align*}
\DSigmaDT{} \, \nsigma &= \tau'_1(t_0) \times \tau_2(t_0) + \tau_1(t_0) \times \tau'_2(t_0)\\
&- \nsigma \, \ddt{} |\tau_1(t) \times \tau_2(t)|\Big|_{t=t_0} \\
&= \tau'_1(t_0) \times \tau_2(t_0) + \tau_1(t_0) \times \tau'_2(t_0)\\
&- \nsigma \, \inproduct{\frac{\tau_1(t) \times \tau_2(t)}{|\tau_1(t) \times \tau_2(t)|}}{\ddt{} (\tau_1(t) \times \tau_2(t))}_{t=t_0} \\
&= \psigma (\tau'_1(t_0) \times \tau_2(t_0) + \tau_1(t_0) \times \tau'_2(t_0)),
\end{align*}
where $\psigma:= \mathds{1} - \inproduct{\nsigma}{\cdot}\nsigma$ denotes the orthogonal projection onto $T_{\Sigma}$. Using \eqref{eqn:tangent_derivative}, we conclude
\begin{align*}
\DSigmaDT{} \, \nsigma = \psigma [(\partial_{\tau_1} \vsigma) \times \tau_2 + \tau_1 \times (\partial_{\tau_2} \vsigma)].
\end{align*}
The claim follows by expanding $\partial_{\tau_1} \vsigma$ and $\partial_{\tau_2} \vsigma$ in the basis $\{\tau_1,\tau_2,\nsigma\}$.
\end{proof}

\begin{proof}[\textbf{Proof of Theorem~\ref{theorem:contact_angle_evolution}}]
We first show that equation \eqref{eqn:normal_vector_evolution} also holds at the contact line. As a result, we obtain the evolution of the contact angle.\newline
\newline
For $(t_0,x_0) \in \gr\Gamma$ we choose a sequence of points $(x^k_0)_k \subset \Sigma(t_0)$ such  that $x^k_0$ converges to $x_0$ and consider the trajectories $x^k(t)$ defined by
\begin{align} 
\ddt{} x^k(t) = \vsigma(t,x^k(t)), \quad x^k(t_0) = x^k_0. 
\end{align}
Moreover, we define the limiting trajectory $x(\cdot)$ starting from $x_0$ and running on $\gr\Gamma$. Since $\gr \Sigma$ is invariant under the flow, the evolution equation \eqref{eqn:normal_vector_evolution} holds along $x^k$ for every $k$.  Since $\nsigma \in \mathcal{C}^1(\gr \overline\Sigma)$, one can choose fields $\tau_1, \, \tau_2 \in \mathcal{C}^1(\gr \overline\Sigma)$ such that $\{\tau_1(t,x), \tau_2(t,x)\}$ is an orthonormal basis to the tangent space of $\Sigma(t)$ at the point $x$ such that
\[ \nsigma(t,x) = \tau_1(t,x) \times \tau_2(t,x) \quad \text{on} \ \gr \overline\Sigma. \]
Hence we obtain by integration
\begin{align*} 
\nsigma(t,x^k(t)) &= \nsigma(t_0,x^k_0) \\
&-\sum_{j=1}^2 \int_{t_0}^t [\inproduct{(\nablasigma \vsigma) \, \tau_j}{\nsigma} \tau_j](s,x^k(s))\, ds.
\end{align*}
It follows from the continuous dependence on the initial data that the trajectories converge pointwise to the limiting trajectory, i.e.
\[ \lim_{k \rightarrow \infty} x^k(t) = x(t) \quad \text{on} \ \Gamma(t). \]
Now we pass to the limit and obtain
\begin{align*} 
\nsigma(t,x(t)) &=  \nsigma(t_0,x_0)\\
&- \sum_{j=1}^2 \int_{t_0}^t [\inproduct{(\nablasigma \vsigma) \, \tau_j}{\nsigma} \tau_j](s,x(s))\, ds.  
\end{align*}
Differentiation with respect to $t$ proves that \eqref{eqn:normal_vector_evolution} also holds at the contact line.\newline
\newline
It follows from the definition of $\theta$ that
\[ \DSigmaDT{} \cos \theta = - \DSigmaDT{} \inproduct{\nsigma}{\ndomega}. \]
Since $\ndomega$ is constant, we obtain
\[ - \sin \theta \, \DSigmaDT{\theta} = - \inproduct{\DSigmaDT{\nsigma}}{\ndomega}. \]
We choose 
\[ \tau_1 = \tau = - \cos \theta \, \ngamma - \sin \theta \, \ndomega, \quad \tau_2=\tgamma \] 
and proceed using equation \eqref{eqn:normal_vector_evolution} to arrive at
\begin{align*} 
\sin \theta \DSigmaDT{\theta} = &- \inproduct{\partial_\tau \vsigma}{\nsigma} \inproduct{\tau}{\ndomega} \\
&- \inproduct{\partial_{\tgamma} \vsigma}{\nsigma} \inproduct{\tgamma}{\ndomega}\\
&= \sin \theta \inproduct{\partial_\tau \vsigma}{\nsigma}. 
\end{align*}
This proves the claim since $\theta \in (0,\pi)$.\qedhere
\end{proof}

\section[Contact angle evolution in the framework of the standard model]{Contact angle evolution in the framework of the\\ standard model}
\label{section:ca_evolution_standard_model}
\subsection{Preliminaries}
\begin{definition}[Regularity]
\label{def:regularity}
In the following, we consider an open interval $I$ and the space of functions
\begin{align}
\velspace := \mathcal{C}(I \times \overline\Omega) \cap \mathcal{C}^1(\gr \overline{\Omega^+}) \cap \mathcal{C}^1(\gr \overline{\Omega^-}).
\end{align}
\end{definition}
Note that we assume in particular that the fluid velocities $v^\pm$ are \emph{differentiable} at the contact line and the viscous stress is locally \emph{bounded}. This is a rather strong assumption in contrast to weak solution concepts which allow an \emph{integrable singularity} in the viscous stress as long as the corresponding dissipation rate is finite.

\begin{lemma}[Kinematic conditions]
\label{lemma:kinematic_conditions}
Let $\gr\overline\Sigma$ be a $\mathcal{C}^{1,2}$-family of moving hypersurfaces with boundary, $\theta \in (0,\pi)$ on $\gr\Gamma$ and 
\[ \vsigma \in \mathcal{C}(\gr\overline\Sigma) \]
satisfy
\begin{align}
\normalspeed = \inproduct{\vsigma}{\nsigma} \quad &\text{on} \quad \gr\Sigma,\label{eqn:kinematic_1}\\
0 = \inproduct{\vsigma}{\ndomega} \quad &\text{on} \quad \gr\Gamma.\label{eqn:kinematic_2}
\end{align}
Then the contact line velocity fulfills the kinematic condition
\begin{align}
\label{eqn:kinematic_3}
\clspeed = \inproduct{\vsigma}{\ngamma} \quad \text{on} \quad \gr\Gamma.
\end{align}
\end{lemma}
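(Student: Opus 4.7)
The plan is to reduce the contact-line identity to an algebraic consequence of the interfacial kinematic condition, combined with the non-penetration condition at the wall and the purely geometric relation \eqref{eqn:vsigma_vgamma}. Fix $(t_0,x_0)\in \gr\Gamma$. Both $V_\Sigma$ and $\inproduct{\vsigma}{\nsigma}$ are continuous on $\gr\overline\Sigma$ (the former because $\gr\overline\Sigma$ is a $\mathcal{C}^{1,2}$-family, the latter by the hypotheses $\vsigma,\nsigma \in \mathcal{C}(\gr\overline\Sigma)$), so the assumption \eqref{eqn:kinematic_1} on $\gr\Sigma$ extends by taking limits from the interior to give $\normalspeed = \inproduct{\vsigma}{\nsigma}$ at $(t_0,x_0)$.

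Next I would insert the expansion $\nsigma = \sin\theta\,\ngamma - \cos\theta\,\ndomega$ from \eqref{eqn:expansion_tau_sigma}, which is valid at the contact line since $0<\theta<\pi$ ensures $\ngamma$ is well-defined. This yields
\begin{align*}
\normalspeed = \sin\theta\,\inproduct{\vsigma}{\ngamma} - \cos\theta\,\inproduct{\vsigma}{\ndomega}.
\end{align*}
The no-penetration hypothesis \eqref{eqn:kinematic_2} eliminates the second term, so $\normalspeed = \sin\theta\,\inproduct{\vsigma}{\ngamma}$ on $\gr\Gamma$.

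On the other hand, the purely geometric identity \eqref{eqn:vsigma_vgamma} reads $\normalspeed = \sin\theta\,\clspeed$ on $\gr\Gamma$; this was established by choosing any $\mathcal{C}^1$-curve on $\gr\Gamma$ and using $\dot{x}^\Gamma\cdot\ndomega = 0$, independently of $\vsigma$. Equating the two expressions for $\normalspeed$ gives $\sin\theta\,\clspeed = \sin\theta\,\inproduct{\vsigma}{\ngamma}$, and since $\theta\in(0,\pi)$ implies $\sin\theta>0$, dividing yields $\clspeed = \inproduct{\vsigma}{\ngamma}$ on $\gr\Gamma$.

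There is no real obstacle in this argument; the only point that deserves care is the extension of \eqref{eqn:kinematic_1} to the boundary of $\overline\Sigma$, which is immediate from the continuity assumed in the statement. The lemma is essentially the observation that two of the three kinematic conditions at the contact line — interfacial transport plus wall impermeability — already force the third, because the partial-wetting assumption $\theta\in(0,\pi)$ makes the change-of-basis from $\{\nsigma,\ndomega\}$ to $\{\ngamma,\ndomega\}$ invertible.
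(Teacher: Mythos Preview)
Your argument is correct and follows essentially the same route as the paper: use the geometric relation \eqref{eqn:vsigma_vgamma}, the expansion of $\nsigma$ in the $\{\ngamma,\ndomega\}$ basis, the impermeability condition \eqref{eqn:kinematic_2}, and divide by $\sin\theta\neq 0$. The only difference is that you make explicit the continuity step extending \eqref{eqn:kinematic_1} to $\gr\Gamma$, which the paper leaves implicit.
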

\begin{proof}
From the relation \eqref{eqn:vsigma_vgamma} it follows that
\begin{align*}
\sin \theta \, \clspeed &= \inproduct{\vsigma}{\nsigma} = \sin \theta \inproduct{\vsigma}{\ngamma} - \cos \theta \inproduct{\vsigma}{\ndomega}\\
&= \sin \theta \inproduct{\vsigma}{\ngamma} \quad \text{on} \quad \gr\Gamma.
\end{align*}
This proves the claim since $\theta \in (0,\pi)$ by assumption.
\end{proof}
Hence \eqref{eqn:kinematic_3} is a consequence of \eqref{eqn:kinematic_1} and \eqref{eqn:kinematic_2} and can be dropped in the problem formulation.

\paragraph{The Continuity Lemma:}
The following Lemma shows an \emph{additional} continuity property for the velocity gradient, which \emph{only} holds at the contact line. Typically, the gradient of the velocity field has a jump, which is controlled by the interfacial transmission conditions.\newline
\newline
Note that we define the gradient of a vector $w$ in Cartesian coordinates as 
\[ (\nabla w)_{i,j} = \frac{\partial w_i}{\partial x_j}.  \]
\begin{lemma}
\label{lemma:continuity_of_grad_v_at_gamma}
Let $\Omega \subset \RR^3$, $0<\theta<\pi$, $v \in \mathcal{C}(\overline{\Omega})$, $\nabla v \, \in \regular(\Omega,\Sigma)$ and
\begin{align*}
\inproduct{v}{\ndomega}=0 \quad \text{on} \ \partial\Omega, \quad \divergence{v} = 0 \quad \text{in} \ \Omega\setminus\Sigma(t),
\end{align*}
where $\partial\Omega$ is the smooth boundary of $\Omega$. Then $\nabla v$ has the following continuity property at the contact line:
\begin{align*}
\jump{\inproduct{\nabla v \, \alpha}{\beta}} = 0 \quad \text{on} \ \Gamma, 
\end{align*}
where $\alpha, \beta$ are arbitrary vectors in the plane spanned by $\ngamma$ and $\ndomega$.
\end{lemma}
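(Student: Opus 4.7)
The plan is to linearise the claim in $(\alpha,\beta)$ and reduce to four scalar jump identities, namely $\jump{\inproduct{\nabla v\,\alpha}{\beta}}=0$ for the four choices $(\alpha,\beta)\in\{\ngamma,\ndomega\}^2$. I will derive these from three structural facts, each an immediate consequence of one hypothesis of the lemma, and combine them in a specific order.

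\textbf{Three ingredients.} First, continuity of $v$ across $\Sigma$ forces every tangential directional derivative of $v$ to be two-sided, giving $\jump{\nabla v\,\tilde\tau}=0$ for every $\tilde\tau\in T_x\Sigma$. To establish this at a contact line point $x\in\Gamma$ I take a $\mathcal{C}^1$ curve $\gamma$ through $x$ lying in $\overline{\Sigma}$ with tangent $\tilde\tau$, note that $v^+\!\circ\gamma=v^-\!\circ\gamma$ pointwise, and differentiate at $s=0$; the hypothesis $\nabla v\in\regular(\Omega,\Sigma)$ guarantees that the one-sided derivatives exist and equal $(\nabla v^\pm)\,\tilde\tau$. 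Since, by \eqref{eqn:expansion_tau_sigma}, the vectors $\tau=-\cos\theta\,\ngamma-\sin\theta\,\ndomega$ and $\tgamma$ span $T_x\Sigma$, this produces a two-parameter family of scalar jump equations. Second, impermeability $\inproduct{v}{\ndomega}=0$ on $\partial\Omega$ together with $\ndomega$ being constant gives $\inproduct{\nabla v\,t}{\ndomega}=0$ on $\partial\Omega$ for every $t\in T\partial\Omega$; taking $t=\ngamma$ and $t=\tgamma$ and passing to the contact line on each side yields identities that hold \emph{separately} on $\overline{\Omega^\pm}$, not merely in jump form. Third, incompressibility $\nabla\cdot v=0$ in each bulk phase, read in the orthonormal frame $\{\ngamma,\ndomega,\tgamma\}$, amounts to $\inproduct{\nabla v\,\ngamma}{\ngamma}+\inproduct{\nabla v\,\ndomega}{\ndomega}+\inproduct{\nabla v\,\tgamma}{\tgamma}=0$ in each phase.

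\textbf{Cascade.} The four target identities now fall in order. The second ingredient immediately gives $\jump{\inproduct{\nabla v\,\ngamma}{\ndomega}}=0$, since both one-sided limits already vanish. Applying the first ingredient with $\tilde\tau=\tau$ and testing against $\ndomega$ yields
\[
-\cos\theta\,\jump{\inproduct{\nabla v\,\ngamma}{\ndomega}}-\sin\theta\,\jump{\inproduct{\nabla v\,\ndomega}{\ndomega}}=0,
\]
from which $\jump{\inproduct{\nabla v\,\ndomega}{\ndomega}}=0$ because $\sin\theta\neq 0$ on $(0,\pi)$. Taking the jump of the third ingredient and using the first with $\tilde\tau=\tgamma$ to annihilate the $\tgamma$-diagonal entry then delivers $\jump{\inproduct{\nabla v\,\ngamma}{\ngamma}}=0$. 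Finally the first ingredient with $\tilde\tau=\tau$, tested against $\ngamma$, closes the remaining off-diagonal entry by an analogous identity in which the $\cos\theta$-term already vanishes.

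The one genuinely subtle point, which I do not view as a real obstacle, is justifying that the pointwise identities derived in the open bulk phases and on the interiors of the wetted/dry components of $\partial\Omega$ extend up to $\Gamma$. This is precisely what $\nabla v\in\regular(\Omega,\Sigma)$ is for: the one-sided gradients $\nabla v^\pm$ admit continuous extensions to $\overline{\Omega^\pm}$, and in particular to $\Gamma\subset\overline{\Omega^+}\cap\overline{\Omega^-}\cap\partial\Omega$, so every pointwise identity derived away from $\Gamma$ passes to the contact line by continuity. No further regularity beyond the stated hypotheses is needed.
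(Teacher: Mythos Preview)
Your proof is correct and follows essentially the same approach as the paper: both use the three ingredients (continuity of tangential derivatives of $v$ across $\Sigma$, the impermeability condition giving $\inproduct{\nabla v\,\ngamma}{\ndomega}=0$, and incompressibility expressed in the orthonormal frame) and combine them to kill the relevant scalar jumps. The only cosmetic difference is that the paper works with the pair $(\tau,\ngamma)$ as a basis for the input direction $\alpha$, reducing the task to $\jump{\nabla v\,\tau}=0$ together with $\jump{\inproduct{\nabla v\,\ngamma}{\ndomega}}=\jump{\inproduct{\nabla v\,\ngamma}{\ngamma}}=0$, whereas you verify all four entries in the $(\ngamma,\ndomega)$ basis directly; the underlying logic and use of $\sin\theta\neq 0$ are identical.
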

\begin{proof}
We consider an arbitrary point on $\Gamma$ and show $\jump{\nabla v \, \tau}=0$ as well as $\jump{\inproduct{\nabla v \, \ngamma}{\ndomega}}=\jump{\inproduct{\nabla v \, \ngamma}{\ngamma}}=0$. This is already sufficient since $\tau$ and $\ngamma$ are linearly independent. Since $v$ is assumed to be continuous across $\Sigma$, the tangential derivatives of $v$ are continuous
 \[ \jump{\nabla v \, \tau} = \jump{\nabla v \, \tgamma} = 0. \]
Since $v$ is tangential to $\partial\Omega$, it follows that
 \[ \inproduct{\nabla v \, \ngamma}{\ndomega} = 0 \quad \Rightarrow \quad \jump{\inproduct{\nabla v \, \ngamma}{\ndomega}} = 0. \]
It remains to show that $\jump{\inproduct{\nabla v \, \ngamma}{\ngamma}} = 0$. Since $v$ is solenoidal, we have
 \[ 0 = \divergence{v} = \inproduct{\nabla v \, \ngamma}{\ngamma} + \inproduct{\nabla v \, \ndomega}{\ndomega} + \inproduct{\nabla v \, \tgamma}{\tgamma}. \]
 Therefore, we can write
 \begin{align*}
 \jump{\inproduct{\nabla v \, \ngamma}{\ngamma}} &= -(\jump{\inproduct{\nabla v \, \ndomega}{\ndomega}}+\jump{\inproduct{\nabla v \, \tgamma}{\tgamma}})\\
 &= - \jump{\inproduct{\nabla v \, \ndomega}{\ndomega}}.
 \end{align*}
 From $\tau = -\cos \theta \, \ngamma - \sin \theta \, \ndomega$ we infer (since $0 < \theta < \pi$)
 \[ \ndomega = -\frac{1}{\sin \theta} \left( \cos \theta \, \ngamma + \tau \right). \]
 This yields
 \begin{align*}
 &\jump{\inproduct{\nabla v \, \ngamma}{\ngamma}} = \frac{1}{\sin \theta} (\cos \theta \jump{\inproduct{\nabla v \, \ngamma}{\ndomega}} \\
 &+ \jump{\inproduct{\nabla v \, \tau}{\ndomega}}) = \frac{\jump{\inproduct{\nabla v \, \tau}{\ndomega}}}{\sin \theta} \\
 &= \frac{\inproduct{\jump{\nabla v \, \tau}}{\ndomega}}{\sin \theta} = 0.\qedhere
 \end{align*}
\end{proof}
Note that in the 2D case the \emph{full} gradient of $v$ is continuous across $\Gamma$. 

\paragraph{On the Navier boundary condition:}
We reconsider the Navier condition \eqref{eqn:navier_condition_a}. By taking the projection onto $\ngamma$ we have
\[ a^\pm \inproduct{v^\pm}{\ngamma} + 2 \inproduct{D^\pm \ndomega}{\ngamma} = 0. \]
If $v$ satisfies the kinematic conditions $v^\pm \cdot \ngamma = \clspeed$, we obtain the jump condition
\begin{align}
\inproduct{\jump{D}\ndomega}{\ngamma}_{|\Gamma} = - \frac{\jump{a} \clspeed}{2}.
\end{align}
Under the assumptions of Lemma~\ref{lemma:continuity_of_grad_v_at_gamma}, we have $\inproduct{\jump{D}\ndomega}{\ngamma}_{|\Gamma}=0$ and hence 
\begin{align}
\label{eqn:jump_condition_slip_length}
\jump{a} \clspeed = 0.
\end{align}
Hence, to allow for a regular solution with $\clspeed \neq 0$, one has to choose $a$ as a continuous function across the contact line, i.e.
  \begin{align*}
  \frac{\lambda^+}{\visc^+}_{|\Gamma} = \frac{\lambda^-}{\visc^-}_{|\Gamma} = \il_{|\Gamma}.  
  \end{align*}
  In this case, we have the relations
  \begin{align}
  \label{eqn:jump_of_lambda}
  \jump{\lambda} = \il \jump{\visc}
  \end{align}
  and 
  \begin{align}
  \label{eqn:navier_gradient_formula}
  2\inproduct{D \ndomega}{\ngamma}_{|\Gamma}=\inproduct{\nabla v \, \ndomega}{\ngamma}_{|\Gamma} = - a \clspeed.
  \end{align}
\subsection{Contact angle evolution}
The following Theorem shows that, for sufficiently regular solutions, $\dot{\theta}$ has a quite simple form  for a large class of models. Note that the equations \eqref{eqn:pde-system-1}-\eqref{eqn:pde-system-4} say nothing about external forces, do not specify the contact angle and the slip length may be a function of space and time. Moreover, we only need the \emph{tangential part} of the transmission condition for the stress. In this sense, the system \eqref{eqn:pde-system-1}-\eqref{eqn:pde-system-4} is not closed but describes a \emph{class} of models.\\

The main idea for the proof is the observation that both the Navier and the interfacial transmission condition are valid at the contact line. A regular classical solution has to satisfy both of them.
\begin{theorem}[]
\label{theorem:ca_evolution_in_standard_model}
Let $\Omega\subset\RR^3$ (or $\Omega\subset\RR^2$) be a half-space with boundary $\partial\Omega$, $\sigma \equiv \text{const}$, $\visc^\pm >0$, $\jump{\visc}\neq 0$, $\il \in \mathcal{C}(\gr\partial\Omega)$ and $(v,\gr\overline\Sigma)$ with $v \in \velspace$, $\gr\overline\Sigma$ a $\mathcal{C}^{1,2}$-family of moving hypersurfaces with boundary, be a classical solution of the PDE-system
\begin{align}
\nabla \cdot v = 0 \quad &\text{in} \ \Omega\setminus\Sigma(t),\label{eqn:pde-system-1} \\
\jump{v} = 0, \quad \mathcal{P}_\Sigma \jump{S} \, \nsigma = 0 \quad &\text{on} \ \Sigma(t), \label{eqn:transmission_condition}\\
\inproduct{v}{\ndomega}=0 \quad &\text{on} \ \partial\Omega\setminus\Gamma(t),\label{eqn:navier_condition_part2}\\
\il \, \pdomega v + 2 \pdomega D \ndomega = 0 \quad &\text{on} \ \partial\Omega\setminus\Gamma(t), \label{eqn:navier_condition}\\
\normalspeed = \inproduct{v}{\nsigma} \quad &\text{on} \ \Sigma(t) \label{eqn:pde-system-4}
\end{align}
with $\theta \in (0,\pi)$ on $\gr\Gamma$. Then the evolution of the contact angle is given by
\begin{align}
\label{eqn:evolution_equation_standard_model}
\frac{D \theta}{D t} = \frac{\il\clspeed}{2} = \frac{\clspeed}{2L}.
\end{align}
Moreover, in the case $\theta = \pi/2$ it holds that
\begin{align}
\label{eqn:pi_over_2_condition}
(\il\clspeed)_{|\theta=\pi/2} = 0, 
\end{align}
which also means that
\[ \DDT{\theta}_{|\theta=\pi/2} = 0. \]
\end{theorem}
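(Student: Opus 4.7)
The plan is to apply Theorem~\ref{theorem:contact_angle_evolution} with the interfacial velocity $\vsigma$ taken to be the restriction of the fluid velocity $v$ to $\gr\overline\Sigma$. This is legitimate: the continuity $\jump{v}=0$ on $\Sigma(t)$ makes $\vsigma$ unambiguous, the regularity $v \in \velspace$ supplies the required $\mathcal{C}^1$-smoothness on $\gr\overline\Sigma$, and the consistency $\normalspeed = \inproduct{\vsigma}{\nsigma}$, $\clspeed = \inproduct{\vsigma}{\ngamma}$ follows from \eqref{eqn:pde-system-4}, \eqref{eqn:navier_condition_part2} and Lemma~\ref{lemma:kinematic_conditions}. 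Since the trajectories of $v$ on $\gr\overline\Sigma$ coincide with those of the $\DDT{}$-derivative, the theorem yields $\DDT{\theta} = \inproduct{\partial_\tau v}{\nsigma}$ on $\gr\Gamma$, reducing the task to computing the right-hand side.

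Expanding $\tau = -\cos\theta\,\ngamma - \sin\theta\,\ndomega$ and $\nsigma = \sin\theta\,\ngamma - \cos\theta\,\ndomega$ in the orthonormal basis $\{\ngamma,\ndomega,\tgamma\}$, the gradient entries in the $\{\ngamma,\ndomega\}$-plane are the only ones that appear. Two of them can be read off the boundary data: differentiating $v\cdot\ndomega=0$ tangentially along $\partial\Omega$ gives $\inproduct{\partial_\ngamma v}{\ndomega}=0$, while projecting the Navier condition \eqref{eqn:navier_condition} onto $\ngamma$ (using the compatibility $\jump{\il}\clspeed=0$ from \eqref{eqn:jump_condition_slip_length} and relation \eqref{eqn:navier_gradient_formula}) gives $\inproduct{\partial_\ndomega v}{\ngamma}=-\il\clspeed$ at $\Gamma$. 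Substituting yields
\[ \inproduct{\partial_\tau v}{\nsigma} = \il\clspeed\sin^2\theta - \sin\theta\cos\theta\,\bigl(\inproduct{\partial_\ngamma v}{\ngamma}-\inproduct{\partial_\ndomega v}{\ndomega}\bigr). \]

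The main obstacle is eliminating the unknown strain difference $\inproduct{\partial_\ngamma v}{\ngamma}-\inproduct{\partial_\ndomega v}{\ndomega}$. This will come from the tangential transmission condition $\mathcal{P}_\Sigma\jump{S}\nsigma=0$ in direction $\tau$. Writing $S=\visc(\nabla v+\nabla v^\transpose)$, the quantity $\inproduct{\tau}{\partial_\nsigma v}+\inproduct{\nsigma}{\partial_\tau v}$ involves only gradient entries in the $\{\ngamma,\ndomega\}$-plane, and Lemma~\ref{lemma:continuity_of_grad_v_at_gamma} ensures all these entries are continuous across $\Gamma$. Consequently the factor $\jump{\visc}\neq 0$ can be pulled out of the jump, forcing $\inproduct{\tau}{\partial_\nsigma v}+\inproduct{\nsigma}{\partial_\tau v}=0$ at $\Gamma$. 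An analogous expansion of this scalar in the local basis gives, after simplification,
\[ \sin(2\theta)\,\bigl(\inproduct{\partial_\ngamma v}{\ngamma}-\inproduct{\partial_\ndomega v}{\ndomega}\bigr) = -\il\clspeed\,\cos(2\theta). \]

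For $\theta\neq\pi/2$ I divide by $\sin(2\theta)$, substitute back, and use $\tfrac{1}{2}\cos(2\theta)+\sin^2\theta = \tfrac{1}{2}$ to collapse the expression to $\DDT{\theta} = \il\clspeed/2$, as claimed. At $\theta=\pi/2$ the prefactor $\sin(2\theta)$ vanishes while $\cos(2\theta)=-1$, so the same relation forces $\il\clspeed = 0$; this is exactly \eqref{eqn:pi_over_2_condition}, and the displayed formula for $\inproduct{\partial_\tau v}{\nsigma}$ then reduces to $\il\clspeed = 0$, so both sides of \eqref{eqn:evolution_equation_standard_model} vanish simultaneously and the unified statement still holds.
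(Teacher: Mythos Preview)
Your proof is correct and follows essentially the same route as the paper: apply Theorem~\ref{theorem:contact_angle_evolution}, expand in the basis $\{\ngamma,\ndomega\}$, use impermeability and the Navier condition to evaluate two of the gradient entries, and then use the tangential transmission condition together with Lemma~\ref{lemma:continuity_of_grad_v_at_gamma} and $\jump{\visc}\neq 0$ to eliminate the remaining strain difference. The only cosmetic difference is that the paper substitutes the relation $\sin\theta\cos\theta\,(\inproduct{\nabla v\,\ndomega}{\ndomega}-\inproduct{\nabla v\,\ngamma}{\ngamma}) = -\tfrac{1}{2}(\sin^2\theta-\cos^2\theta)\,\il\clspeed$ directly, avoiding the division by $\sin(2\theta)$ and hence the case split at $\theta=\pi/2$ for the main identity.
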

\begin{proof} 
Since $v$ is continuous and $v^\pm \in \mathcal{C}^1(\overline{\gr\Omega^\pm})$, we can choose $\vsigma := v^+_{|\gr \overline\Sigma} = v^-_{|\gr \overline\Sigma} \in \mathcal{C}^1(\gr \overline\Sigma)$ and apply Theorem~\ref{theorem:contact_angle_evolution} to obtain
\begin{align}
\label{eqn:theta_evolution_tau_nsigma}
\DDT{\theta} = \inproduct{(\nabla v)^\pm \tau}{\nsigma}. 
\end{align}
Recall that the vectors $\tau$ and $\nsigma$ can be expressed as
\begin{equation}
\label{eqn:tau_nsigma}
\begin{aligned}
\tau &= - \ngamma \cos \theta - \ndomega \sin \theta,\\
\nsigma &= \ngamma \sin \theta  - \ndomega \cos \theta.  
\end{aligned}
\end{equation}
Inserting \eqref{eqn:tau_nsigma} into equation \eqref{eqn:theta_evolution_tau_nsigma} yields
\begin{align*}
\DDT{\theta} = \cos^2 \theta \inproduct{(\nabla v)^\pm \ngamma}{\ndomega}   - \sin^2 \theta \inproduct{(\nabla v)^\pm \ndomega}{\ngamma}\\
+ \sin \theta \cos \theta \left( \inproduct{(\nabla v)^\pm \ndomega}{\ndomega} - \inproduct{(\nabla v)^\pm \ngamma}{\ngamma} \right).
\end{align*}
Notice that the impermeability condition implies that
\[ \inproduct{(\nabla v)^\pm \ngamma}{\ndomega} = 0. \]
Moreover, Lemma~\ref{lemma:continuity_of_grad_v_at_gamma} allows to drop the $\pm$-notation. We now  exploit that both the jump condition and the Navier condition are active at the contact line. Using the relation \eqref{eqn:navier_gradient_formula} it follows from the Navier condition
\begin{align}
\label{eqn:3d_evolution_equation}
\DDT{\theta} = &\sin(\theta)^2 \, \il\clspeed\nonumber\\
&+ \sin \theta \cos \theta \left( \inproduct{\nabla v \, \ndomega}{\ndomega} - \inproduct{\nabla v \, \ngamma}{\ngamma} \right).
\end{align}
Since $\tau$ is tangential to $\Sigma$, it follows from the continuity of the tangential stress component \eqref{eqn:transmission_condition} that
 \begin{align*}
 \inproduct{\jump{S} \nsigma}{\tau} = 0.
 \end{align*}
 Using Lemma~\ref{lemma:continuity_of_grad_v_at_gamma}, we can exploit the continuity property of $\nabla v$ at the contact line to obtain
 \[ 0 = 2 \jump{\visc} \inproduct{D\,\nsigma}{\tau} \quad \Leftrightarrow \quad 0 = \inproduct{D\,\nsigma}{\tau}. \]
 Together with the expansions \eqref{eqn:tau_nsigma} for $\nsigma$ and $\tau$ we obtain
 \begin{align}
 \label{eqn:continuity_of_tangential_stress}
 0 &= \sin \theta \cos \theta \left(- \inproduct{D\ngamma}{\ngamma} + \inproduct{D\ndomega}{\ndomega} \right) \nonumber \\
   &+ (\cos^2 \theta - \sin^2 \theta) \inproduct{D\ndomega}{\ngamma}.
 \end{align}
Using the Navier condition, we can replace the last term to find
  \begin{align}
  \label{eqn:3d_evolution_intermediate_equation}
  0 &= \sin \theta \cos \theta \left(- \inproduct{\nabla v \, \ngamma}{\ngamma} + \inproduct{\nabla v \, \ndomega}{\ndomega} \right) \nonumber\\
    & + (\sin^2 \theta - \cos^2 \theta) \frac{\il \clspeed}{2} \quad \text{on} \ \Gamma.
  \end{align}
  Note that for $\theta = \pi/2$ this reduces to
  \begin{align*}
  \il \clspeed = 0.
  \end{align*}
  The claim follows by inserting equation \eqref{eqn:3d_evolution_intermediate_equation} into the contact angle evolution equation \eqref{eqn:3d_evolution_equation}:
  \begin{align*}
  \DDT{\theta} = &\sin^2(\theta) \, \il\clspeed\\
  &+ \sin \theta \cos \theta (\inproduct{\nabla v \, \ndomega}{\ndomega}-\inproduct{\nabla v \, \ngamma}{\ngamma}) \\
  = &\il \clspeed \left(\sin^2 \theta - \frac{1}{2}(\sin^2 \theta - \cos^2 \theta) \right) = \frac{a \clspeed}{2}.\qedhere 
  \end{align*}
\end{proof}
Note that the incompressibility condition \eqref{eqn:pde-system-1} can be dropped leading to (see Theorem~\ref{theorem:interfacial_slip})
\[ \DDT{\theta} = \frac{\jump{\lambda}}{\jump{\eta}} \frac{\clspeed}{2}. \]

\begin{remark}[Free Boundary Problem]
Following the proof of Theorem~\ref{theorem:ca_evolution_in_standard_model}, it is easy to show that \eqref{eqn:evolution_equation_standard_model} also holds for a \emph{free boundary formulation}, where the Navier-Stokes equations are only solved in the liquid domain. The outer phase is represented just by a constant pressure field $p_0$ and the jump conditions \eqref{eqn:transmission_condition} are replaced by
\[ (p_0-p + S) \, \nsigma = \sigma \kappa \nsigma \quad \text{on} \quad \Sigma(t). \]
In particular, the viscous stress component $\inproduct{S \nsigma}{\tau}$ vanishes and Lemma~\ref{lemma:continuity_of_grad_v_at_gamma} is not required for the proof.
\end{remark}

\begin{corollary}
\label{corollary:stationarity}
Under the assumptions of Theorem~\ref{theorem:ca_evolution_in_standard_model}, a quasi-stationary solution, i.e. a solution with constant contact angle, satisfies
\begin{align}
\il \clspeed = 0 \quad \text{on} \quad \Gamma,
\end{align}
which means that either the contact line is at rest or $\il = 0$.
\end{corollary}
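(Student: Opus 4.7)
The plan is to derive the corollary as an immediate consequence of Theorem \ref{theorem:ca_evolution_in_standard_model} by inserting the quasi-stationarity hypothesis into the contact angle evolution equation \eqref{eqn:evolution_equation_standard_model}. Since a quasi-stationary solution has a contact angle that is constant along the trajectories on $\gr\Gamma$, the Lagrangian derivative vanishes, i.e.\ $\frac{D\theta}{Dt} = 0$ pointwise on $\gr\Gamma$.

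First I would verify that the hypotheses of Theorem \ref{theorem:ca_evolution_in_standard_model} are inherited by the corollary (constant $\sigma$, positive viscosities with nontrivial jump, continuous inverse slip length, regularity $v\in\velspace$, and partial wetting $\theta\in(0,\pi)$), so that the identity $\frac{D\theta}{Dt} = \frac{\il\clspeed}{2}$ holds on $\Gamma$. Then substituting $\frac{D\theta}{Dt}=0$ yields $\il\clspeed = 0$ on $\Gamma$, which is the claimed identity. The dichotomy in the statement follows by pointwise case distinction: at any point of $\Gamma$ either $\il = 0$ or $\clspeed = 0$, i.e.\ the contact line is locally at rest.

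There is no real obstacle here, since the corollary is essentially a direct specialization. The only subtle point worth addressing explicitly is that ``constant contact angle'' must be interpreted in the Lagrangian sense along the flow $\Phi$ generated by $\vsigma$ on $\gr\overline\Sigma$ (as defined in Lemma \ref{lemma:trajectories_on_moving_hypersurface}), rather than in some Eulerian sense at a fixed spatial point; this is precisely the notion for which $\frac{D\theta}{Dt}$ vanishes and for which Theorem \ref{theorem:ca_evolution_in_standard_model} is formulated. A brief remark clarifying this interpretation, followed by the one-line substitution, should suffice to complete the proof.
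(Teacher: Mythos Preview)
Your proposal is correct and matches the paper's approach: the corollary is stated without a separate proof precisely because it follows immediately from substituting $\frac{D\theta}{Dt}=0$ into \eqref{eqn:evolution_equation_standard_model}. Your remark on the Lagrangian interpretation of ``constant contact angle'' is a reasonable clarification but not strictly needed.
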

Consequently, a regular, non-trivial, quasi-stationary solution only exists if $\il$ vanishes at the contact line, i.e. in the free-slip case. On the other hand free-slip at the contact line implies that the contact angle is fixed for all regular solutions. This result confirms the observation from \cite{Schweizer.2001}, where it is stated that for a regular solution with $\il \in (0, \infty)$ and $\theta \equiv \pi/2$ ``the point of contact does not move''.

\begin{corollary}
Let $\il \geq 0$ and $(v,\gr\overline\Sigma)$ be a regular solution in the setting of Theorem~\ref{theorem:ca_evolution_in_standard_model} that satisfies the thermodynamic condition \eqref{eqn:advancing_receding_condition}. Then \eqref{eqn:evolution_equation_standard_model} implies
\begin{align*}
\dot{\theta} \geq 0 \quad \text{for} \quad \theta \geq \thetaeq \quad \text{and} \quad \dot{\theta} \leq 0 \quad \text{for} \quad \theta \leq \thetaeq.
\end{align*}
\end{corollary}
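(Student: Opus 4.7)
The plan is to combine the explicit form of the contact angle evolution supplied by Theorem~\ref{theorem:ca_evolution_in_standard_model} with the sign structure imposed by the thermodynamic closure relation \eqref{eqn:advancing_receding_condition}. Since Theorem~\ref{theorem:ca_evolution_in_standard_model} gives
\[ \DDT{\theta} = \frac{\il\clspeed}{2} \quad \text{on} \ \Gamma, \]
and the hypothesis $\il \geq 0$ forces the sign of $\dot\theta$ to be determined by $\clspeed$, the whole question reduces to a case distinction on $\theta - \thetaeq$.

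First I would fix an arbitrary point on $\gr\Gamma$ and apply \eqref{eqn:evolution_equation_standard_model}, so that the problem becomes one about the sign of the product $\il\clspeed$. Then I would split into two cases. In the case $\theta \geq \thetaeq$, the inequality $\clspeed(\theta-\thetaeq) \geq 0$ from \eqref{eqn:advancing_receding_condition} yields $\clspeed \geq 0$ whenever $\theta > \thetaeq$, and at the equilibrium value $\theta = \thetaeq$ the conclusion $\dot\theta \geq 0$ holds trivially since the sign assertion is non-strict. Combining with $\il \geq 0$ gives $\dot\theta = \il\clspeed/2 \geq 0$. The case $\theta \leq \thetaeq$ is completely symmetric and yields $\clspeed \leq 0$, hence $\dot\theta \leq 0$.

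The argument is an immediate corollary rather than an independent result, so I do not anticipate any genuine obstacle: no regularity issue arises because Theorem~\ref{theorem:ca_evolution_in_standard_model} has already been invoked to produce the pointwise identity on $\Gamma$, and the thermodynamic condition is assumed to hold by hypothesis. The only minor point worth stating explicitly is that the conclusion is purely pointwise on $\gr\Gamma$; no integration or limiting procedure is needed, and the degenerate subcases $\il = 0$ or $\clspeed = 0$ are covered automatically since both sides of the asserted inequalities vanish.
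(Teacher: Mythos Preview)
Your argument is correct and matches the paper's reasoning; the paper in fact states this corollary without proof, treating it as an immediate consequence of \eqref{eqn:evolution_equation_standard_model} together with \eqref{eqn:advancing_receding_condition}, exactly as you do.

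One small slip: your justification at the boundary case $\theta = \thetaeq$ is not quite right. Saying the conclusion ``holds trivially since the sign assertion is non-strict'' is not an argument---a non-strict inequality still has content. At $\theta = \thetaeq$ the condition \eqref{eqn:advancing_receding_condition} degenerates to $0 \geq 0$ and places no constraint on $\clspeed$, so \eqref{eqn:evolution_equation_standard_model} alone does not force the sign of $\dot\theta$ there. The corollary (read literally) asserts both $\dot\theta \geq 0$ and $\dot\theta \leq 0$ at $\theta = \thetaeq$, i.e.\ $\dot\theta = 0$, which really requires the stronger closure $\ftheta(0)=\thetaeq$ (or $\gtheta(\thetaeq)=0$) from Remark~\ref{remark:advancing_for_large_ca} rather than the bare inequality \eqref{eqn:advancing_receding_condition}. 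This is a harmless imprecision shared by the paper's own statement; your argument for $\theta \neq \thetaeq$, which is the physically relevant case, is complete.
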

From this result, it follows that the system cannot evolve towards equilibrium with a regular solution in the setting of Theorem~\ref{theorem:ca_evolution_in_standard_model}.

\begin{corollary}
Let $\il \geq 0$ and $\{v,\gr\overline\Sigma\}$ be a regular classical solution of the PDE-system \eqref{eqn:pde-system-1}-\eqref{eqn:pde-system-4} in the setting of Theorem~\ref{theorem:ca_evolution_in_standard_model} that satisfies \eqref{eqn:advancing_receding_condition}. Let the initial condition be such that
\[ \theta(0,x) > \thetaeq \quad \forall \ x \in \Gamma(0), \]
where $\Gamma(0)=\partial\Sigma(0)$ is assumed to be bounded. Then it follows that
\[ \theta(t,x) \geq \min_{x' \in \Gamma(0)} \theta(0,x') > \thetaeq \]
for all $t \in I \cap [0,\infty)$ and $x \in \Gamma(t)$. That means that the system cannot relax to the equilibrium contact angle.
\end{corollary}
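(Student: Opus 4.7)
The plan is to reduce the claim to a one-dimensional comparison argument along trajectories on the contact line. Combining the evolution equation $\DDT{\theta}=\il\clspeed/2$ from Theorem~\ref{theorem:ca_evolution_in_standard_model} with $\il\geq 0$ and the thermodynamic condition~\eqref{eqn:advancing_receding_condition}, one obtains the key monotonicity property
\[ \DDT{\theta}(s,x)\geq 0 \quad \text{whenever} \quad \theta(s,x)\geq\thetaeq, \ x\in\Gamma(s). \]

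First I would fix a point $(t,x)\in\gr\Gamma$ with $t\geq 0$ and apply Lemma~\ref{lemma:trajectories_on_moving_hypersurface} with the consistent velocity $\vsigma := v_{|\gr\overline{\Sigma}}\in\mathcal{C}^1(\gr\overline{\Sigma})$. The proof of that lemma shows that \emph{both} $\pm(1,\vsigma)$ are subtangential to $\gr\Gamma$, so the integral curve through $(t,x)$ can be extended \emph{backward} in time. Concatenating the local solutions of Theorem~\ref{thm:solutions_in_closed_sets} produces a $\mathcal{C}^1$-curve $\gamma:[0,t]\rightarrow\RR^3$ with $\gamma(s)\in\Gamma(s)$, $\gamma(t)=x$, and therefore $\gamma(0)\in\Gamma(0)$. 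Along this curve the function $\tilde\theta(s):=\theta(s,\gamma(s))$ is continuously differentiable with $\tilde\theta'(s)=(\il\clspeed/2)(s,\gamma(s))$.

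Set $\theta_{\min}:=\min_{x'\in\Gamma(0)}\theta(0,x')>\thetaeq$; the minimum is attained because $\theta(0,\cdot)$ is continuous on the compact set $\Gamma(0)$. Then $\tilde\theta(0)\geq\theta_{\min}$, and a standard barrier argument yields $\tilde\theta(s)\geq\theta_{\min}$ on $[0,t]$: suppose, for contradiction, that $s^*:=\inf\{s\in[0,t]:\tilde\theta(s)<\theta_{\min}\}$ is finite. By continuity of $\tilde\theta$ together with $\tilde\theta(0)\geq\theta_{\min}$ one gets $\tilde\theta(s^*)=\theta_{\min}>\thetaeq$; then continuity ensures $\tilde\theta\geq\thetaeq$ on a right-neighborhood $[s^*,s^*+\delta]$, on which the monotonicity property forces $\tilde\theta'\geq 0$, hence $\tilde\theta\geq\theta_{\min}$, contradicting the definition of $s^*$. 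Evaluating at $s=t$ gives $\theta(t,x)=\tilde\theta(t)\geq\theta_{\min}$, which is the claim.

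The main technical obstacle is establishing that the backward trajectory exists on the \emph{entire} interval $[0,t]$, since Lemma~\ref{lemma:trajectories_on_moving_hypersurface} only furnishes local existence. Under the standing regularity assumptions of Theorem~\ref{theorem:ca_evolution_in_standard_model}, the field $\vsigma$ is $\mathcal{C}^1$ and hence locally Lipschitz on $\gr\overline\Sigma$; since the trajectory is confined to the closed invariant set $\gr\Gamma$ and $\Gamma(s)$ is bounded for each $s$ (inherited from the bounded $\Gamma(0)$ via the forward flow), the standard maximal-interval-of-existence argument rules out finite-time blow-up and delivers the required extension to $[0,t]$. This is the only point in the argument that requires care beyond the pointwise formula~\eqref{eqn:evolution_equation_standard_model}.
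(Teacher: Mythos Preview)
Your proof is correct and follows the same strategy as the paper: trace a trajectory on $\gr\Gamma$ backward from $(t,x)$ to a point in $\Gamma(0)$ and use \eqref{eqn:evolution_equation_standard_model} along the path. The paper integrates directly and simply asserts $\frac{\il\clspeed}{2}\geq 0$ under the integral; your barrier argument is a more careful way to close the implicit bootstrap (one needs $\theta\geq\thetaeq$ along the path before \eqref{eqn:advancing_receding_condition} yields $\clspeed\geq 0$), and the paper likewise does not spell out the global backward-in-time existence that you flag.
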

\begin{proof}
Consider $t \in I \cap [0,\infty)$ and $x_t \in \Gamma(t)$ arbitrary. Then there exists an $x_0 \in \Gamma(0)$ such that the unique solution $x(s)$ of the initial value problem
\begin{align}
\label{eqn:trajectory_ode}
x'(s) = v(s,x(s)), \ x(0)=x_0 \in \Gamma(0) 
\end{align}
satisfies $x(t)=x_t$. The point $x_0$ can be found by solving \eqref{eqn:trajectory_ode} backwards in time. By integration of \eqref{eqn:evolution_equation_standard_model} we conclude
\begin{align*}
\theta(t,x(t))&=\theta(t,x_t)=\theta(0,x(0)) + \int_0^t \frac{d}{ds} \, \theta(s,x(s)) \, ds \\
&= \theta(0,x_0) + \int_0^t \underbrace{\frac{\il \clspeed}{2}}_{\geq 0} (s,x(s)) \, ds \\
&\geq \theta(0,x_0) \geq \min_{x' \in \Gamma(0)} \theta(0,x') > \thetaeq.\qedhere
\end{align*}
\end{proof}

\subsection{Empirical contact angle models}
% Dynamic CA models
The literature contains a large variety of empirical contact angle models which prescribe the dynamic contact angle. For the simplest class of these models, it is assumed that $\theta$ can be described by a relation of the type\footnote{Note that there are also (numerical) models \cite{Afkhami.2009b}, based on the analysis of Cox \cite{Cox.1986}, which try to prescribe the \emph{apparent} contact angle rather than the \emph{actual} contact angle defined by \eqref{eqn:theta_definition}.}
\begin{align}
\theta = \ftheta(Ca, \thetaeq),
\end{align}
where $\thetaeq$ is the \emph{equilibrium} contact angle given by the Young equation \eqref{eqn:young}. The capillary number is defined as
\begin{align*}
Ca := \frac{\visc^- \, \clspeed}{\sigma}.
\end{align*}
Hence for a given system, $\ftheta$ is a function of the contact line velocity $\clspeed$, i.e.
\begin{align}
\label{eqn:empirical_model_v1}
\theta = \ftheta(\clspeed).
\end{align}
If this relation is invertible, one can also write ($\gtheta:=\ftheta^{-1}$)
\begin{align}
\label{eqn:empirical_model_v2}
\clspeed = \gtheta(\theta).
\end{align}
The following Corollary is an immediate consequence of this modeling.
\begin{corollary}
\label{corollary:evolution_equation_empirical_model}
Consider the model described in Theorem~\ref{theorem:ca_evolution_in_standard_model} together with the dynamic contact angle model \eqref{eqn:empirical_model_v1}. Let $\ftheta \in \mathcal{C}^1(\mathds{R})$. Then, for regular solutions in the sense of Theorem~\ref{theorem:ca_evolution_in_standard_model}, the contact line velocity obeys the evolution equation
\begin{align}
\label{eqn:ca_ode_1}
\ftheta'(\clspeed) \, \DDT{}\, \clspeed = \frac{\il\clspeed}{2}.
\end{align}
If the model from Theorem~\ref{theorem:ca_evolution_in_standard_model} is equipped with the contact angle model \eqref{eqn:empirical_model_v2} with $\gtheta \in \mathcal{C}^1(0,\pi)$, the contact angle for regular solutions in the sense of Theorem~\ref{theorem:ca_evolution_in_standard_model} follows the evolution equation
\begin{align}
\label{eqn:ca_ode_2}
\DDT{\theta} = \frac{\il \, \gtheta(\theta)}{2}. 
\end{align}
\end{corollary}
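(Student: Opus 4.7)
The plan is to combine Theorem~\ref{theorem:ca_evolution_in_standard_model} with the imposed constitutive law via the chain rule along Lagrangian trajectories on $\gr\Gamma$. Theorem~\ref{theorem:ca_evolution_in_standard_model} already delivers the identity $\DDT{\theta} = \il\clspeed/2$ pointwise on $\gr\Gamma$ for any regular solution satisfying its hypotheses, so the work reduces to rewriting one side of this identity using \eqref{eqn:empirical_model_v1} or \eqref{eqn:empirical_model_v2}.

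For the first assertion I would fix a trajectory on $\gr\Gamma$ generated by the interfacial velocity $\vsigma := v_{|\gr\overline\Sigma}$ (whose existence and regularity is supplied by Lemma~\ref{lemma:trajectories_on_moving_hypersurface}), and evaluate $\theta$ and $\clspeed$ along it. Because the constitutive law $\theta = \ftheta(\clspeed)$ holds identically along the trajectory and $\ftheta \in \mathcal{C}^1(\RR)$, the standard chain rule gives $\DDT{\theta} = \ftheta'(\clspeed)\,\DDT{\clspeed}$. Equating this with the conclusion of Theorem~\ref{theorem:ca_evolution_in_standard_model} produces \eqref{eqn:ca_ode_1}. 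The second assertion is even more direct: under \eqref{eqn:empirical_model_v2} I simply substitute $\clspeed = \gtheta(\theta)$ into the right-hand side of the same identity to obtain \eqref{eqn:ca_ode_2}, with $\gtheta \in \mathcal{C}^1(0,\pi)$ ensuring that the resulting ODE has a $\mathcal{C}^1$ right-hand side as a function of $\theta$.

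There is really no substantive obstacle here; the only point requiring a moment of care is to check that $\DDT{}$ is genuinely well-defined as a Lagrangian derivative along contact-line trajectories and that $\theta$ and $\clspeed$ are differentiable along them, so that the chain rule applies. Both facts are already in place: the velocity field $\vsigma$ inherits $\mathcal{C}^1$-regularity from $v \in \velspace$ on $\gr\overline\Sigma$, Lemma~\ref{lemma:trajectories_on_moving_hypersurface} yields a well-defined flow on $\gr\overline\Sigma$ leaving $\gr\Gamma$ invariant, and $\theta \in \mathcal{C}^1(\gr\Gamma)$ follows from $\nsigma \in \mathcal{C}^1(\gr\overline\Sigma)$ via \eqref{eqn:theta_definition}. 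With these prerequisites assembled, both identities \eqref{eqn:ca_ode_1} and \eqref{eqn:ca_ode_2} follow in one line from Theorem~\ref{theorem:ca_evolution_in_standard_model}.
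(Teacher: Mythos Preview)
Your proposal is correct and matches the paper's approach: the paper states the corollary as ``an immediate consequence of this modeling'' without writing out a proof, and your argument---chain rule along a contact-line trajectory for \eqref{eqn:ca_ode_1}, direct substitution for \eqref{eqn:ca_ode_2}---is precisely the one-line derivation the paper has in mind. If anything, you are more careful than the paper in verifying that the Lagrangian derivative along $\gr\Gamma$ is well-defined and that $\theta$, $\clspeed$ are differentiable along trajectories.
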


\begin{remark} From Corollary~\ref{corollary:evolution_equation_empirical_model} we draw the following conclusions.
 \begin{enumerate}[(i)]
  \item By adding on of the empirical models \eqref{eqn:empirical_model_v1} or \eqref{eqn:empirical_model_v2} to the model from Theorem~\ref{theorem:ca_evolution_in_standard_model} with a fixed slip length, the time evolution of $\theta$ and $\clspeed$ is, for regular solutions, already completely determined by the \emph{ordinary} differential equation \eqref{eqn:ca_ode_1} or \eqref{eqn:ca_ode_2}, respectively. But note that neither the momentum equation nor the normal part of the transmission condition involving the surface tension is used for its derivation. This means that, for regular solutions, neither external forces like gravity nor surface tension forces can influence the motion of the contact line.
  \item If the empirical function satisfies the thermodynamic condition \eqref{eqn:thermodynamic_conditions_f_g}, i.e.
  \[ \clspeed(\ftheta(\clspeed)-\thetaeq) \geq 0 \quad \text{or} \quad \gtheta(\theta)(\theta - \thetaeq) \geq 0, \]  
  respectively, there are only constant or \emph{monotonically increasing/decreasing} solutions for $\theta(t)$ (in Lagrangian coordinates).
  \item Moreover, we have the additional requirement that $\DT{\theta}=0$ for $\theta = \pi/2$, which dictates $\gtheta(\pi/2)=0=\gtheta(\thetaeq)$ (or $\il=0$ which means that $\theta$ is fixed).
 \end{enumerate}
\end{remark}

\subsection[Asymptotic solutions and regularity]{Asymptotic solutions and\\ regularity}
% Asymptotics
It is instructive to consider some examples of known asymptotic solutions to wetting flow problems. A classical example is the stationary two-dimensional Stokes problem in the free boundary formulation, i.e.\ the PDE system
\begin{align}
\visc \Delta v = \nabla p, \quad \nabla \cdot v = 0 \ \ &\text{in} \ \Omega\setminus\Sigma,\label{eqn:stokes-1} \\
v = 0 \ \ &\text{on} \ \Gamma,\label{eqn:stokes-2}\\
\inproduct{v}{\ndomega}=0 \ \ &\text{on} \ \partial\Omega\setminus\Gamma,\\ 
\lambda (\pdomega v-\vwall) + \pdomega S \ndomega = 0 \ \ &\text{on} \ \partial\Omega\setminus\Gamma, \label{eqn:stokes-3}\\
\inproduct{v}{\nsigma} = 0 \ \ &\text{on} \ \Sigma \label{eqn:stokes-4},\\
\psigma S \nsigma  = 0 \ \ &\text{on} \ \Sigma, \label{eqn:stokes-5}\\
p_0 - p + \inproduct{S \nsigma}{\nsigma}  = \sigma \kappa \ \ &\text{on} \ \Sigma, \label{eqn:stokes-6}
\end{align}
where $p_0$ is the constant outer pressure (see, e.g., \cite{Shikhmurzaev.2006}, \cite{Shikhmurzaev.2008}). Note that the equations are written in a frame of reference moving \emph{with} the contact line. So here we have a non-zero tangential wall velocity $\vwall$, which equals the contact line velocity. After introducing the scalar \emph{stream function} $\psi$ in polar coordinates $(r,\varphi)$, i.e.
\begin{align}
\label{eqn:stream_function}
v = v_r \hat{e}_r + v_\varphi \hat{e}_\varphi, \ v_r = \frac{1}{r} \, \partial_\varphi \psi, \ v_\varphi = - \partial_r \psi,
\end{align}
the incompressibility condition is automatically satisfied and the pressure can be eliminated from \eqref{eqn:stokes-1} leading to the \emph{biharmonic equation}
\begin{align}
\Delta^2 \psi = 0 \quad \text{in} \ \Omega\setminus\Sigma.
\end{align}
If necessary, the pressure can be recovered from $\psi$ via the relations (see \cite{Shikhmurzaev.2006}, \cite{Shikhmurzaev.2008} for details)
\begin{align}
\frac{\partial p}{\partial r} = \left(\frac{1}{r}\frac{\partial^3}{\partial r^2 \partial \varphi} + \frac{1}{r^3}\frac{\partial^3}{\partial \varphi^3} + \frac{1}{r}\frac{\partial^2}{\partial r \partial \varphi} \right) \psi,\label{eqn:pressure_relation_1}
\end{align}
and
\begin{align}
\frac{\partial p}{\partial \varphi} = -\left(r\frac{\partial^3}{\partial r^3} + \frac{\partial^2}{\partial r^2} + \frac{1}{r}\frac{\partial^3}{\partial r \partial \varphi^2} \right.\nonumber \\
\left. - \frac{1}{r}\frac{\partial}{\partial r} -\frac{2}{r^2}\frac{\partial^2}{\partial \varphi^2}\right) \psi.\label{eqn:pressure_relation_2}
\end{align}
As a first approximation, the system of equations \eqref{eqn:stokes-1}-\eqref{eqn:stokes-5} is solved on a wedge domain, i.e.\ for $0~<~r~<~\infty$ and $0~<~\varphi~<~\theta$. Afterwards, the normal stress condition \eqref{eqn:stokes-6} is evaluated to obtain a correction for the free surface shape. Rewriting the boundary conditions \eqref{eqn:stokes-2}-\eqref{eqn:stokes-5} in terms of the stream function leads to the PDE system
\begin{align}
\Delta^2 \psi = 0 \quad &\text{for} \quad r > 0, \ 0 < \varphi < \theta,\label{eqn:psi_stokes_1}\\
\psi = 0 \quad &\text{for} \quad r \geq 0, \ \varphi \in \{0, \theta\},\label{eqn:psi_stokes_2} \\
\frac{\partial^2 \psi}{\partial\varphi^2} = 0 \quad &\text{for} \quad r > 0, \ \varphi = \theta,\label{eqn:psi_stokes_3} 
\end{align}
together with the Navier boundary condition
\begin{align}
\frac{1}{r^2}\frac{\partial^2 \psi}{\partial\varphi^2} + \frac{1}{L} \left(\vwall - \frac{1}{r} \frac{\partial\psi}{\partial\varphi} \right) = 0 \quad &\text{for} \quad r > 0, \ \varphi = 0.\label{eqn:psi_stokes_4}
\end{align}
Moreover, the velocity field is required to be continuous up to the contact point. Since the frame of reference is moving with the contact line, a solution has to satisfy
\begin{align}
\label{eqn:reference_frame_condition}
\lim_{r \rightarrow 0} v_r = \lim_{r \rightarrow 0} \frac{1}{r} \, \partial_\varphi \psi = 0. 
\end{align}
Motivated by a separation of variables approach, one may consider \emph{special} solutions of the type
\begin{align}
\label{eqn:special_solutions}
\psi_\lambda(r,\varphi) = r^\lambda F_\lambda(\varphi).
\end{align}
A prominent example of such a solution is the one given by H. K. Moffatt \cite{Moffatt.1964}
\[
\psi_1(r,\varphi) = \frac{r[(\varphi-\theta)\sin\varphi - \varphi \sin(\varphi-\theta)\cos\theta]}{\sin \theta \cos \theta - \theta},
\]
which satisfies a no-slip condition on $\partial\Omega\setminus\Gamma(t)$ (for $\vwall=1$). However, the resulting velocity field is discontinuous at the point of contact and the pressure diverges proportional to $1/r$, which makes it impossible to satisfy the normal stress condition \eqref{eqn:stokes-6} (see \cite{Shikhmurzaev.2006},\cite{Shikhmurzaev.2008} for details).  In fact, it can be easily seen from \eqref{eqn:stream_function} that for regular $F_\lambda \not\equiv 0$, the velocity is continuous on $\overline{\Omega}$ if $\lambda > 1$. Then condition \eqref{eqn:reference_frame_condition} is also satisfied by $\psi_\lambda$.\newline
\newline
Note that for $\lambda < 2$ the stress is unbounded for $r \rightarrow 0$ and Theorem \ref{theorem:ca_evolution_in_standard_model} does \emph{not} apply. For $\lambda > 2$ we observe that the tangential stress component vanishes at the contact line, i.e.
\[ \frac{1}{r^2} \frac{\partial^2 \psi_\lambda}{\partial \varphi^2} (0,0) = 0. \]
Using equation \eqref{eqn:psi_stokes_4} this implies that either the contact line is at rest or $L \rightarrow \infty$, in agreement with Theorem~\ref{theorem:ca_evolution_in_standard_model}. It follows from the equations \eqref{eqn:pressure_relation_1} and \eqref{eqn:pressure_relation_2} that $\lambda > 2$ is also a sufficient condition to obtain a \emph{finite} pressure at the moving contact line.
\begin{figure}[ht]
\centering
\includegraphics[width=0.6\columnwidth]{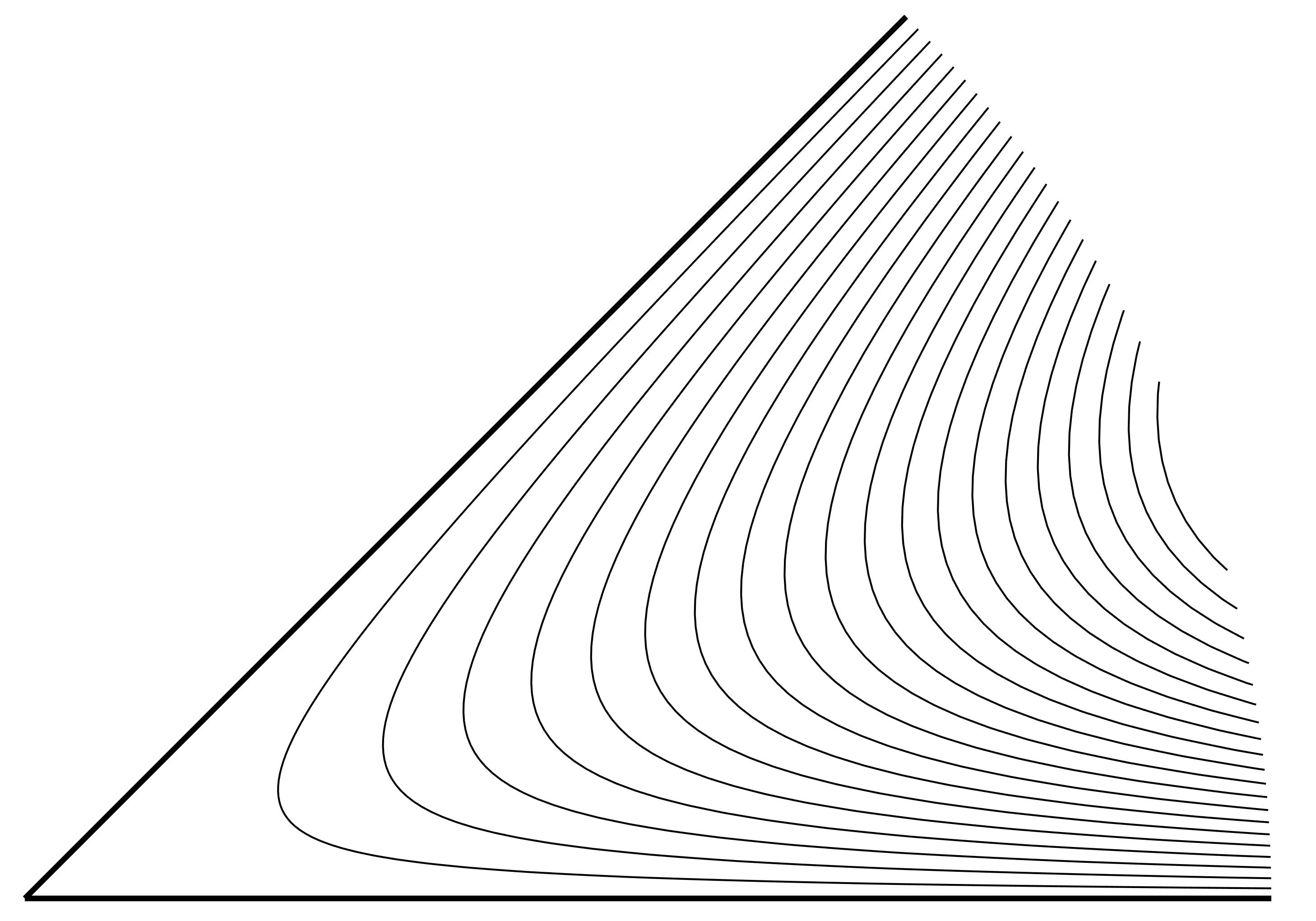}
\caption{Streamlines for the field given by \eqref{eqn:borderline_example}.}
\label{fig:contour_borderline_case}
\end{figure}

It is interesting to take a look at the borderline case $\lambda~=~2$. For example, the stream function (see \cite{Shikhmurzaev.2006} and Figure~\ref{fig:contour_borderline_case})
 \begin{align}
 \label{eqn:borderline_example}
 \psi_2(r,\varphi) = r^2 \, \frac{\vwall}{L_0} \left( - \frac{1}{4} + \frac{\varphi}{\pi} + \frac{1}{4} \cos(2\varphi) \right),
 \end{align}
 describes a velocity field which is an exact solution of \eqref{eqn:psi_stokes_1} - \eqref{eqn:reference_frame_condition} with $\theta = \pi/4$ and a finite slip length $L_0$ at the contact point. However, the velocity field is \emph{not} differentiable at the contact line and the pressure is logarithmically singular. In order to construct a solution to the free boundary problem, this requires a correction to the free surface with a singular curvature at the contact point. Due to the lack of differentiability, Theorem~\ref{theorem:ca_evolution_in_standard_model} also does \emph{not} apply in this case.\newline
 \newline
 But the case $\lambda~=~2$ also includes examples for stream functions, where the required regularity is met. In fact, it can be shown that these are precisely given by
 \[ \psi_2(r,\varphi) = r^2 (c_1 + c_2 \sin \varphi \cos \varphi + c_3 \sin^2 \varphi), \]
 where $c_1,c_2,c_3 \in \RR$. This class of stream functions represents the (three-dimensional) space of \emph{linear} divergence free velocity fields in two spatial dimensions satisfying $v=0$ at $r=0$. The impermeability condition $v_\varphi = 0$ for $\varphi=0$ implies $c_1=0$. The Navier condition \eqref{eqn:psi_stokes_4} yields
 \begin{align*}
 c_3 = - \frac{\vwall}{2 L_0},
 \end{align*}
 where $L_0$ denotes the slip length at the contact point. Finally, the tangential stress condition \eqref{eqn:psi_stokes_3} allows to determine the constant $c_2$. For $\theta \neq \pi/2$ we obtain the stream function
 \begin{align*}
 \psi_2(r,\varphi) = -\frac{\vwall r^2}{2 L_0}\left(\cot(2\theta) \sin \varphi \cos \varphi +\sin^2 \varphi \right).
 \end{align*}
 The corresponding time derivative of the contact angle is (as expected)
 \[ \DDT{\theta} = \frac{\vwall}{2L_0} = \frac{\clspeed}{2L_0}. \]
 Clearly, this is \emph{not} a quasi-stationary solution since \eqref{eqn:stokes-4} is not satisfied. As already pointed out in Corollary~\ref{corollary:stationarity}, a sufficiently regular, non-trivial \emph{quasi-stationary} solution only exists in the free-slip case.

\section{Remarks on more general models}
\label{section:generalizations}
\subsection{Marangoni effect}
% Effect of the marangoni term
An obvious generalization of the model described in Theorem~\ref{theorem:ca_evolution_in_standard_model} is to include the effect of non-constant fluid-fluid surface tension. In this case, the interfacial transmission condition for the stress reads as
\[ \jump{p \mathds{1} - S} \nsigma = \sigma \kappa \nsigma + \nablasigma \sigma. \]

\begin{theorem}[]
\label{theorem:ca_evolution_marangoni}
Let $\Omega\subset\RR^3$ (or $\Omega\subset\RR^2$) be a half-space with boundary $\partial\Omega$, $\visc^\pm >0$, $\jump{\visc}\neq 0$, $\il \in \mathcal{C}(\gr\partial\Omega)$ and $(v,\gr\overline\Sigma)$ with $v \in \velspace$, $\gr\overline\Sigma$ a $\mathcal{C}^{1,2}$-family of moving hypersurfaces with boundary, be a classical solution of the PDE-system
\begin{align*}
\nabla \cdot v = 0 \quad &\text{in} \ \Omega\setminus\Sigma(t), \\
\jump{v} = 0, \quad \mathcal{P}_\Sigma \jump{-S} \, \nsigma = \nabla_\Sigma \sigma \quad &\text{on} \ \Sigma(t),\\
\left\langle v, \ndomega \right\rangle = 0, \ \il \pdomega v + 2 \pdomega D \ndomega = 0 \quad &\text{on} \ \partial\Omega\setminus\Gamma(t),\\
\normalspeed = \inproduct{v}{\nsigma} \quad &\text{on} \ \Sigma(t)
\end{align*}
with $\theta \in (0,\pi)$ on $\gr\Gamma$. Then the evolution of the contact angle is given by
\begin{align}
\label{eqn:ca_evolution_marangoni}
\frac{D \theta}{D t} = \frac{1}{2}\left(\il \clspeed - \frac{\partial_\tau \sigma}{\jump{\visc}} \right).
\end{align}
\end{theorem}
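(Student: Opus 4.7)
The plan is to reuse the blueprint of Theorem~\ref{theorem:ca_evolution_in_standard_model} essentially verbatim, isolating the single place where the Marangoni modification enters: the tangential component of the interfacial stress transmission condition. Everything else (incompressibility, impermeability, Navier condition, regularity of $v$) is unchanged, so Lemma~\ref{lemma:continuity_of_grad_v_at_gamma} still applies at the contact line and the kinematic evolution equation of Theorem~\ref{theorem:contact_angle_evolution} can be invoked with $\vsigma := v^{\pm}_{|\gr\overline\Sigma} \in \mathcal{C}^1(\gr\overline\Sigma)$.

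First, I would expand $\DSigmaDT{\theta} = \inproduct{\partial_\tau v}{\nsigma}$ in the local basis using $\tau = -\cos\theta\,\ngamma - \sin\theta\,\ndomega$ and $\nsigma = \sin\theta\,\ngamma - \cos\theta\,\ndomega$, drop the $\pm$-notation by Lemma~\ref{lemma:continuity_of_grad_v_at_gamma}, use the impermeability condition $\inproduct{\nabla v\,\ngamma}{\ndomega} = 0$, and the Navier identity $\inproduct{\nabla v\,\ndomega}{\ngamma} = -\il\,\clspeed$. These steps produce
\begin{align*}
\DDT{\theta} = \sin^2\theta\,\il\,\clspeed + \sin\theta\cos\theta\bigl(\inproduct{\nabla v\,\ndomega}{\ndomega} - \inproduct{\nabla v\,\ngamma}{\ngamma}\bigr),
\end{align*}
which is \emph{identical} to the intermediate identity \eqref{eqn:3d_evolution_equation} from the proof of Theorem~\ref{theorem:ca_evolution_in_standard_model}, because the Marangoni term sits on the interface, not at $\partial\Omega$, and does not affect this computation.

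Second, I would project the modified transmission condition $\psigma\jump{-S}\,\nsigma = \nablasigma \sigma$ onto $\tau$. Since $\tau \in T_\Sigma$, this yields $\inproduct{\jump{-S}\nsigma}{\tau} = \partial_\tau \sigma$. Applying Lemma~\ref{lemma:continuity_of_grad_v_at_gamma} at $\Gamma$ shows that the pressure term drops out and the jump reduces to a jump in $\visc$, so that
\begin{align*}
\inproduct{D\nsigma}{\tau} = -\frac{\partial_\tau \sigma}{2\jump{\visc}}.
\end{align*}
This is the sole modification of equation \eqref{eqn:continuity_of_tangential_stress}: what used to vanish now equals $-\partial_\tau \sigma/(2\jump{\visc})$. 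Expanding $\inproduct{D\nsigma}{\tau}$ in the basis $\{\ngamma,\ndomega\}$ and inserting $2\inproduct{D\ndomega}{\ngamma} = -\il\,\clspeed$ delivers
\begin{align*}
\sin\theta\cos\theta\bigl(\inproduct{\nabla v\,\ndomega}{\ndomega} - \inproduct{\nabla v\,\ngamma}{\ngamma}\bigr)
 = \tfrac{1}{2}(\cos^2\theta - \sin^2\theta)\,\il\,\clspeed - \frac{\partial_\tau\sigma}{2\jump{\visc}}.
\end{align*}

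Finally, I would substitute this identity into the expression for $\DDT{\theta}$ from the first step. The coefficients of $\il\clspeed$ collapse via $\sin^2\theta + \tfrac{1}{2}(\cos^2\theta - \sin^2\theta) = \tfrac{1}{2}(\sin^2\theta + \cos^2\theta) = \tfrac{1}{2}$, and the new term $-\partial_\tau\sigma/(2\jump{\visc})$ is carried through unchanged, producing the claimed formula \eqref{eqn:ca_evolution_marangoni}. I do not anticipate any real obstacle: the Marangoni contribution enters linearly and is tracked through a single extra term in an otherwise identical computation. The only point meriting care is verifying that the hypotheses of Lemma~\ref{lemma:continuity_of_grad_v_at_gamma} are still satisfied (they are, since they depend only on continuity of $v$, the divergence-free condition, and tangentiality to $\partial\Omega$, none of which are perturbed by $\nablasigma\sigma$).
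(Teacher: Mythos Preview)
Your proposal is correct and follows essentially the same approach as the paper: both arguments note that \eqref{eqn:3d_evolution_equation} is unchanged, replace the tangential stress identity $\inproduct{\jump{S}\nsigma}{\tau}=0$ by $\inproduct{\jump{S}\nsigma}{\tau}=-\partial_\tau\sigma$, apply Lemma~\ref{lemma:continuity_of_grad_v_at_gamma} to obtain the Marangoni-modified version of \eqref{eqn:3d_evolution_intermediate_equation}, and substitute back. Your rearranged intermediate identity is algebraically equivalent to the paper's \eqref{eqn:3d_evolution_intermediate_equation_marangoni}, and the final collapse of trigonometric coefficients is identical.
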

\begin{proof}
The proof is analogous to the proof of Theorem~\ref{theorem:ca_evolution_in_standard_model}. We proceed as follows:
While \eqref{eqn:3d_evolution_equation} is still valid, the jump condition for the viscous is replaced by
 \[ \inproduct{\jump{S}\nsigma}{\tau} = - \partial_\tau \sigma. \]
Using again Lemma~\ref{lemma:continuity_of_grad_v_at_gamma} we find
\begin{align}
\label{eqn:3d_evolution_intermediate_equation_marangoni}
  -\frac{\partial_\tau \sigma}{2\jump{\visc}} &= \sin \theta \cos \theta \left(- \inproduct{\nabla v \, \ngamma}{\ngamma} + \inproduct{\nabla v \, \ndomega}{\ndomega} \right) \nonumber\\
    & + (\sin^2 \theta - \cos^2 \theta) \frac{\il \clspeed}{2} \quad \text{on} \ \Gamma.
\end{align}
The claim follows by inserting \eqref{eqn:3d_evolution_intermediate_equation_marangoni} into the contact angle evolution equation \eqref{eqn:3d_evolution_equation}.
\end{proof}
This result shows that in this case regular solutions with advancing contact line and $\dot{\theta} < 0$ are possible. To obtain a non-trivial quasi-stationary state, a surface tension gradient
\begin{align}
\label{eqn:surface_tension_gradient}
\partial_\tau \sigma = \jump{\visc} a \clspeed 
\end{align}
has to be present at the contact line.

\begin{remark}
Figure~\ref{fig:velocity_fields_constant_theta} shows an example of a linear velocity field in two spatial dimensions with $\theta=\pi/4$ and a gradient in surface tension corresponding to \eqref{eqn:surface_tension_gradient}. The field satisfies Navier slip with $L>0$ and is plotted in a \emph{co-moving} reference frame. The streamlines are tangent to the interface and the contact angle does not change.\newline
\newline
This situation is not possible for the case of \emph{constant} surface tension, visualized in Figure~\ref{fig:velocity_fields_constant_sigma}. In the case $\theta=\pi/4$ and $\partial_\tau \sigma = 0$ equation \eqref{eqn:3d_evolution_intermediate_equation_marangoni} together with the incompressibility condition
\[ 0 = \inproduct{\nabla v \, \ngamma}{\ngamma} + \inproduct{\nabla v \, \ndomega}{\ndomega}  \]
implies 
\[ \inproduct{\nabla v \, \ngamma}{\ngamma}=\inproduct{\nabla v \, \ndomega}{\ndomega} = 0. \]
Therefore, the linear part of the velocity field has a quite simple form. In the reference frame of the solid wall, it is given as
\[ (u,v)(x,y) = \clspeed \left(1 + \frac{y}{L}, 0 \right). \]
Figure~\ref{fig:velocity_fields_constant_sigma} shows the field in a co-moving reference frame. Clearly, the field geometry leads to an \emph{increase} in the contact angle (clockwise rotation in this example).
\end{remark}
\begin{figure*}[hbt]
\label{fig:velocity_fields}
\subfigure[Constant contact angle, $\nablasigma \sigma \neq 0$.]{\includegraphics[width=\columnwidth]{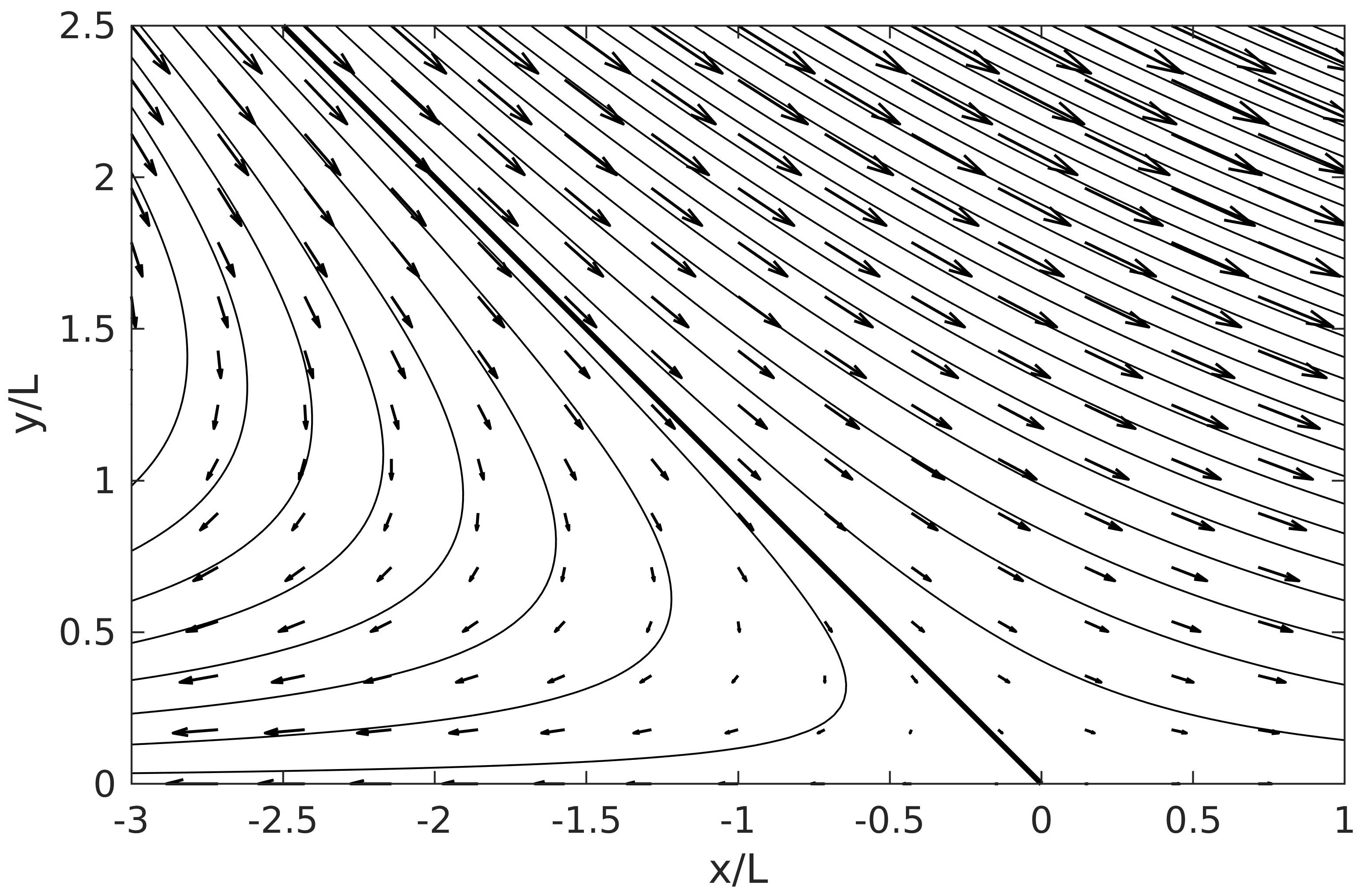}\label{fig:velocity_fields_constant_theta}}
\subfigure[Constant surface tension, $\dot{\theta} > 0$.]{\includegraphics[width=\columnwidth]{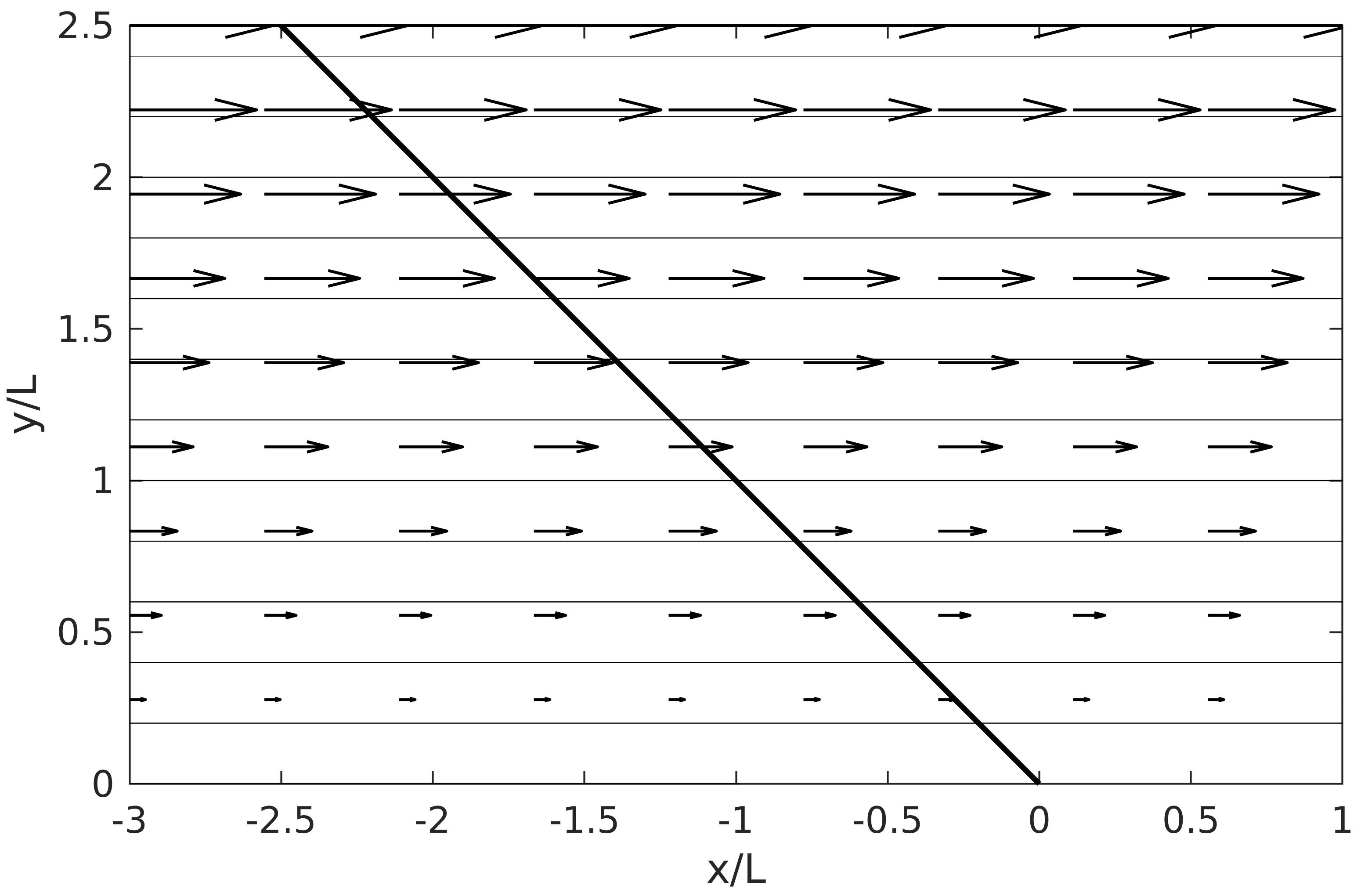}\label{fig:velocity_fields_constant_sigma}}
\caption{Linear velocity fields satisfying Navier slip with $L>0$ in a co-moving reference frame.}
\end{figure*}

\subsection{Interfacial slip}
% Interfacial slip
Another possible generalization of the model is to allow for slip at the fluid-fluid interface. In this case, one only requires continuity of the \emph{normal} component of the fluid velocity, i.e.
\[ \inproduct{\jump{v}}{\nsigma} = 0 \quad \text{on} \ \Sigma(t),\]
which means that there is no mass flux from one phase to the other. To describe the evolution of the interface, one can use both of the fluid velocities $v^\pm$ in the kinematic conditions
\[ \normalspeed = \inproduct{v^\pm}{\nsigma}, \quad \clspeed = \inproduct{v^\pm}{\ngamma}. \]
This gives rise to two distinct Lagrangian derivative operators. To formulate the following Theorem, we choose the interfacial velocity field
\begin{align}
\label{eqn:interfacial_slip_vsigma}
\vsigma := \frac{\visc^+ v^+ - \visc^- v^-}{\visc^+ - \visc^-} = \frac{\jump{\visc v}}{\jump{\visc}}.
\end{align}
Clearly, $\vsigma$ also satisfies the above mentioned kinematic conditions. Hence we can define a Lagrangian time derivative according to $\vsigma$. 

\begin{lemma}
The Lagrangian derivatives with respect to $v^+$, $v^-$ and $\vsigma$ satisfy the relation
\begin{align}
\label{eqn:ddt_relation}
(\visc^+ - \visc^-)\, \DSigmaDT{} = \visc^+ \DDTp{} - \visc^- \DDTm{}. 
\end{align}
\end{lemma}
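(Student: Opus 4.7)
The plan is to compare the three Lagrangian derivatives pairwise, exploiting that $v^+$, $v^-$ and $\vsigma$ are all consistent velocity fields on $\gr\overline\Sigma$, and then using that the coefficients in the affine combination \eqref{eqn:interfacial_slip_vsigma} defining $\vsigma$ sum to one. The whole argument reduces, once the setup is in place, to a two-line algebraic identity.

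First, I would check that $v^+$, $v^-$ and $\vsigma$ all satisfy the kinematic conditions \eqref{eqn:kinematic_conditions_contact_line} together with $\inproduct{w}{\ndomega}=0$ on $\gr\Gamma$, so that each generates, by Lemma~\ref{lemma:trajectories_on_moving_hypersurface}, a well-defined flow on $\gr\overline\Sigma$ that leaves $\gr\Gamma$ invariant. For $v^\pm$ this is built into the interfacial slip formulation, since only the normal component of $v$ is required to be continuous across $\Sigma$ while each one-sided trace $v^\pm$ individually satisfies the normal and contact-line kinematic identities. For $\vsigma$ the weights $\visc^+/(\visc^+-\visc^-)$ and $-\visc^-/(\visc^+-\visc^-)$ sum to one, so $\vsigma$ inherits the same identities from $v^\pm$. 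Hence $\DDTp$, $\DDTm$ and $\DSigmaDT$ are all defined on $\mathcal{C}^1(\gr\overline\Sigma)$.

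Second, I would establish the elementary observation that, for any two consistent velocity fields $w_1,w_2\in\mathcal{C}^1(\gr\overline\Sigma)$ and any $\psi\in\mathcal{C}^1(\gr\overline\Sigma)$,
\begin{align*}
\frac{D^{w_1}\psi}{Dt} - \frac{D^{w_2}\psi}{Dt} = (w_1 - w_2)\cdot\nablasigma\psi \quad \text{on } \gr\overline\Sigma.
\end{align*}
The difference $w_1-w_2$ is tangent to $\Sigma$ at interior points (since each $w_i$ has the same normal component $\normalspeed$) and tangent to $\Gamma$ at boundary points (since each $w_i$ has the same $\ngamma$-component $\clspeed$ and vanishing $\ndomega$-component). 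Differentiating $\psi$ along flow lines of $w_1$ versus $w_2$ through a common starting point therefore produces precisely the directional derivative of $\psi$ along the tangential vector $w_1-w_2$, which is the intrinsic surface object $(w_1-w_2)\cdot\nablasigma\psi$, independent of any extension of $\psi$ off $\gr\overline\Sigma$.

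Third, I would apply this identity to the pairs $(v^+,\vsigma)$ and $(v^-,\vsigma)$ and take the combination
\begin{align*}
\visc^+\DDTp{\psi}-\visc^-\DDTm{\psi} = (\visc^+-\visc^-)\DSigmaDT{\psi} + R\cdot\nablasigma\psi,
\end{align*}
with $R := \visc^+ v^+ - \visc^- v^- - (\visc^+-\visc^-)\vsigma$. By the definition \eqref{eqn:interfacial_slip_vsigma} of $\vsigma$ one has $R\equiv 0$, which gives \eqref{eqn:ddt_relation} at once. The main obstacle is not the calculation but the setup: one must verify that $\vsigma$ really is a consistent velocity field on all of $\gr\overline\Sigma$ (so that $\DSigmaDT$ is defined in the first place) and that the pairwise comparison of Lagrangian derivatives is valid up to and including the contact line. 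Once Lemma~\ref{lemma:trajectories_on_moving_hypersurface} is invoked to guarantee both, the lemma follows by the algebraic cancellation above.
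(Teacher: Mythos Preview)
Your proposal is correct and follows essentially the same approach as the paper: decompose each Lagrangian derivative into a common ``Thomas'' part plus transport along the tangential component of the respective velocity, then use linearity together with the definition of $\vsigma$ to cancel. The paper phrases this directly via $\DDTpm{} = \partial^\Gamma_t + v^\pm_\parallel \cdot \nablagamma$ along the contact line, while you package the same idea as the general identity $\frac{D^{w_1}\psi}{Dt}-\frac{D^{w_2}\psi}{Dt}=(w_1-w_2)\cdot\nablasigma\psi$ on $\gr\overline\Sigma$; the algebra and the underlying observation are identical.
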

\begin{proof}
Each Lagrangian derivative along the contact line may be decomposed as 
\[ \DDTpm{} = \partial^\Gamma_t + v^\pm_\parallel \cdot \nablagamma, \]
where $\partial^\Gamma_t$ is the ``contact line Thomas derivative'' following the normal motion of the contact line, $\nablagamma$ is the gradient along $\Gamma$ and $v^\pm_\parallel$ is the component of $v^\pm$ tangential to the contact line. Multiplication with $\visc^\pm$ yields
\begin{align*}
\visc^+ \DDTp{} - \visc^- \DDTm{} &= \jump{\visc} \partial^\Gamma_t + (\visc^+ v^+_\parallel - \visc^- v^-_\parallel) \cdot \nablagamma \\
&= \jump{\visc} \left(\partial^\Gamma_t + \frac{\visc^+ v^+_\parallel - \visc^- v^-_\parallel}{\visc^+ - \visc^-} \cdot \nablagamma \right)\\
&= \jump{\visc} \DSigmaDT{}.\qedhere
\end{align*}
\end{proof}

Note that Lemma~\ref{lemma:continuity_of_grad_v_at_gamma} cannot be used for the proof of Theorem~\ref{theorem:interfacial_slip} since $v$ is no longer assumed to be continuous. Moreover, the following statement does \emph{not} require incompressibility of the flow.
\begin{theorem}[]
\label{theorem:interfacial_slip}
Let $\Omega\subset\RR^3$ be a half-space with boundary $\partial\Omega$, $\visc^\pm >0$, $\jump{\visc}\neq 0$ and $(v,\gr\overline\Sigma)$ with 
\[ v \in \mathcal{C}^1(\gr \overline{\Omega^+}) \cap \mathcal{C}^1(\gr \overline{\Omega^-})  \]
and $\gr\overline\Sigma$ a $\mathcal{C}^{1,2}$-family of moving hypersurfaces with boundary, be a classical solution of the PDE-system
\begin{align*}
\jump{\inproduct{v}{\nsigma}} = 0, \quad \mathcal{P}_\Sigma \jump{-S} \, \nsigma = \nablasigma \sigma \quad &\text{on} \ \Sigma(t),\\
\inproduct{v}{\ndomega}=0, \ \lambda \pdomega v + \pdomega S \ndomega = 0 \quad &\text{on} \ \partial\Omega\setminus\Gamma(t),\\
\normalspeed = \inproduct{v^\pm}{\nsigma} \quad &\text{on} \ \Sigma(t).
\end{align*}
Moreover, let 
\[ \lambda \in \mathcal{C}(\gr\overline{\partial\Omega^+})\cap\mathcal{C}(\gr\overline{\partial\Omega^-}) \]
and $\theta \in (0,\pi)$ on $\gr\Gamma$. Then the evolution of the contact angle is given by
\begin{align}
\label{eqn:ca_evolution_with_if_slip}
\DSigmaDT{\theta} = \frac{1}{2} \left( \clspeed \frac{\jump{\lambda}}{\jump{\visc}} - \frac{\partial_\tau \sigma}{\jump{\visc}} \right),
\end{align}
where $\DSigmaDT{}$ is the Lagrangian time-derivative according to the surface velocity field \eqref{eqn:interfacial_slip_vsigma}.
\end{theorem}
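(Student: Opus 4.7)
The plan is to apply the kinematic evolution equation from Theorem~\ref{theorem:contact_angle_evolution} with the surface velocity $\vsigma=\jump{\visc v}/\jump{\visc}$, and then to convert $\inproduct{\partial_\tau \vsigma}{\nsigma}$ into the claimed expression by combining the Navier condition with the tangential part of the interfacial stress jump. The crucial point is that, although $v$ itself has a jump at $\Sigma$, the weighted combination $\vsigma$ is a well-defined $\mathcal{C}^1$ field on $\gr\overline\Sigma$ and its tangential derivative along $\tau$ is precisely the object on which the tangential interfacial stress balance gives information, namely $\jump{\visc(\nabla v)\tau}$.

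First I would verify that $\vsigma$ is admissible in the sense of Theorem~\ref{theorem:contact_angle_evolution}: it is $\mathcal{C}^1$ on $\gr\overline\Sigma$ because $v^\pm$ are $\mathcal{C}^1$ up to the closure of $\gr\overline{\Omega^\pm}$, and the kinematic conditions follow from $\inproduct{v^\pm}{\nsigma}=\normalspeed$ on $\gr\Sigma$, $\inproduct{v^\pm}{\ndomega}=0$ on $\gr\Gamma$ and Lemma~\ref{lemma:kinematic_conditions}. Applying Theorem~\ref{theorem:contact_angle_evolution} then yields
\[
\DSigmaDT{\theta}
=\inproduct{\partial_\tau \vsigma}{\nsigma}
=\frac{\jump{\visc\,\inproduct{(\nabla v)\tau}{\nsigma}}}{\jump{\visc}},
\]
where the last equality uses that $\visc^\pm$ are constants in each bulk phase.

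Next I would expand $\inproduct{(\nabla v^\pm)\tau}{\nsigma}$ in the $\{\ngamma,\ndomega\}$-basis using \eqref{eqn:expansion_tau_sigma}, exactly as in the proofs of Theorems~\ref{theorem:ca_evolution_in_standard_model} and \ref{theorem:ca_evolution_marangoni}. The impermeability $v^\pm\!\cdot\!\ndomega=0$ on $\partial\Omega$, differentiated along the tangential (constant-$\ndomega$) direction $\ngamma$, kills the $\inproduct{(\nabla v^\pm)\ngamma}{\ndomega}$ term on each side. The Navier condition applied from each phase and extended by continuity to $\Gamma$ gives $\visc^\pm\inproduct{(\nabla v^\pm)\ndomega}{\ngamma}=-\lambda^\pm\clspeed$, so that the $\sin^2\theta$-coefficient in the expansion produces a $\sin^2\theta\,\jump{\lambda}\clspeed$ contribution. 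What remains is a single cross term proportional to $\sin\theta\cos\theta\,\jump{\visc(\inproduct{(\nabla v)\ndomega}{\ndomega}-\inproduct{(\nabla v)\ngamma}{\ngamma})}$.

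The main obstacle, compared to the Marangoni case already treated, is that Lemma~\ref{lemma:continuity_of_grad_v_at_gamma} is no longer available, since $v$ has a jump at $\Sigma$ and incompressibility is not even assumed. Hence the cross term cannot be turned into an unambiguous bulk quantity; it must be eliminated directly from the tangential stress balance $\inproduct{\jump{S}\nsigma}{\tau}=-\partial_\tau\sigma$ on $\Gamma$. For this I would write $S^\pm=2\visc^\pm D^\pm$ and expand $\inproduct{D^\pm\nsigma}{\tau}$ in the $\{\ngamma,\ndomega\}$-basis; after using impermeability and Navier again, the jump of this quantity takes the form $\sin\theta\cos\theta\,\jump{\visc(\inproduct{(\nabla v)\ndomega}{\ndomega}-\inproduct{(\nabla v)\ngamma}{\ngamma})}+\tfrac12(\cos^2\theta-\sin^2\theta)(-\jump{\lambda}\clspeed)$ on one side, equal to $-\partial_\tau\sigma/2$ on the other. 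This is a single scalar equation for the single unknown cross term, so solving it and substituting back into the expression for $\DSigmaDT{\theta}$ gives
\[
\DSigmaDT{\theta}
=\frac{1}{2}\left(\clspeed\,\frac{\jump{\lambda}}{\jump{\visc}}-\frac{\partial_\tau\sigma}{\jump{\visc}}\right),
\]
after the $\sin^2\theta$ and $\cos^2\theta$ coefficients collapse via $\sin^2\theta+\cos^2\theta=1$, as in the earlier proofs. This is the claimed identity.
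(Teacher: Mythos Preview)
Your proof is correct and follows the same overall strategy as the paper: expand the kinematic evolution equation in the $\{\ngamma,\ndomega\}$-frame, use impermeability and the Navier condition to handle the off-diagonal terms, and eliminate the remaining cross term via the tangential stress balance $\inproduct{\jump{S}\nsigma}{\tau}=-\partial_\tau\sigma$.

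There is one organizational difference worth noting. The paper applies Theorem~\ref{theorem:contact_angle_evolution} \emph{separately} with $v^+$ and $v^-$, obtaining two distinct Lagrangian derivatives $D^\pm\theta/Dt$, and then combines them through the auxiliary identity $\jump{\visc}\,D^\Sigma/Dt=\visc^+D^+/Dt-\visc^-D^-/Dt$ (proved as a separate lemma). You instead apply Theorem~\ref{theorem:contact_angle_evolution} \emph{once} with the weighted field $\vsigma=\jump{\visc v}/\jump{\visc}$, so that the jump $\jump{\visc\,\inproduct{(\nabla v)\tau}{\nsigma}}$ appears immediately and the auxiliary lemma is never needed. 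This is slightly more economical; the paper's route has the advantage of making the one-sided evolutions $D^\pm\theta/Dt$ explicit along the way, which may be of independent interest when interfacial slip is present.
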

\begin{proof}
We start from \eqref{eqn:main_result} and, by a change of coordinates, obtain
 \begin{align*}
 \DDTpm{\theta} = &- \sin^2 \theta \inproduct{\nabla v^\pm \, \ndomega}{\ngamma}\\
 &+ \sin \theta \cos \theta (\inproduct{\nabla v^\pm \, \ndomega}{\ndomega} - \inproduct{\nabla v^\pm \, \ngamma}{\ngamma}),
 \end{align*}
 where the two time derivatives may now be different. Using the Navier condition, we can replace the first term according to
 \begin{align*}
  \DDTpm{\theta} = &\sin^2 \theta \, a^\pm \clspeed\\
  &+ \sin \theta \cos \theta (\inproduct{\nabla v^\pm \, \ndomega}{\ndomega} - \inproduct{\nabla v^\pm \, \ngamma}{\ngamma}).
 \end{align*}
 Note that $\il$ may now be discontinuous at $\Gamma$. The next step is to express $\dot{\theta}$ by means of the jump in normal stress. Thanks to $\visc^\pm > 0$ we can write
 \begin{align*} 
 \inproduct{\nabla v^\pm \, \ngamma}{\ngamma} &= \frac{1}{2\visc^\pm} \inproduct{S^\pm \ngamma}{\ngamma},\\
 \inproduct{\nabla v^\pm \, \ndomega}{\ndomega} &= \frac{1}{2\visc^\pm} \inproduct{S^\pm \ndomega}{\ndomega}. 
 \end{align*}
 Plugging this into the above equation for $\dot{\theta}$ gives
 \begin{align}
 \label{eqn:interfacial_slip_eq_4}
 \DDTpm{\theta} = &\sin^2 \theta \, \il^\pm \clspeed\nonumber\\
 &+ \frac{\sin \theta \cos \theta}{2\visc^\pm}  (\inproduct{S^\pm \ndomega}{\ndomega}-\inproduct{S^\pm \ngamma}{\ngamma}).
 \end{align}
 We introduce $\jump{S}$ by adding a zero term, i.e.
 \begin{align}
 &\DDTp{\theta} = \sin^2 \theta \, \il^+ \clspeed + \frac{\sin \theta \cos \theta}{2\visc^+} (\inproduct{(S^+-S^-) \ndomega}{\ndomega}\nonumber \\
  &-\inproduct{(S^+-S^-) \ngamma}{\ngamma}+\inproduct{S^- \ndomega}{\ndomega} -  \inproduct{S^- \ngamma}{\ngamma}).\label{eqn:interfacial_slip_eq_1}
 \end{align}
 Using the second version of the equation \eqref{eqn:interfacial_slip_eq_4} we have
 \begin{align*} 
 &\visc^- \DDTm{\theta} - \sin^2 \theta \, \il^- \visc^- \clspeed \\
 = \quad &\frac{\sin \theta \cos \theta}{2} \left(\inproduct{S^- \ndomega}{\ndomega} - \inproduct{S^- \ngamma}{\ngamma} \right).
 \end{align*}
 Together with \eqref{eqn:interfacial_slip_eq_1} we obtain (using $\lambda^\pm = \il^\pm \visc^\pm$)
 \begin{align}
 \label{eqn:interfacial_slip_eq_2}
 \visc^+ \DDTp{\theta} - \visc^- \DDTm{\theta} = \sin^2 \theta \jump{\lambda} \clspeed\nonumber\\
 +\frac{\sin \theta \cos \theta}{2} (\inproduct{\jump{S} \ndomega}{\ndomega} - \inproduct{\jump{S} \ngamma}{\ngamma}).
 \end{align}
Now we exploit the validity of both the Navier and the jump condition for the stress at the contact line. From $\psigma \jump{S} \nsigma = - \nablasigma\sigma$ we obtain
 \begin{align}
  -\partial_\tau \sigma = \inproduct{\jump{S}\nsigma}{\tau} = (\cos^2 \theta - \sin^2 \theta) \inproduct{\jump{S}\ndomega}{\ngamma}\nonumber\\
  + \sin \theta \cos \theta (- \inproduct{\jump{S}\ngamma}{\ngamma}+\inproduct{\jump{S}\ndomega}{\ndomega})\label{eqn:interfacial_slip_eq_3}
 \end{align}
From the Navier condition, i.e.
 \[ \lambda^\pm \clspeed + \inproduct{S^\pm \ndomega}{\ngamma} = 0,  \]
 we infer
 \[ \inproduct{\jump{S}\ndomega}{\ngamma} = - \jump{\lambda} \clspeed \]
 by taking the trace. Combined with \eqref{eqn:interfacial_slip_eq_3} we obtain
 \begin{align*} 
 &\sin \theta \cos \theta (\inproduct{\jump{S}\ndomega}{\ndomega} - \inproduct{\jump{S}\ngamma}{\ngamma})\\
 = \quad &(\cos^2 \theta - \sin^2 \theta) \jump{\lambda} \clspeed - \partial_\tau \sigma. 
 \end{align*}
Plugging in this expression into \eqref{eqn:interfacial_slip_eq_2}, we arrive at
 \begin{align*}
 & \quad \visc^+ \DDTp{\theta} - \visc^- \DDTm{\theta} \\
 &=  \jump{\lambda} \clspeed \left(\sin^2 \theta + \frac{\cos^2 \theta - \sin^2 \theta}{2}\right) - \frac{\partial_\tau \sigma}{2} \\
 &= \frac{1}{2}(\jump{\lambda}\clspeed - \partial_\tau \sigma).
 \end{align*}
Now the claim follows from \eqref{eqn:ddt_relation}.
\end{proof}

\begin{remark}
\begin{enumerate}[(i)]
 \item If the flow is incompressible and the velocity is continuous across $\Sigma$, equation~\eqref{eqn:jump_condition_slip_length} implies that the slip length has to be continuous across the contact line to allow for non-trivial regular solutions. In this case, we have $\jump{\lambda} = \il \jump{\visc}$ and \eqref{eqn:ca_evolution_with_if_slip} reduces to \eqref{eqn:ca_evolution_marangoni}.
 \item If the surface tension $\sigma$ is constant, we obtain the evolution equation 
 \[ \DSigmaDT{\theta} = \frac{\clspeed}{2} \frac{\jump{\lambda}}{\jump{\visc}} \]
 If both jumps have the same sign, i.e. if
 \[ \frac{\jump{\lambda}}{\jump{\visc}} \geq 0,\]
 the qualitative behavior of regular solutions is still the same as in Theorem~\ref{theorem:ca_evolution_in_standard_model}.
\end{enumerate}
\end{remark}

\subsection{Systems with phase change}
% Systems with phase change
So far we only discussed the case, when no phase transitions occur. We now generalize the results for non-zero mass transfer across the fluid-fluid interface. Given an interface with interface normal field $\nsigma$ and normal velocity $\normalspeed$, the one-sided \emph{mass transfer fluxes} are defined as
\begin{align*}
\dot{m}^\pm = \rho^\pm (v^\pm \cdot \nsigma - \normalspeed) \quad \text{on} \quad \gr\overline\Sigma.
\end{align*}
If the interface is not able to \emph{store} mass, the mass transfer flux has to be continuous, i.e. 
\begin{align}
\label{eqn:consistency_condition_mass_transfer}
\jump{\dot{m}}=0 \quad \Leftrightarrow \quad \jump{\rho v} \cdot \nsigma = \jump{\rho} \normalspeed.
\end{align}
Note that models for dynamic wetting allowing for mass on the fluid-fluid interface have also been considered under the name Interface Formation Model, see~\cite{Shikhmurzaev.1993},\cite{Shikhmurzaev.2008}. In the case without interfacial mass, the interfacial normal velocity can be expressed as
\[ \normalspeed = v^\pm \cdot \nsigma - \frac{\dot{m}}{\rho^\pm}.  \]
Since the interface is now transported by $\normalspeed \neq v^\pm \cdot \nsigma$, the mass flux influences the evolution of the interface. From the above relation, it follows that the mass transfer flux is related to the jump in the normal component of $v$ according to
\begin{align}
\label{eqn:normal_velocity_jump}
\jump{v} \cdot \nsigma = \jump{1/\rho} \dot{m} \quad \text{on} \quad \gr\Sigma.
\end{align}
At the contact line, we can express $\nsigma$ via $\ngamma$ and $\ndomega$, i.e.
\[ \jump{v} \cdot \nsigma = \sin \theta \jump{v} \cdot \ngamma - \cos \theta \jump{v} \cdot \ndomega \quad \text{on} \ \gr\Gamma. \]
For simplicity, we consider in the following the case of two spatial dimensions\footnote{Note that equation \eqref{eqn:consistency_condition_mass_transfer} implies that the ``natural'' interfacial velocity to be considered here is \eqref{eqn:def_vsigma_mass_transfer}, i.e.\ the bulk velocities should be weighted with the density $\rho$. Recall that in the case of interfacial slip without mass transfer, the natural interfacial velocity to choose is \eqref{eqn:interfacial_slip_vsigma}, i.e.\ weighted with the viscosity $\visc$. In the present case of interfacial slip with mass transfer, it is not obvious which interfacial velocity to choose. We exclude this problem by restricting the Theorem to the 2D case, where the interfacial velocity following the contact line is unique.}. If we assume $v^\pm$ to satisfy the impermeability condition, we find the following relation for the jump of $v$ at the contact line
\begin{align}
\label{eqn:jump_of_v_at_gamma}
\sin \theta \jump{v}_{|\Gamma} = \jump{1/\rho} \dot{m} \, \ngamma.
\end{align}
Note that the above relation only holds in two spatial dimensions. A slip tangential to the contact line may be present in three dimensions. In this case one can only state that
\[ (\mathds{1} - \inproduct{\tgamma}{\cdot} \tgamma) \sin \theta \jump{v}_{|\Gamma} = \jump{1/\rho} \dot{m} \, \ngamma.  \]

\begin{theorem}[]
\label{theorem:mass_transfer}
Let $\Omega\subset\RR^2$ be a half-space with boundary $\partial\Omega$, $\visc^\pm >0$, $\jump{\visc}\neq 0$ and $(v,\gr\overline\Sigma)$ with 
\[ v \in \mathcal{C}^1(\gr \overline{\Omega^+}) \cap \mathcal{C}^1(\gr \overline{\Omega^-})  \]
and $\gr\overline\Sigma$ a $\mathcal{C}^{1,2}$-family of moving hypersurfaces with boundary, be a classical solution of the PDE-system
\begin{align*}
\dot{m} \, \psigma \jump{v} + \psigma \jump{-S} \, \nsigma = \nablasigma \sigma \quad &\text{on} \ \Sigma(t),\\
\inproduct{v}{\ndomega}=0, \ \lambda \pdomega v + \pdomega S \ndomega = 0 \quad &\text{on} \ \partial\Omega\setminus\Gamma(t),\\
\normalspeed = \inproduct{v^\pm}{\nsigma} - \frac{\dot{m}}{\rho^\pm} \quad &\text{on} \ \Sigma(t)
\end{align*}
with $\theta \in (0,\pi)$ on $\gr\Gamma$. Then, the interfacial velocity field $\vsigma \in \mathcal{C}^1(\gr\overline\Sigma)$ defined as
\begin{align}
\label{eqn:def_vsigma_mass_transfer}
\vsigma := \frac{\jump{\rho v}}{\jump{\rho}} = \frac{\rho^+ v^+ - \rho^- v^-}{\rho^+ - \rho^-} \quad \text{on} \ \gr\overline\Sigma
\end{align}
satisfies the consistency conditions
\begin{align} 
\normalspeed = \inproduct{\vsigma}{\nsigma} \quad &\text{on} \ \gr\Sigma\label{eqn:kinematic_mass_transfer_1},\\
\clspeed = \inproduct{\vsigma}{\ngamma} \quad &\text{on} \ \gr\Gamma\label{eqn:kinematic_mass_transfer_2}
\end{align}
and the corresponding evolution of the contact angle is given by
\begin{align}
\label{eqn:contact_angle_evolution_mass_transfer_1}
\jump{\visc} \DSigmaDT{\theta} = \frac{\jump{\lambda}\clspeed}{2} - \frac{\partial_\tau \sigma}{2} - \jump{\visc/\rho} \partial_\tau \dot{m} \nonumber\\
+ \, \dot{m} \, (- \jump{\visc \partial_\tau (1/\rho)} - \kappa \cot \theta \jump{\frac{\visc}{\rho}}\nonumber\\
- \frac{\cot\theta}{2} \jump{\frac{1}{\rho}} \dot{m} + \frac{1}{2\sin\theta}\jump{\frac{\lambda}{\rho}} ).
\end{align}
In the special case of zero mass flux at the contact line, the above equation simplifies to
\begin{align}
\label{eqn:contact_angle_evolution_mass_transfer_2}
\jump{\visc} \DSigmaDT{\theta} = \frac{\jump{\lambda}\clspeed}{2} - \frac{\partial_\tau \sigma}{2} - \jump{\visc/\rho} \partial_\tau \dot{m}.
\end{align}
\end{theorem}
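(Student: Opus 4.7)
The plan is to verify that $\vsigma$ defined by \eqref{eqn:def_vsigma_mass_transfer} is a consistent interfacial velocity field, apply Theorem~\ref{theorem:contact_angle_evolution}, and then compute the resulting inner product in a local frame using all of the available boundary and interface conditions.

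First I would establish the kinematic consistency \eqref{eqn:kinematic_mass_transfer_1}--\eqref{eqn:kinematic_mass_transfer_2}. Writing the kinematic condition as $\rho^\pm \inproduct{v^\pm}{\nsigma} = \rho^\pm \normalspeed + \dot{m}$ and subtracting the two one-sided versions gives $\inproduct{\jump{\rho v}}{\nsigma} = \jump{\rho}\normalspeed$, hence $\inproduct{\vsigma}{\nsigma} = \normalspeed$ on $\gr\Sigma$. Wall impermeability $\inproduct{v^\pm}{\ndomega} = 0$ transfers verbatim to $\inproduct{\vsigma}{\ndomega} = 0$ on $\gr\Gamma$, so Lemma~\ref{lemma:kinematic_conditions} supplies $\inproduct{\vsigma}{\ngamma} = \clspeed$. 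Regularity $\vsigma \in \mathcal{C}^1(\gr\overline\Sigma)$ is inherited from $v^\pm$.

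With consistency in hand, Theorem~\ref{theorem:contact_angle_evolution} applied to $\vsigma$ yields
\[ \DSigmaDT{\theta} = \inproduct{\partial_\tau \vsigma}{\nsigma} \qquad \text{on} \ \gr\Gamma. \]
The bulk of the work is then to evaluate this inner product in the local frame $\{\ngamma,\ndomega\}$ of the two-dimensional setting. Expanding $\tau$ and $\nsigma$ via \eqref{eqn:tau_nsigma}, using impermeability along $\partial\Omega$ (which forces $\inproduct{\nabla v^\pm \, \ngamma}{\ndomega} = 0$) and Navier slip in the form $\inproduct{\nabla v^\pm\,\ndomega}{\ngamma} = -a^\pm \inproduct{v^\pm}{\ngamma}$, one obtains a one-sided expression for $\inproduct{\partial_\tau v^\pm}{\nsigma}$. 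The mass-transfer correction enters at this stage via the contact-line values $\inproduct{v^\pm}{\ngamma} = \clspeed + \dot{m}/(\rho^\pm \sin\theta)$, which follow from the kinematic condition combined with impermeability at the wall.

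To eliminate the remaining combination $\inproduct{\nabla v^\pm\,\ndomega}{\ndomega} - \inproduct{\nabla v^\pm\,\ngamma}{\ngamma}$, I would project the augmented stress transmission condition onto $\tau$ to obtain $\inproduct{\jump{S}\nsigma}{\tau} = \partial_\tau \sigma + \dot{m}\, \inproduct{\jump{v}}{\tau}$. A key input is the contact-line jump of $v$: the identity $\inproduct{\jump{v}}{\nsigma} = \jump{1/\rho}\dot{m}$ together with $\inproduct{\jump{v}}{\ndomega} = 0$ and the absence of a $\tgamma$-direction in 2D forces $\jump{v}|_\Gamma = (\jump{1/\rho}\dot{m}/\sin\theta)\,\ngamma$, hence $\inproduct{\jump{v}}{\tau} = -\cot\theta\,\jump{1/\rho}\dot{m}$. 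Combining the two one-sided expressions, weighted by $\rho^\pm$ so as to reconstruct $\partial_\tau \vsigma$ from $\partial_\tau v^\pm$ (this is what turns the $\rho$-weighting of \eqref{eqn:def_vsigma_mass_transfer} into the $\visc$- and $\lambda$-jumps on the right-hand side of \eqref{eqn:contact_angle_evolution_mass_transfer_1}), with the projected stress condition closes the system. The terms $-\jump{\visc/\rho}\partial_\tau \dot{m}$ and $-\dot{m}\jump{\visc \partial_\tau(1/\rho)}$ appear once $\partial_\tau$ is carried through $\jump{\rho v}/\jump{\rho}$, while the curvature term $-\dot{m}\kappa\cot\theta\,\jump{\visc/\rho}$ arises from $\partial_\tau \nsigma = -\kappa\tau$ in two dimensions. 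Setting $\dot{m} \equiv 0$ collapses every correction and reproduces \eqref{eqn:contact_angle_evolution_mass_transfer_2}, consistent with \eqref{eqn:ca_evolution_with_if_slip} of Theorem~\ref{theorem:interfacial_slip}.

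The main obstacle will be the accurate bookkeeping: three distinct mechanisms contribute $\dot{m}$-dependent terms (the shift of $\inproduct{v^\pm}{\ngamma}$ away from $\clspeed$, the additive $\dot{m}\jump{v}$ in the augmented stress jump, and the action of $\partial_\tau$ on the coefficients $\dot{m}/\rho^\pm$), and each must be weighted correctly by $\visc^\pm$ and $\lambda^\pm$ when assembled through the $\rho$-weighted $\vsigma$. The restriction to two spatial dimensions is precisely what makes the argument tractable, since it removes any ambiguity in the value of $\jump{v}|_\Gamma$ along the contact-line direction (there is no $\tgamma$-component to accommodate).
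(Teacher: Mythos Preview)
Your overall plan matches the paper's: verify consistency of $\vsigma$, invoke Theorem~\ref{theorem:contact_angle_evolution}, expand in the local frame, and close via the Navier and tangential stress conditions with the $\dot{m}$-corrections tracked carefully. The identification of $\jump{v}|_\Gamma$, of $\inproduct{v^\pm}{\ngamma}$, and of the curvature term via $\partial_\tau\nsigma$ are all exactly as in the paper.

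There is, however, a genuine confusion in your description of the weighting step. You propose to combine the two one-sided expressions $\inproduct{\partial_\tau v^\pm}{\nsigma}$ with weights $\rho^\pm$ ``so as to reconstruct $\partial_\tau\vsigma$''. That does rebuild $\DSigmaDT{\theta}$, but it does \emph{not} produce the clean $\jump{\visc}$, $\jump{\lambda}$, $\jump{S}$ structure of \eqref{eqn:contact_angle_evolution_mass_transfer_1}; a $\rho$-weighted combination of $a^\pm\inproduct{v^\pm}{\ngamma}$ with $a^\pm=\lambda^\pm/\visc^\pm$ does not collapse to a single jump. The paper's device is different: one first uses the algebraic identity $\vsigma = v^\pm - \jump{v}/(\rho^\pm\jump{1/\rho})$ to write $\DSigmaDT{\theta}$ in \emph{two equivalent} one-sided forms $\inproduct{\nabla v^\pm\tau}{\nsigma}+\mathcal{R}^\pm$, then multiplies these by $\visc^\pm$ and subtracts. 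Because both expressions equal $\DSigmaDT{\theta}$, the left-hand side becomes $\jump{\visc}\DSigmaDT{\theta}$, while the right-hand side is naturally organized in terms of $\jump{S}$ and $\jump{\visc\mathcal{R}}$, which the stress and Navier conditions then evaluate. This is the same mechanism as in Theorem~\ref{theorem:interfacial_slip}, and it is the $\visc$-weighting (not the $\rho$-weighting) that does the work.

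Two smaller points: the projected stress condition reads $\inproduct{\jump{S}\nsigma}{\tau} = -\partial_\tau\sigma + \dot{m}\inproduct{\jump{v}}{\tau}$ (your sign on $\partial_\tau\sigma$ is off); and the specialization \eqref{eqn:contact_angle_evolution_mass_transfer_2} is for $\dot{m}|_\Gamma=0$, not $\dot{m}\equiv 0$, so the term $-\jump{\visc/\rho}\partial_\tau\dot{m}$ survives and is the whole point of that formula.
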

\begin{proof}
We observe that, as a consequence of \eqref{eqn:consistency_condition_mass_transfer}, the velocity $\vsigma$ defined by \eqref{eqn:def_vsigma_mass_transfer} satisfies the consistency condition \eqref{eqn:kinematic_mass_transfer_1}. It also satisfies $\vsigma \cdot \ndomega = 0$ on $\gr\Gamma$ since $v^\pm$ are tangential to $\partial\Omega$. Hence, Lemma~\ref{lemma:kinematic_conditions} implies that  \eqref{eqn:kinematic_mass_transfer_2} also holds. Moreover, it is easy to check that $\vsigma$  can be expressed in two different ways as
\begin{align}
\vsigma = v^\pm - \frac{\jump{v}}{\rho^\pm \jump{1/\rho}} \quad \text{on} \quad \gr\overline\Sigma.
\end{align}
Applying Theorem~\ref{theorem:contact_angle_evolution} yields, using \eqref{eqn:normal_velocity_jump},
\begin{align}
\label{eqn:dot_theta_mass_transfer_1}
\DSigmaDT{\theta} &= \inproduct{\partial_\tau v^\pm}{\nsigma} - \inproduct{\frac{\partial}{\partial \tau} \frac{\jump{v}}{\rho^\pm \jump{1/\rho}}}{\nsigma}\nonumber\\
&= \inproduct{\partial_\tau v^\pm}{\nsigma} - \partial_\tau \inproduct{\frac{\jump{v}}{\rho^\pm \jump{1/\rho}}}{\nsigma}\nonumber\\
& - \frac{\jump{v}\cdot\tau}{\rho^\pm \jump{1/\rho}} \inproduct{\tau}{\partial_\tau \nsigma}\nonumber\\
&= \inproduct{\nabla v^\pm \tau}{\nsigma} - \partial_\tau \frac{\dot{m}}{\rho^\pm} - \kappa \cot \theta \frac{\dot{m}}{\rho^\pm}\nonumber\\
&=: \inproduct{\nabla v^\pm \tau}{\nsigma} + \mathcal{R}.
\end{align}
Here we used the relation 
\[ \frac{\jump{v}\cdot\tau}{\jump{1/\rho}} = \frac{\dot{m}}{\sin\theta} \, \ngamma \cdot \tau = - \cot \theta \, \dot{m} \quad \text{on} \ \gr\Gamma, \]
which follows from \eqref{eqn:jump_of_v_at_gamma}. Multiplication of the first term in \eqref{eqn:dot_theta_mass_transfer_1} with $\visc^\pm$ together with a change of basis vectors yields 
\begin{align*}
\visc^\pm \inproduct{\nabla v^\pm \tau}{\nsigma} = - \sin^2 \theta \, \visc^\pm \inproduct{\nabla v^\pm \ndomega}{\ngamma} \\
+ \sin \theta \cos \theta \, \visc^\pm (\inproduct{\nabla v^\pm \ndomega}{\ndomega}  - \inproduct{\nabla v^\pm \ngamma}{\ngamma})\\
= - \sin^2 \theta \inproduct{S^\pm \ndomega}{\ngamma} + \frac{\sin \theta \cos \theta}{2} \\
(\inproduct{S^\pm \ndomega}{\ndomega} - \inproduct{S^\pm \ngamma}{\ngamma}).
\end{align*}
We may now multiply \eqref{eqn:dot_theta_mass_transfer_1} by $\visc^\pm$ to find
\begin{align}
\label{eqn:dot_theta_mass_transfer_2}
\jump{\visc} \DSigmaDT{\theta} = \jump{\visc \mathcal{R}} - \sin^2 \theta \inproduct{\jump{S}\ndomega}{\ngamma}\nonumber\\
+ \frac{\sin \theta \cos \theta}{2}(\inproduct{\jump{S}\ndomega}{\ndomega}-\inproduct{\jump{S}\ngamma}{\ngamma}).
\end{align}
The tangential stress condition at $\Gamma$ reads
\begin{align*}
\inproduct{\jump{S}\nsigma}{\tau} &= - \partial_\tau \sigma + \dot{m} \jump{v} \cdot \tau \\
&= - \partial_\tau \sigma - \cot \theta \jump{1/\rho} \dot{m}^2.
\end{align*}
We rewrite the left-hand side using the expansions for $\nsigma$ and $\tau$, i.e.
\begin{equation}
\label{eqn:stress_condition_mass_transfer}
\begin{aligned}
\inproduct{\jump{S}\nsigma}{\tau} = (\cos^2 \theta - \sin^2 \theta) \inproduct{\jump{S}\ndomega}{\ngamma} \\
+ \sin \theta \cos \theta (\inproduct{\jump{S}\ndomega}{\ndomega}-\inproduct{\jump{S}\ngamma}{\ngamma})\\
= - \partial_\tau \sigma - \cot \theta \jump{1/\rho} \dot{m}^2.
\end{aligned}
\end{equation}
From \eqref{eqn:dot_theta_mass_transfer_2} and \eqref{eqn:stress_condition_mass_transfer} we obtain
\begin{align}
\jump{\visc} \DSigmaDT{\theta} = \jump{\visc\mathcal{R}} -\frac{1}{2} \inproduct{\jump{S}\ndomega}{\ngamma}\nonumber\\
- \frac{\partial_\tau \sigma}{2} - \frac{\cot\theta}{2}\jump{1/\rho} \dot{m}^2.
\end{align}
Using \eqref{eqn:vsigma_vgamma} we can compute the contact line velocity
\begin{align*} 
\sin \theta \, \clspeed &= v^\pm \cdot \nsigma - \frac{\jump{v}\cdot\nsigma}{\rho^\pm \jump{1/\rho}}\\
&= \sin \theta \, v^\pm \cdot \ngamma - \frac{\dot{m}}{\rho^\pm}.
\end{align*}
Hence the contact line velocity reads
\begin{align}
\label{eqn:cl_speed_mass_transfer}
\clspeed = v^\pm \cdot \ngamma - \frac{\dot{m}}{\sin\theta \rho^\pm} \quad \text{on} \quad \gr\Gamma. 
\end{align}
Equation \eqref{eqn:jump_of_v_at_gamma} shows that the above expression is indeed well-defined, i.e.\ the two representations are equal. Note that $\Gamma$ is no longer a material interface. The mass transfer term can cause a motion of the contact line. Hence the Navier condition at $\Gamma$ reads
\begin{align*}
\inproduct{S^\pm \ndomega}{\ngamma} &= - \lambda^\pm v^\pm \cdot \ngamma\\
&= - \lambda^\pm\left( \clspeed + \frac{\dot{m}}{\rho^\pm \sin \theta}\right).
\end{align*}
We, therefore, obtain the jump condition
\begin{align}
\inproduct{\jump{S}\ndomega}{\ngamma} = - \jump{\lambda}\clspeed - \frac{\dot{m}}{\sin\theta}\jump{\frac{\lambda}{\rho}}
\end{align}
on $\gr\Gamma$. This finally leads to
\begin{align*}
\jump{\visc} \DSigmaDT{\theta} = \frac{\jump{\lambda}\clspeed}{2} - \frac{\partial_\tau \sigma}{2} - \jump{\visc \, \partial_\tau \left( \frac{\dot{m}}{\rho} \right)} \\
- \frac{\dot{m}}{\sin \theta} \left( \kappa \cos \theta \jump{\frac{\visc}{\rho}} + \frac{\cos\theta}{2} \jump{\frac{1}{\rho}} \dot{m} - \frac{1}{2}\jump{\frac{\lambda}{\rho}} \right).
\end{align*}
The claim follows from 
\[ \jump{\visc \, \partial_\tau \left( \frac{\dot{m}}{\rho} \right)} = \jump{\visc/\rho} \partial_\tau \dot{m} + \dot{m} \jump{\visc \partial_\tau (1/\rho)}.\qedhere  \]
\end{proof}

\section{Conclusion}
% Conclusions
The kinematic evolution equation for the dynamic contact angle \eqref{eqn:main_result} expresses the relation between the rate of change of the contact angle and the structure of the transporting interfacial velocity field $\vsigma$ defined on the moving interface with boundary. It holds whenever $\vsigma$ is continuously differentiable and can, therefore, be applied to study \textit{potential} regular solutions for various models of dynamic wetting.\newline
\newline
We used \eqref{eqn:main_result} to derive the contact angle evolution equation \eqref{eqn:evolution_equation_standard_model} which describes the time evolution of the contact angle for \emph{potential} regular solutions of the two-phase incompressible Navier-Stokes equations with a Navier slip boundary condition. Together with the usual modeling for the dynamic contact angle respecting the condition \eqref{eqn:advancing_receding_condition}, the contact angle evolution turns out to be unphysical if the slip length is positive and finite. Hence a (weak) singularity is present at the contact line even if the contact angle is allowed to vary. Note that the presence of a singularity might cause challenges for numerical simulations as pointed out in \cite{Sui.2014},\cite{Sprittles.2011}. It might, therefore, be necessary to develop numerical methods which incorporate some a priori knowledge about the singularity at the contact line, see for example \cite{Sprittles.2011b}.\newline
\newline
We prove that, for regular solutions, the contact angle stays constant over time if the slip length is infinite at the contact line. In this case, the pressure is regular at the contact line as pointed out for example in \cite{Shikhmurzaev.2006}.\newline
\newline
We also studied the contact angle evolution for regular solutions to more general models in Section~\ref{section:generalizations}. Interestingly, a surface tension gradient at the contact line may give rise to physically reasonable regular solutions, see \eqref{eqn:ca_evolution_marangoni}. We also analyzed the case of slip at the fluid-fluid interface, see \eqref{eqn:ca_evolution_with_if_slip}. Here we find qualitatively similar results if
\[ \frac{\jump{\lambda}}{\jump{\visc}}_{|\Gamma} \geq 0. \]
Finally, we applied the kinematic evolution equation to a class of models including phase transfer across the fluid-fluid interface, see \eqref{eqn:contact_angle_evolution_mass_transfer_1}. In this case the interface moves with its own velocity and, hence, additional terms, depending on the mass transfer rate $\dot{m}$, appear in the contact angle evolution equation.\newline
\newline
Moreover, the kinematic evolution equation \eqref{eqn:main_result} can be used as a reference to validate numerical methods. We will address the advective transport of the contact angle by a prescribed velocity field in a forthcoming paper.
%%%%%%%

\paragraph{Acknowledgements:} We kindly acknowledge the financial support by the German Research Foundation (DFG) within the Collaborative Research Center 1194 “Interaction of Transport and Wetting Processes”, Project B01.

% Appendix
\clearpage
\appendix
\section{Appendix}
% Appendix
\begin{lemma}[Separated local parametrization]
\label{lemma:local_parametrization_old_appendix}
Let $\{\Sigma(t)\}_{t \in I}$ be a $\mathcal{C}^{1,2}$-family of moving hypersurfaces and $(t_0,x_0)$ be an inner point of $\move=\gr\Sigma$. Then there exists an open neighborhood $U \subseteq \RR^4$ of $(t_0,x_0)$, $\delta, \varepsilon > 0$ and a $\mathcal{C}^1$-parametrization
\begin{align*}
\phi: \, \underbrace{(t_0 - \delta, t_0 + \delta)}_{=:I_\delta(t_0)} \times B^2_\varepsilon(0) \, \rightarrow \, \move \cap U
\end{align*}
of $\move$ such that $\phi(t_0,0) = (t_0,x_0)$ and
\begin{align*}
\phi(t,\cdot): \, B^2_\varepsilon(0) \, \rightarrow \, \{t\} \times \Sigma(t)
\end{align*}
is a $\mathcal{C}^2$-parametrization of $\Sigma(t)$. In particular
\begin{align*}
\phi(t,u) = (t,\hat{\phi}(t,u)), 
\end{align*}
with a $\mathcal{C}^1$-function $\hat{\phi}$.
\end{lemma}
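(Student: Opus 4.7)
The plan is to obtain $\phi$ as the local inverse of the projection of $\mathcal{M}$ onto the time axis times the tangent plane $T_{x_0}\Sigma(t_0)$. Concretely, I fix an orthonormal basis $\{\tau_1, \tau_2\}$ of $T_{x_0}\Sigma(t_0)$ so that $\nsigma(t_0, x_0) = \tau_1 \times \tau_2$, and define the $\mathcal{C}^1$-map
\[ F: \move \cap U' \to \RR \times \RR^2, \qquad F(t, x) = (t, \inproduct{x - x_0}{\tau_1}, \inproduct{x - x_0}{\tau_2}), \]
on a small neighborhood $U'$ of $(t_0, x_0)$. Since we want the first component of the resulting parametrization to be exactly $t$, I set $\phi := F^{-1}$ and verify the hypotheses of the inverse function theorem on the $\mathcal{C}^1$-manifold $\move$.

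By Lemma~\ref{lemma:tangent_spaces}, the tangent space $T_{\gr\Sigma}(t_0, x_0)$ is spanned by $(1, \normalspeed \nsigma)$ together with $(0, \tau_1)$ and $(0, \tau_2)$. Since $\inproduct{\nsigma}{\tau_i} = 0$, the differential $dF(t_0, x_0)$ sends $(1, \normalspeed \nsigma) \mapsto (1, 0, 0)$ and $(0, \tau_i) \mapsto (0, e_i)$, so it is a linear isomorphism onto $\RR^3$. The $\mathcal{C}^1$-inverse function theorem produces a $\mathcal{C}^1$-diffeomorphism $\phi$ from a neighborhood of $(t_0, 0)$ onto a neighborhood of $(t_0, x_0)$ in $\move$. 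Choosing $\delta, \varepsilon > 0$ small enough so that $I_\delta(t_0) \times B_\varepsilon^2(0)$ lies in this neighborhood yields a parametrization of the desired form $\phi(t, u) = (t, \hat\phi(t, u))$ with $\hat\phi \in \mathcal{C}^1$.

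It remains to upgrade $\hat\phi(t, \cdot)$ to a $\mathcal{C}^2$-parametrization of $\Sigma(t)$ for each fixed $t$. By construction, $\hat\phi(t, \cdot)$ is the local inverse of the spatial projection
\[ \pi_t: \Sigma(t) \to V := \linspan\{\tau_1, \tau_2\}, \qquad \pi_t(x) = \inproduct{x - x_0}{\tau_1}\,\tau_1 + \inproduct{x - x_0}{\tau_2}\,\tau_2. \]
Because $\Sigma(t)$ is a $\mathcal{C}^2$-hypersurface, $\pi_t$ is a $\mathcal{C}^2$-map on $\Sigma(t)$. At $t = t_0$ and $x = x_0$, $d\pi_{t_0}$ restricted to $T_{x_0}\Sigma(t_0) = V$ is the identity. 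For $t$ close to $t_0$ and $x$ close to $x_0$, the continuity of $\nsigma \in \mathcal{C}^1(\move)$ yields that $T_x\Sigma(t)$ remains transverse to $\nsigma(t_0, x_0)$, so $d\pi_t$ retains full rank. The classical $\mathcal{C}^2$-inverse function theorem then produces a $\mathcal{C}^2$ local inverse of $\pi_t$, which by uniqueness of inverses coincides with $\hat\phi(t, \cdot)$ on the overlap of domains.

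The main subtlety is the simultaneous selection of $\delta$ and $\varepsilon$: a single pair of constants must validate both the $\mathcal{C}^1$-inverse function theorem on $\move$ and, for every $t \in I_\delta(t_0)$, the $\mathcal{C}^2$-inverse function theorem for $\pi_t$ on a ball around $\hat\phi(t, 0)$ of radius at least $\varepsilon$. This is achieved by first fixing the neighborhood from the IFT on $\move$, then shrinking $\delta$ so that $\norm{\nsigma(t, \hat\phi(t, u)) - \nsigma(t_0, x_0)}$ stays bounded below $1/2$ uniformly on the compact set $\overline{I_\delta(t_0)} \times \overline{B_\varepsilon^2(0)}$, which gives the uniform non-degeneracy of $d\pi_t$ needed to invoke the spatial IFT on a uniform domain.
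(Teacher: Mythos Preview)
Your argument is correct and takes a genuinely different route from the paper. The paper starts from an \emph{arbitrary} local $\mathcal{C}^1$-chart $\psi=(\psi_t,\psi_x):B^3_\eta(0)\to\gr\Sigma$ and then uses the Implicit Function Theorem to rectify the time coordinate: since $\partial_{u_i}\psi_t(0)\neq 0$ for some $i$, one solves $\psi_t(\varphi(s,y),y)=s$ for $\varphi$ and sets $\phi=\psi\circ T$ with $T(s,y)=(\varphi(s,y),y)$. Your approach instead projects $\gr\Sigma$ onto $\RR\times T_{x_0}\Sigma(t_0)$ and inverts. Both are standard; your graph-projection argument has the advantage that the $\mathcal{C}^2$-regularity of $\hat\phi(t,\cdot)$ is obtained cleanly as the $\mathcal{C}^2$-inverse of a linear map restricted to a $\mathcal{C}^2$-surface, whereas the paper's proof does not spell this step out (the composition $\psi\circ T$ is a priori only $\mathcal{C}^1$). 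The paper's approach, on the other hand, is self-contained and does not rely on any prior knowledge of the tangent space of $\gr\Sigma$.

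One point to watch: you invoke Lemma~\ref{lemma:tangent_spaces} to check that $dF(t_0,x_0)$ is an isomorphism, but in the paper that lemma is \emph{proved using} the separated parametrization you are constructing (see Lemma~\ref{lemma:tangent_spaces_appendix}). To avoid the circularity you only need the weaker, immediate fact that $\{0\}\times T_{x_0}\Sigma(t_0)\subset T_{(t_0,x_0)}\gr\Sigma$ (because $\{t_0\}\times\Sigma(t_0)\subset\gr\Sigma$); any complementary tangent vector then has nonzero time component, and the Jacobian of $F$ with respect to the basis $(1,w),(0,\tau_1),(0,\tau_2)$ is lower triangular with ones on the diagonal regardless of $w$. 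With this small adjustment your proof stands on its own.
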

\begin{proof}
By definition of a $\mathcal{C}^{1,2}$-family of moving hypersurfaces, there is $\eta > 0$ and an open neighborhood of $(t_0,x_0) \in \RR^4$ and a local $\mathcal{C}^1$-parametrization
\begin{align}
\psi = (\psi_t,\psi_x): \RR^3 \subset B^3_\eta(0) \rightarrow \gr \Sigma \cap U
\end{align}
such that $\psi(0) = (t_0,x_0)$ and $\psi_x(u) \in \Sigma(\psi_t(u))$ for all $u \in B^3_\eta(0)$. The goal is to find a coordinate transformation 
\[ T: I_\delta(t_0) \times B^2_\varepsilon(0) \rightarrow B^3_\eta(0) \] 
such that
\begin{align*} 
\psi_t(T(s,y_1,y_2)) = s. 
\end{align*}
Since $\psi$ is injective, there is $i \in \{1,2,3\}$ such that
\begin{align}
\label{eqn:invertibility_condition_old}
(\partial_{u_i} \psi_t)(0,0,0) \neq 0, 
\end{align}
where we may assume $i=1$. We now choose a special function $T$ of the form
\[ T(s,y_1,y_2) = (\varphi(s,y_1,y_2),y_1,y_2) \]
and look for a function $\varphi$ satisfying
\begin{align*} 
&\psi_t(\varphi(s,y_1,y_2),y_1,y_2) = s\\
\Leftrightarrow \quad &0 = f(s,y_1,y_2;\varphi(s,y_1,y_2)).
\end{align*}
The $\mathcal{C}^1$-function $f(s,y_1,y_2;\varphi):=\psi_t(\varphi,y_1,y_2)-s$ satisfies $f(t_0,0,0;0)=0$ and \eqref{eqn:invertibility_condition_old} implies
\[ \partial_\varphi f(t_0,0,0;0) \neq 0. \]
Now the claim follows by the Implicit Function Theorem.
\end{proof}
Note that exactly the same procedure yields a $\mathcal{C}^1$-parametrization of the submanifold $\gr\Gamma$ of the form
\[ \phi: \, (t_0 - \delta, t_0 + \delta) \times (-\varepsilon,\varepsilon) \, \rightarrow \, \gr\Gamma \cap U \]
such that $\phi(t_0,\cdot)$ is a $\mathcal{C}^1$-parametrization of $\Gamma(t_0)$. As a consequence of that, we can give an explicit characterization of the tangent spaces of $\gr\Sigma$ and $\gr\Gamma$.

\begin{lemma}[Tangent spaces]
\label{lemma:tangent_spaces_appendix}
The tangent space of $\gr\Sigma$ at the point $(t,x)$ is given by
\begin{align*}
T_{\gr\Sigma}(t,x) = \{ \lambda \, (1,\normalspeed \nsigma(t,x)) + (0,\tau):\\
\lambda \in \RR, \, \tau \in T_{\Sigma(t)}(x) \}.
\end{align*}
Likewise, the tangent space of $\gr\Gamma$ at the point $(t,x)$ is given by
\begin{align*}
T_{\gr\Gamma}(t,x) = \{ \lambda \, (1,\clspeed \ngamma(t,x)) + (0,\tau):\\
\lambda \in \RR, \, \tau \in T_{\Gamma(t)}(t,x) \}.
\end{align*}
\end{lemma}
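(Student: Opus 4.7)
The plan is to use the separated local parametrization constructed in Lemma~\ref{lemma:local_parametrization_old_appendix}, compute the image of its differential, and identify the resulting subspace of $\RR^4$ with the expression in the statement. The two cases (for $\gr\Sigma$ and $\gr\Gamma$) are structurally identical, so I would handle $\gr\Sigma$ in detail and then sketch the small modifications needed for $\gr\Gamma$.

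First, I would fix $(t_0,x_0) \in \gr\Sigma$ and invoke Lemma~\ref{lemma:local_parametrization_old_appendix} to obtain $\phi(t,u) = (t,\hat\phi(t,u))$ with $\hat\phi(t,\cdot)$ a $\mathcal{C}^2$-parametrization of $\Sigma(t)$ near $x_0$ and $\hat\phi(t_0,0)=x_0$. By the definition of the tangent space of a submanifold, $T_{\gr\Sigma}(t_0,x_0)$ is the image of $D\phi(t_0,0)$, so it is spanned by the three vectors
\begin{equation*}
\partial_t \phi(t_0,0) = (1, \partial_t \hat\phi(t_0,0)), \qquad \partial_{u_k} \phi(t_0,0) = (0, \partial_{u_k} \hat\phi(t_0,0)), \ k=1,2.
\end{equation*}
The two spatial partial derivatives $\partial_{u_k}\hat\phi(t_0,0)$ form a basis of $T_{\Sigma(t_0)}(x_0)$, so the subspace $\{(0,\tau): \tau \in T_{\Sigma(t_0)}(x_0)\}$ is already contained in $T_{\gr\Sigma}(t_0,x_0)$.

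Next, I would identify the normal component of $\partial_t \hat\phi(t_0,0)$. The curve $t \mapsto (t,\hat\phi(t,0))$ lies on $\gr\overline\Sigma$ and satisfies $\hat\phi(t_0,0) = x_0$, so by the definition of the normal velocity,
\begin{equation*}
\inproduct{\partial_t \hat\phi(t_0,0)}{\nsigma(t_0,x_0)} = \normalspeed(t_0,x_0).
\end{equation*}
Decomposing $\partial_t \hat\phi(t_0,0) = \normalspeed\, \nsigma + \tau_0$ with $\tau_0 \in T_{\Sigma(t_0)}(x_0)$, any element of $T_{\gr\Sigma}(t_0,x_0)$ can be written as
\begin{equation*}
\lambda \, (1,\partial_t \hat\phi(t_0,0)) + (0,\tilde\tau) = \lambda \, (1,\normalspeed \nsigma) + (0, \lambda \tau_0 + \tilde\tau),
\end{equation*}
which yields the claimed inclusion $\subseteq$. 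Conversely, given $\lambda \in \RR$ and $\tau \in T_{\Sigma(t_0)}(x_0)$, writing $\tau = \lambda\tau_0 + \tilde\tau$ with $\tilde\tau = \tau - \lambda\tau_0 \in T_{\Sigma(t_0)}(x_0)$ and choosing coefficients in the basis $\{\partial_{u_1}\hat\phi,\partial_{u_2}\hat\phi\}$ for $\tilde\tau$ exhibits $\lambda(1,\normalspeed\nsigma)+(0,\tau)$ as an element of the image of $D\phi(t_0,0)$, proving the opposite inclusion.

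For $\gr\Gamma$, the analogous separated parametrization reduces the spatial coordinate to a single $u \in (-\varepsilon,\varepsilon)$, with $\partial_u \hat\phi(t_0,0)$ spanning $T_{\Gamma(t_0)}(x_0)$. The key step is the contact-line analogue: the curve $t\mapsto \hat\phi(t,0) \in \Gamma(t) \subset \partial\Omega$ has $\partial_t \hat\phi(t_0,0) \cdot \ndomega = 0$, and by the definition of the contact line velocity, $\inproduct{\partial_t \hat\phi(t_0,0)}{\ngamma(t_0,x_0)} = \clspeed(t_0,x_0)$. Hence $\partial_t \hat\phi(t_0,0) = \clspeed \ngamma + \mu \tgamma$ for some $\mu \in \RR$, and since $\tgamma \in T_{\Gamma(t_0)}(x_0)$ the remainder is absorbed into the $(0,\tau)$ component exactly as before. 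I do not anticipate a genuine obstacle here: the only nontrivial ingredient is recognizing that the tangential piece of $\partial_t\hat\phi$ is absorbed into the free $\tau$-summand, which is a routine linear-algebra bookkeeping step once the separated parametrization is in hand.
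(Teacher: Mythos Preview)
Your proposal is correct and follows essentially the same approach as the paper: both use the separated local parametrization of Lemma~\ref{lemma:local_parametrization_old_appendix}, compute the spanning set $\{(1,\partial_t\hat\phi),(0,\partial_{u_k}\hat\phi)\}$, and then decompose $\partial_t\hat\phi$ into its normal component (identified via the definition of $\normalspeed$, respectively $\clspeed$) and a tangential remainder absorbed into the $(0,\tau)$ summand. Your treatment is slightly more explicit in writing out both inclusions and in handling the $\gr\Gamma$ case via the orthonormal frame $\{\ngamma,\tgamma,\ndomega\}$, but this is a presentational difference only.
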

\begin{proof}
We make use of the parametrization constructed in Lemma~\ref{lemma:local_parametrization_old_appendix}.\newline
\newline
For $(t_0,x_0) \in \gr\Sigma$ choose a $\mathcal{C}^1$-parametrization
 \begin{align*} 
 \phi: \, (t_0 - \delta, t_0 + \delta) \times B^2_\varepsilon(0) \, \rightarrow \, \move \cap U,\\
 \phi(t,u_1,u_2) = (t,\hat{\phi}(t,u_1,u_2)) 
 \end{align*}
 such that $\phi(t_0,0,0) = (t_0,x_0)$. A basis for the tangent space $\tangentspace{\gr\Sigma}(t_0,x_0)$ is then given by
 \begin{align*} 
 \{ \partial_t\phi(t_0,0,0), \partial_{u_1}\phi(t_0,0,0), \partial_{u_2}\phi(t_0,0,0) \} \\
 = \{(1,\partial_t \hat{\phi}(t_0,0,0)), (0,\partial_{u_1} \hat{\phi}(t_0,0,0)),\\
 (0,\partial_{u_2} \hat{\phi}(t_0,0,0)) \},
 \end{align*}
 where the vectors 
 \[ v_1:=\partial_{u_1} \hat{\phi}(t_0,0,0) \quad \text{and} \quad v_2:=~\partial_{u_2} \hat{\phi}(t_0,0,0) \]
 constitute a basis of $T_{\Sigma(t_0)}(x_0)$. By definition of the normal velocity $\normalspeed$, we have
 \begin{align*}
 \partial_t \hat{\phi}(t_0,0,0) = &\inproduct{\partial_t \hat{\phi}(t_0,0,0)}{\nsigma(t_0,x_0)}\nsigma(t_0,x_0)\\
 &+ \psigma \, \partial_t \hat{\phi}(t_0,0,0) \\
 = &\normalspeed(t_0,x_0) \nsigma(t_0,x_0) + \psigma \, \partial_t \hat{\phi}(t_0,0,0).
 \end{align*}
 Since the second term can be expressed in terms of $v_1$ and $v_2$, we obtain a basis of the desired form
 \begin{align*} 
 \{(1,\normalspeed(t_0,x_0)\nsigma(t_0,x_0)), (0,\partial_{u_1} \hat{\phi}(t_0,0,0)),\\
 (0,\partial_{u_2} \hat{\phi}(t_0,0,0)) \}. 
 \end{align*}
 For $(t_0,x_0) \in \gr\Gamma$ choose a $\mathcal{C}^1$-parametrization
 \begin{align*} 
 \phi: \, (t_0 - \delta, t_0 + \delta) \times (-\varepsilon,\varepsilon) \, \rightarrow \, \gr\Gamma \cap U,\\
 \phi(t,u) = (t,\hat{\phi}(t,u)), 
 \end{align*}
 such that $\phi(t_0,0) = (t_0,x_0)$. The same procedure as above shows that the set
 \[ \{(1,\clspeed\ngamma(t_0,x_0)), (0,\partial_u \hat{\phi}(t_0,0)  \} \]
 is a basis of the Tangent space $\tangentspace{\gr\Gamma}(t_0,x_0)$, where $\partial_u \hat{\phi}(t_0,0)$ is a basis of $\tangentspace{\Gamma(t_0)}(x_0)$.\qedhere

\end{proof}

%%%%%%%%%%
\begin{lemma}[Signed distance function]
Let $\{\Sigma(t)\}_{t \in I}$ be a $\mathcal{C}^{1,2}$-family of moving hypersurfaces and $(t_0,x_0)$ be an inner point of $\move=\gr\Sigma$. Then there exists an open neighborhood $U \subset \RR^4$ of $(t_0,x_0)$ and $\varepsilon > 0$ such that the map
\begin{align*}
\tube: (\move \cap U) \times (-\varepsilon, \varepsilon) \rightarrow \RR^4,\\ 
\tube(t,x,h) := (t, x + h \nsigma(t,x))\nonumber
\end{align*}
is a diffeomorphism onto its image
\begin{align*}
\mathcal{N}^\varepsilon := \tube((\move \cap U) \times (-\varepsilon \times \varepsilon)) \subset \RR^4 ,
\end{align*}
i.e.\ $\tube$ is invertible there and both $\tube$ and $\tube^{-1}$ are $\mathcal{C}^1$. The inverse function has the form
\begin{align*}
\tube^{-1}(t,x) = (\pi(t,x),d(t,x))
\end{align*}
with $\mathcal{C}^1$-functions $\pi$ and $d$ on $\mathcal{N}^\varepsilon$.
\end{lemma}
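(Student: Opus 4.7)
The plan is to reduce the claim to the classical Inverse Function Theorem by passing to local coordinates on $\move$ via the separated local parametrization of Lemma~\ref{lemma:local_parametrization_old_appendix}. First I would pick the neighborhood $U$, scales $\delta,\varepsilon_0 > 0$, and the $\mathcal{C}^1$-parametrization
$\phi(t,u_1,u_2) = (t,\hat\phi(t,u_1,u_2))$
of $\move\cap U$ given by that lemma, with $\phi(t_0,0,0)=(t_0,x_0)$. Since $\nsigma\in\mathcal{C}^1(\move)$ by the definition of a $\mathcal{C}^{1,2}$-family, the composition $(t,u)\mapsto\nsigma(\phi(t,u))$ is $\mathcal{C}^1$ as well, so the map
$$\Psi(t,u_1,u_2,h):=(t,\hat\phi(t,u_1,u_2)+h\,\nsigma(\phi(t,u_1,u_2)))$$
defined on $I_\delta(t_0)\times B^2_{\varepsilon_0}(0)\times\RR$ is $\mathcal{C}^1$ and satisfies $\tube\circ(\phi\times\mathrm{id})=\Psi$.

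Next I would verify that $D\Psi$ is invertible at the base point $(t_0,0,0,0)$. With $v_i:=\partial_{u_i}\hat\phi(t_0,0,0)$, the Jacobian has the block shape
$$D\Psi(t_0,0,0,0)=\begin{pmatrix}1 & 0 & 0 & 0 \\ \partial_t\hat\phi & v_1 & v_2 & \nsigma(t_0,x_0)\end{pmatrix},$$
so that $\det D\Psi(t_0,0,0,0)=\det(v_1\,|\,v_2\,|\,\nsigma(t_0,x_0))\neq 0$, since $\{v_1,v_2\}$ spans $\tangentspace{\Sigma(t_0)}(x_0)$ and $\nsigma(t_0,x_0)$ is a non-zero normal to that plane. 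The Inverse Function Theorem then yields $\delta',\varepsilon',\varepsilon>0$ such that $\Psi$ restricts to a $\mathcal{C}^1$-diffeomorphism from $I_{\delta'}(t_0)\times B^2_{\varepsilon'}(0)\times(-\varepsilon,\varepsilon)$ onto an open neighborhood $\mathcal{N}^\varepsilon\subset\RR^4$ of $(t_0,x_0)$. Because $\phi$ is itself a $\mathcal{C}^1$-diffeomorphism onto a piece of $\move$, shrinking $U$ if necessary turns the identity $\tube = \Psi\circ(\phi^{-1}\times\mathrm{id})$ into the desired conclusion that $\tube$ is a $\mathcal{C}^1$-diffeomorphism from $(\move\cap U)\times(-\varepsilon,\varepsilon)$ onto $\mathcal{N}^\varepsilon$.

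For the form of $\tube^{-1}$, the key observation is that $\tube$ preserves the time coordinate by construction, so its inverse must have the shape $\tube^{-1}(t,x)=(\pi(t,x),d(t,x))$ with $\pi(t,x)\in\move$ of the form $(t,\pi_x(t,x))$ and $d(t,x)\in(-\varepsilon,\varepsilon)$, both $\mathcal{C}^1$ on $\mathcal{N}^\varepsilon$ as components of the $\mathcal{C}^1$-inverse. The geometric meaning $d(t,x)=\inproduct{x-\pi_x(t,x)}{\nsigma(\pi(t,x))}$ then follows from the defining relation $x=\pi_x(t,x)+d(t,x)\,\nsigma(\pi(t,x))$. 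The main technical nuisance is the bookkeeping required to shrink the open neighborhood delivered by the Inverse Function Theorem into a product of the form $(\move\cap U)\times(-\varepsilon,\varepsilon)$; once the non-vanishing of the Jacobian at the base point is established, this and the remaining steps are routine.
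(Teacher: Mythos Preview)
Your proposal is correct and follows essentially the same route as the paper: pass to local coordinates via the separated parametrization of Lemma~\ref{lemma:local_parametrization_old_appendix}, form the map $\Psi$ (called $X^0$ in the paper), check that its Jacobian at the base point is invertible, apply the Inverse Function Theorem, and transfer the result back to $X$ via the diffeomorphism $\phi$. The only cosmetic difference is that the paper reduces to the fixed-time block $DX^0_t$ and cites the standard $\mathcal{C}^2$-hypersurface tubular-neighborhood argument (with the quantitative bound $\varepsilon\,\norm{\nablasigma\nsigma}<1$), whereas you compute the full $4\times 4$ determinant directly; your version is arguably cleaner here.
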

\begin{proof}
According to Lemma~\ref{lemma:local_parametrization_old_appendix}, we can choose a local $\mathcal{C}^1$-parametrization $\phi$ of $\move$ of the form (with $\delta, \varepsilon > 0, \, U_0 \subset \RR^4$ open)
\begin{align*} 
\phi: \underbrace{(t_0-\delta, t_0 + \delta)}_{=:I_\delta(t_0)} \times B^2_\varepsilon(0) \rightarrow \move \cap U_0,
\end{align*}
where $\phi$ has the following form
\begin{align*}
\phi(t,u) = (t,\hat{\phi}(t,u)), \quad \phi^{-1}(t,x) = (t,u(t,x)). 
\end{align*}
Then $\tube$ can be expressed as
\begin{align*}
\tube(t,x,h) = \tube^0(\phi^{-1}(t,x),h)
\end{align*}
with
\begin{align*}
\tube^0(t,u,h) := (t,\hat{\phi}(t,u) + h \nsigma(t,\hat{\phi}(t,u))). 
\end{align*}
The function
\begin{align*}
\tube^0: I_\delta(t_0) \times B^2_\varepsilon(0) \times \RR \rightarrow \RR^4 
\end{align*}
is continuously differentiable (since $\hat{\phi} \in \mathcal{C}^1(I_\delta(t_0) \times B^2_\varepsilon(0))$ and $\nsigma \in \mathcal{C}^1(\move)$) and the Jacobian of $X^0$ has the form
\begin{align*}
(D\, \tube^0)(t,u,h) = 
\left(\begin{matrix}
1 & 0 & 0 & 0\\%
\ast & \multicolumn{3}{c}{}\\%
\ast & \multicolumn{3}{c}{D\, \tube^0_t(u,h)}\\%
\ast & \multicolumn{3}{c}{}\\%
\end{matrix}\right),
\end{align*}
where $\tube^0_t$ corresponds to $\tube^0$ at fixed $t$, i.e.
\[ \tube^0_t(u,h) := \hat{\phi}(t,u) + h \nsigma(t,\hat{\phi}(t,u)). \]
Obviously, $D\tube^0$ is invertible at $(t,u,h)$ if and only if $D\tube^0_t$ is invertible at $(u,h)$. The invertibility of $D\tube^0_{t_0}$ at the point $(0,0)$ is a well-known result from the theory of $\mathcal{C}^2$-hypersurfaces. In particular, one can show by the Banach contraction principle that $D\tube^0_t$ is invertible on $B_{\frac{\varepsilon}{2}}(0) \times (-\varepsilon(t),\varepsilon(t))$ if (see \cite{Pruss.2016}, chapter 2.3 for details)
\begin{align}
\label{eqn:invertibility_condition}
\varepsilon(t) \norm{\nablasigma \nsigma(t,\hat{\phi}(t,\cdot)}_{\mathcal{C}\left(\overline{B_{\frac{\varepsilon}{2}}(0)}\right)} < 1. 
\end{align}
Since $\nsigma \in \mathcal{C}^1(\move)$, we can choose for every compact subset $I \subset I_\delta(t_0)$ an $\hat{\varepsilon} > 0$ such that \eqref{eqn:invertibility_condition} holds for all $t \in I$ with $\varepsilon(t):=\hat{\varepsilon}$. In particular, $D\tube^0$ is invertible at the point $(t_0,0,0)$. Now it follows from the Implicit Function Theorem that there are open neighborhoods $V$ of $(t_0,0,0)$ and $U\subseteq U_0$ of $(t_0,x_0) \in \RR^4$ such that $\tube^0: V \rightarrow U$ is a bijection and both $\tube^0$ and $(\tube^0)^{-1}$ are $\mathcal{C}^1$. Since the parametrization $\phi: I_\delta(t_0) \times B^2_\varepsilon(0) \rightarrow \move \cap U_0$ is a diffeomorphism between manifolds, the claim for $\tube$ follows from the properties of $\tube^0$.
\end{proof}
%%%%%%%%%%%%%%%%%%%%%%%%%%%%%%%%%%%%%%%%%%%%%%%%%%%%%%%%%%

%\clearpage
%\bibliography{src/literature}
\bibliography{ms.bbl}

\end{document}